\documentclass[11pt,letterpaper]{article}

\pdfoutput=1

% --------------------------------------------------------------------------- %
% Packages:
% --------------------------------------------------------------------------- %

\usepackage{amsmath,amsthm,amssymb} %
\usepackage[english]{babel} %
\usepackage[style=ieee,sorting=nty,url=false,isbn=false]{biblatex} %
\usepackage{caption} %
\usepackage{color} %
\usepackage{csquotes} %
\usepackage{dsfont} %
\usepackage{fourier} % charter, fourier, mathpazo, times
\usepackage{framed} %
\usepackage[hmargin=1.25in,vmargin=1in]{geometry} %
\usepackage{graphicx} %
\usepackage[breaklinks]{hyperref} %
\usepackage{soul} %
\usepackage{subcaption} %
\usepackage{suffix} % for *-version commands
\usepackage{xspace} %

\hypersetup{colorlinks=true, linkcolor=blue, citecolor=magenta}

\renewenvironment{framed}[1][0.9\linewidth]{
  \MakeFramed{\setlength{\hsize}{#1}\FrameRestore}}{\endMakeFramed}

\setul{1ex}{.5pt}

\DeclareGraphicsRule{.eps}{mps}{*}{}

% \overfullrule=2cm

% --------------------------------------------------------------------------- %
% Macros:
% --------------------------------------------------------------------------- %

\newtheorem{theorem}{Theorem} %
\newtheorem{lemma}[theorem]{Lemma} %
\newtheorem{corollary}[theorem]{Corollary} %
\newtheorem{definition}[theorem]{Definition} %

\newcommand{\ket}[1]{\ensuremath{|#1\rangle}} %
\newcommand{\bra}[1]{\ensuremath{\langle#1|}} %
\newcommand{\bigket}[1]{\ensuremath{\left|#1\right\rangle}} %
\newcommand{\bigbra}[1]{\ensuremath{\left\langle#1\right|}} %
\newcommand{\op}[2]{\ensuremath{\left|#1\right\rangle
    \left\langle#2\right|}} %
\newcommand{\ip}[2]{\ensuremath{\left\langle#1,#2\right\rangle}} %
\newcommand{\norm}[1]{\ensuremath{\left\lVert #1 \right\rVert}} %
\newcommand{\abs}[1]{\ensuremath{\left\lvert #1 \right\rvert}} %
\newcommand{\complex}{\mathbb{C}} %
\newcommand{\real}{\mathbb{R}} %
\newcommand{\class}[1]{\ensuremath{\text{\bf #1}}} % Use \bf not \textbf
 % Use \tt not \texttt

\DeclareMathOperator{\CNOT}{CNOT} %
\DeclareMathOperator{\SWAP}{SWAP} %

\DeclareMathOperator{\Density}{D} %
\DeclareMathOperator{\Herm}{Herm} %
\DeclareMathOperator{\Lin}{L} %
\DeclareMathOperator{\Pos}{Pos} %

\DeclareMathOperator{\poly}{poly} %
\DeclareMathOperator{\tr}{tr} %
 %

% \DeclareMathAlphabet{\pazocal}{OMS}{zplm}{m}{n}

\newcommand{\E}{\mathop{\mathbb{E}}\displaylimits} % Expectation
\renewcommand{\Re}{\operatorname{Re}} %

\newcommand{\vq}{{\vec{q}}} %
\newcommand{\dtr}{\ensuremath{D_{\tr}}} %

\def\B{\mathcal{B}} %
\def\H{\mathcal{H}} %
\def\P{\mathcal{P}} %
\def\S{\mathcal{S}} %
\def\T{\mathcal{T}} %
\def\Stab{\mathfrak{S}} %

\addbibresource{QMA-XOR.bib}

% --------------------------------------------------------------------------- %
% Main document:
% --------------------------------------------------------------------------- %

\begin{document}

%% End-Of-Header

\title{\bf\Large Classical Verification of Quantum Proofs}

\author{Zhengfeng Ji\\[.2em]
  \it\normalsize Institute for Quantum Computing and School of Computer Science,\\
  \it\normalsize University of Waterloo, Waterloo, Ontario, Canada\\[.1em]
  \it\normalsize State Key Laboratory of Computer Science, Institute of Software,\\
  \it\normalsize Chinese Academy of Sciences, Beijing, China}

\date{}

\maketitle

\begin{abstract}
  We present a classical interactive protocol that verifies the
  validity of a quantum witness state for the local Hamiltonian
  problem. It follows from this protocol that approximating the
  non-local value of a multi-player one-round game to inverse
  polynomial precision is \class{QMA}-hard. Our work makes an
  interesting connection between the theory of
  \class{QMA}-completeness and Hamiltonian complexity on one hand and
  the study of non-local games and Bell inequalities on the other.
\end{abstract}

\section{Introduction}

The concept of efficient proof verification is of fundamental
importance to the theory of computation. The complexity class
\class{NP} abstracts the notion of checking written proof strings by a
polynomial-time deterministic verifier. It is hard to overstate the
importance of \class{NP} and \class{NP}-completeness
theory~\cite{Coo71,Lev73,Kar72} to the development of theoretical
computer science in the past several decades.

Interactive models of proof verification was proposed and studied by
Babai~\cite{Bab85} and Goldwasser, Micali, and Rackoff~\cite{GMR85}.
It is generalized to the multiple-prover setting by Ben-Or,
Goldwasser, Kilian and Wigderson~\cite{BGKW88}. The efforts to
understand these interactive proof systems opened the door to a series
of breakthroughs in computational complexity
theory~(e.g.,~\cite{LFKN92,Sha92,BFL90,AS98,ALM+98}).

Of particular interest to this work is the multi-player one-round
game, a game theoretical model for interactive proof verification. In
this model, the verifier samples a tuple of questions and sends them
to the players and receives answers back from the players in one
round. The players may agree on a particular strategy, but are
otherwise not allowed to communicate with each other during the game.
The verifier decides whether to accept or to reject based on the
questions and answers. A method called oracularization~\cite{FRS94}
that enforces the functional behavior of the players provides a
multi-player-game characterization of \class{NP}, in which the
questions are bit strings of length logarithmic in input length and
answers are strings of constant number of bits. We emphasize that this
is an exponential save in the number of bits communicated from the
provers to the verifier compared with the standard way of proof
communication in which the prover send the full proof string to the
verifier. This is one of the reasons behind the unexpected power of
multi-prover interactive proof systems and the possibility of
achieving probabilistically checkable proofs~\cite{AS98,ALM+98,Din07}.

The study of quantum variants of proof verification systems provides
both a fruitful way of understanding proof verification in the context
of quantum computing and an insightful angle from which to view unique
quantum phenomena such as entanglement and non-locality.

A quantum analog of efficient verification of written proofs was
proposed by Kitaev~\cite{Kit99,KSV02,AN02}. In this generalization, a
quantum witness state plays the role of the written proof and a
polynomial-time quantum computer checks whether the witness state is
valid for the input. Kitaev introduces the class \class{QMA} of
problems that admit efficient verifiable quantum proofs. He also
establishes the quantum analog of the Cook-Levin theorem by showing
that the local Hamiltonian problem, the natural quantum version of the
constraint satisfaction problems, is complete for \class{QMA}. The
study of local Hamiltonian problems, the structure of entanglement in
the ground states of local Hamiltonians, and the quantum PCP
conjecture~(see e.g.,~\cite{AAV13}) form a new research direction
called Hamiltonian complexity~\cite{Osb12,GHLS14}.

Quantum interactive proof systems, a model in which a quantum
polynomial-time verifier exchanges quantum messages with an
all-powerful prover, were first studied by Kitaev and
Watrous~\cite{KW00,Wat99}. It is now known that the class of languages
expressible by such proof systems, \class{QIP}, is the same as its
classical counterpart, \class{IP}~\cite{JJUW11}.

In quantum multi-prover interactive proof systems, shared entanglement
among the provers plays an essential role. It is known that, without
shared entanglement, or with limited amount of entanglement, the
collection of languages that have quantum multi-prover interactive
proof systems, \class{QMIP}, equals to the classical counterpart,
\class{MIP}~\cite{KM03} (and, hence, also equals to
\class{NEXP}~\cite{BFL90}). The classes $\class{QMIP}^*$ and
$\class{MIP}^*$, corresponding to the collections of languages that
have entangled multi-prover interactive proofs with quantum and
classical messages respectively, are also known to be the
same~\cite{RUV13}. There is a recent evidence showing that entangled
provers may be more powerful than classical provers~\cite{FV15}. But a
full understanding of these two complexity classes is still out of
reach.

In this work, we focus on the multi-prover one-round games with
entangled provers. For a multi-player one-round game, its classical
value is the maximum acceptance probability that classical strategies
can achieve and its non-local value is the maximum acceptance
probability that entangled players can achieve. In general, the
non-local value can be strictly larger than the classical value of a
multi-player game. It is pointed out in~\cite{CHTW04} that this may
cause problems in multi-prover interactive proof systems as provers
with shared entanglement may break the soundness condition of a
classically sound protocol. One striking example is given by the
so-called magic square game~\cite{Mer90,Per90}, which has non-local
value one even though it corresponds to a system of constraints with
no classical solution~\cite{CHTW04}. Strong evidences are also given
in that paper that the entanglement between the players may indeed
weaken the power of two-player XOR games.

Several methods have been proposed to control the cheating ability of
entangled provers and recover soundness in certain cases. It is proved
that approximating the non-local value of a multi-player game to
inverse-polynomial precision is \class{NP}-hard~\cite{KKM+08,IKM09}.
Several natural problems arise from the study of non-locality,
including the binary constraint system game~\cite{CM12}, the quantum
coloring game~\cite{CMN+07} and the game corresponding to the
Kochen-Specker sets~\cite{KS67}, are shown to be \class{NP}-hard
in~\cite{Ji13}. By proving that the multi-linearity test~\cite{BFL90}
is sound against entangled provers, Ito and Vidick proved the
containment of \class{NEXP} in $\class{MIP}^*$~\cite{IV12}. This was
later improved to the result that three-player XOR games are
\class{NP}-hard to approximate even to constant
precision~\cite{Vid13}.

However, after many years of research, there is still no upper bound
on the expressive power of multi-player interactive proofs with
entangled provers. This leaves open the possibility that even a classical
verifier may be able to design a protocol in which only provers with
entanglement can follow and convince the verifier of statements that
classical provers cannot prove.

In this paper, we present a multi-player one-round game for the local
Hamiltonian problem in which the verifier is classical and samples
questions of logarithmic size and expects answers of constant size. As
a corollary, the problem of approximating the non-local value of a
multi-prover one-round game is \class{QMA}-hard, improving
the \class{NP}-hardness results in previous works. It makes an
interesting connection between the theory of \class{QMA}-completeness
and Hamiltonian complexity on one hand, and the study of non-local
games and Bell-inequalities on the other. This also provides an
example that a classical verifier can design protocols making
essential use of the shared entanglement between the provers and
expect them to do things that is impossible for provers without shared
entanglement unless $\class{NP} = \class{QMA}$.

Our work can be thought of as a de-quantization of the
Fitzsimons-Vidick protocol~\cite{FV15} of both the verifier and the
messages. The verifier communicates with multiple entangled provers
and delegates the quantum verification procedure to the provers. In
this sense, this work is also relevant to the developments in the
delegation of quantum computation and blind quantum
computing~\cite{BFK09,ABE10,RUV13,FK12}. The previous works usually
use a cluster state~\cite{RBB03} or EPR states and teleportation to
encode quantum computation, while our approach has the additional
freedom to encode quantum data directly among the provers. This allows
us to go from the delegation of quantum computation to the delegation
of quantum proof verification.

The main result of this paper is stated in the following theorem.

\begin{theorem}
  \label{thm:main}
  For integer $r\ge 4$, any promise problem
  $L=(L_{\text{yes}}, L_{\text{no}})$ in \class{QMA}, and any instance
  $x$ of the problem, there exists an $r$-player one-round game and
  real numbers $c,s\in [0,1]$, $c-s \ge 1/\poly(\abs{x})$ such that
  \begin{enumerate}
  \item The questions are classical bit strings of length
    $O(\log(\abs{x}))$.
  \item The answers are classical bit strings of length $O(1)$.
  \item If $x\in L_{\text{yes}}$, then the non-local value of the game
    is at least $c$.
  \item If $x\in L_{\text{no}}$, then the non-local value of the game
    is at most $s$.
  \end{enumerate}
\end{theorem}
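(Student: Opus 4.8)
The plan is to exploit the \class{QMA}-completeness of the local Hamiltonian problem to reduce the verification of an arbitrary \class{QMA} witness to the task of estimating, through interaction with entangled provers, the ground energy of a local Hamiltonian $H = \sum_j H_j$ on $n = \poly(\abs{x})$ qubits. First I would invoke Kitaev's theorem to obtain such an $H$ together with thresholds $a < b$ satisfying $b - a \ge 1/\poly(\abs{x})$, such that $x \in L_{\text{yes}}$ implies the ground energy of $H$ is at most $a$, while $x \in L_{\text{no}}$ implies it is at least $b$. By standard gadget constructions I would further arrange each local term $H_j$ to be a real linear combination of a constant number of tensor products of Pauli $X$ and $Z$ operators, each acting nontrivially on $O(1)$ qubits, so that estimating $\ip{\psi}{H\psi}$ reduces to sampling a low-weight Pauli monomial $P$ (with probability proportional to its coefficient) and estimating $\ip{\psi}{P\psi}$ on a genuine $n$-qubit witness $\ket{\psi}$.

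The heart of the construction is to let the provers jointly hold the witness in encoded form and to have the verifier extract these Pauli expectations through classical questions and answers alone. I would encode each logical qubit into a stabilizer quantum error-correcting code and distribute the physical qubits among the $r$ provers so that every logical $X$ and $Z$ factors into a product of physical Paulis, with each prover responsible for only a constant number of physical qubits on any single query. The verifier's question then names a subtest together with $O(\log \abs{x})$ bits specifying which physical Pauli each prover should measure, and the answers are the corresponding $O(1)$ measurement bits; the verifier reconstructs each required $\ip{\psi}{P\psi}$ from the \emph{parity} of the returned bits, averaged over rounds. Completeness follows because honest provers, sharing the encoding of the ground state of $H$, pass all algebraic consistency checks perfectly and return an energy estimate close to the ground energy, yielding non-local value at least $c$.

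The main obstacle, and the technical core, is soundness: a dishonest collection of provers need share no encoded state and perform no genuine Pauli measurement. To control this I would interleave the energy-estimation rounds with rigidity tests that force the provers' observables to obey the defining relations of the Pauli group, namely that the purported $X$- and $Z$-type observables square to the identity and commute or anticommute as dictated by the overlap of their supports. Using self-testing results in the spirit of the CHSH rigidity theorem together with magic-square-type tests certifying (anti)commutation, I would argue that any strategy succeeding in all consistency checks with probability $1 - \varepsilon$ is, up to a local isometry and error $O(\poly(\varepsilon))$, the honest strategy acting on some genuine $n$-qubit state $\ket{\psi}$ held in encoded form. The delicate part is that this approximate characterization must be robust enough that the energy the verifier estimates lies within $o(b - a)$ of the true value $\ip{\psi}{H\psi} \ge b$; this demands self-testing statements with inverse-polynomial error dependence and careful propagation of the isometry error across the $\poly(\abs{x})$ Hamiltonian terms, all while keeping each answer of constant length.

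Finally I would assemble the single $r$-player one-round game by having the verifier choose uniformly among the rigidity subtests and the energy-estimation subtest, folding their acceptance criteria into one predicate and rejecting the energy round with probability proportional to the measured energy. Completeness then yields non-local value at least $c$ with $c$ determined by $a$, and soundness yields value at most $s$ with $c - s \ge 1/\poly(\abs{x})$ after absorbing the rigidity error into the promise gap $b - a$. Verifying the stated resource bounds reduces to counting: there are $\poly(\abs{x})$ choices of subtest and Pauli index, so questions have length $O(\log(\abs{x}))$, while each prover replies with a constant number of measurement bits, giving answers of length $O(1)$. The role of taking $r \ge 4$ is to supply enough provers to hold the distributed encoding with the redundancy the consistency checks require and to separate the state-holding from the measurement-testing roles, which is precisely what makes the rigidity argument applicable.
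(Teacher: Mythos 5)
Your proposal follows the paper's route in all structural respects: reduce to the $XZ$-form local Hamiltonian problem, distribute a stabilizer encoding of the witness among the $r$ provers, certify the provers' $X$/$Z$ observables by a CHSH-style rigidity argument, and estimate the energy by sampling Pauli monomials of the terms $H_j$ and reading off parities of logical-operator measurements. (One internal difference: the paper derives the needed anticommutation from a ``special-player'' variant of its stabilizer game and gets commutation of observables on distinct qubits from consistency arguments, rather than from magic-square-type tests; this is a difference of technique inside the rigidity proof, not of architecture.)

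The one step that would fail as you wrote it is choosing \emph{uniformly} among the rigidity subtests and the energy-estimation subtest. The rigidity statement one can actually prove (Theorem~\ref{thm:mqsgame}) says that a strategy passing the encoding check with probability $\omega^*_{\text{S}}-\epsilon$ is within $O(n^{\kappa}\epsilon^{1/\kappa})$ of honest. Because this error is a \emph{fractional} power of $\epsilon$, a cheating strategy need only sacrifice $\epsilon = n^{-\Theta(\kappa^2)}$ in the encoding check before the rigidity bound becomes vacuous and nothing constrains its behavior in the energy check; with constant weight on the energy check, the resulting gain there can be a constant while the loss in the encoding check is inverse-polynomially small, so soundness fails outright against the inverse-polynomial promise gap $b-a$. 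The paper's fix is to run the energy check only with probability $p=C(b-a)^{\kappa-1}/n^{\kappa^2}$ and to optimize the trade-off over $\epsilon$, so that the $(1-p)\epsilon$ penalty dominates the $p\cdot O(n^{\kappa}\epsilon^{1/\kappa})$ gain for every $\epsilon$; this reweighting is also exactly why the final completeness--soundness gap is inverse polynomial rather than constant. Your plan flags the error propagation as ``the delicate part'' but does not identify this mechanism, which is what actually closes the soundness argument.
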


A direct corollary is that approximating the non-local value of a
multi-player game is \class{QMA}-hard.

\begin{corollary}
  Given a multi-player one-round game in which the questions are
  strings of $O(\log n)$ bits and answers are of strings of $O(1)$
  bits, it is \class{QMA}-hard to approximate the non-local value of
  the game to inverse polynomial precision.
\end{corollary}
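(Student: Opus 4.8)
The plan is to de-quantize the Fitzsimons--Vidick verification scheme so that a purely classical verifier can certify the energy of a local Hamiltonian witness that is distributed among the provers. First I would invoke the \class{QMA}-completeness of the local Hamiltonian problem to reduce the instance $x$ to a Hamiltonian $H = \sum_i H_i$ in a convenient normal form in which every term $H_i$ is a tensor product of Pauli $X$ and $Z$ operators, with ground energy at most $a$ when $x\in L_{\text{yes}}$ and at least $b$ when $x\in L_{\text{no}}$, where $b-a \ge 1/\poly(\abs{x})$. The quantum witness then becomes a ground state whose energy the verifier must estimate purely through classical interaction.

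Next I would have the provers hold shares of the witness encoded in a quantum error-correcting code (for concreteness a CSS code), with the shares distributed across the $r\ge 4$ provers so that no single prover holds enough of any codeword block to spoof a logical measurement on its own and so that enough players remain to support cross-checks. A classical question to a prover names a Pauli basis ($X$ or $Z$) in which to measure its shares, and the answer is the constant-size string of reported outcomes. Since each $H_i$ is a Pauli tensor product, the verifier estimates $\tr(H_i\rho)$ by requesting the relevant single-qubit measurements and taking the parity of the reported bits; this is precisely what makes the protocol an XOR-type game with $O(\log\abs{x})$-bit questions and $O(1)$-bit answers.

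To prevent the provers from fabricating outcomes, I would have the verifier choose at random between two tests: an \emph{energy test}, which picks a random term $H_i$, collects the corresponding measurements, and accepts or rejects according to the estimated local energy; and a \emph{consistency (stabilizer) test}, which checks that the reported outcomes respect the stabilizer relations of the code and agree across overlapping questions sent to different provers. Completeness follows by having honest provers measure the true encoded ground state: both tests pass with probability governed by the low ground energy $a$. The parameters $c$ and $s$ are then read off from $a$, $b$, and the mixing weights of the two tests, preserving an inverse-polynomial gap.

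The hard part will be the soundness analysis against arbitrary entangled strategies. Here I would need a rigidity argument: any strategy passing the consistency test with high probability must employ measurement operators that approximately satisfy the Pauli (anti)commutation and stabilizer relations, and hence, up to a local isometry, act on a genuine encoded state $\rho$. Once this structure is pinned down, passing the energy test forces $\tr(H\rho)$ to be small, contradicting the no-case promise. Controlling the propagation of approximation errors through this self-testing step, so that the inverse-polynomial completeness--soundness gap survives the extraction of $\rho$, is the principal technical obstacle of the whole argument.
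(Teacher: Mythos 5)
Your overall architecture matches the paper's: reduce to an $XZ$-form local Hamiltonian (the paper cites~\cite{BL08} for \class{QMA}-completeness of this restricted form), distribute a code-encoded witness among $r\ge 4$ provers, mix an energy test with a stabilizer/consistency test, and prove soundness via a rigidity theorem. But there is a genuine gap in the one step you identify as the crux. The consistency test you describe---``check that the reported outcomes respect the stabilizer relations of the code and agree across overlapping questions''---does \emph{not} have the rigidity property you then assert. For any stabilizer with $XZ$-form generators, the parity constraints imposed by the generators form a satisfiable linear system over $\mathbb{F}_2$ in the players' deterministic answer bits (e.g., for the five-qubit code, the four constraints $x_2z_3z_4x_5=x_1x_3z_4z_5=z_1x_2x_4z_5=z_1z_2x_3x_5=1$ have classical solutions), so purely classical players pass your test with probability $1$ using commuting ``measurements''; nothing forces approximate anti-commutation of the $X$- and $Z$-reflections, and hence nothing pins down an encoded quantum state. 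The paper makes exactly this point with the Werner-state example: prescribed local measurements on a genuinely entangled state can yield statistics with a local hidden variable model, so checking the statistics of the honest measurements alone certifies nothing.

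The missing idea is the CHSH-style $\pi/4$ rotation. The paper's stabilizer game includes a \emph{special-player} sub-game in which one designated player is asked to measure $X'=(X+Z)/\sqrt{2}$ and $Z'=(X-Z)/\sqrt{2}$ while the others measure $X$, $Z$ according to a pair of generators $g_X^{(t)}$, $g_Z^{(t)}$; this turns the check into a generalized XOR game with quantum value $(2+\sqrt{2})/4$ strictly above the classical value $3/4$. Near-optimality then forces, via Jordan's lemma, a pair of approximately anti-commuting reflections for the special player, and a consistency argument (Lemma~\ref{lem:cons}) propagates honesty to the remaining players and to the shared state. Your honest strategy consequently does \emph{not} pass the consistency test with probability $1$ but only with probability $\omega^*_{\text{S}}=(6+\sqrt{2})/8$, which changes how $c$ and $s$ are read off. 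Without this rotated sub-test (or some equivalent self-testing gadget), your soundness argument fails at its first step, so the rigidity claim you defer is not merely a technical obstacle but requires an additional construction absent from your protocol.
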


The same problem for the classical value is obviously in \class{NP}.
This means that the non-local value of multi-player one-round games is
strictly harder to approximate than the classical value unless
$\class{NP} = \class{QMA}$.

Our result has the following consequence for the multi-prover
interactive proofs with entangled provers by scaling up the problem
size. It is a slight improvement of the results obtained
in~\cite{FV15}. Let $\class{QMA}_{\class{EXP}}$ be the collection of
problems that have quantum witnesses of exponentially many qubits
verifiable by a quantum exponential-time machine, and let
$\class{MIP}^*(r,t,c,s)$ be the class of languages that have
$r$-prover, $t$-round interactive proofs with a classical
polynomial-time verifier, entangled provers, and completeness $c$,
soundness $s$.

\begin{corollary}
  \label{cor:MIP*}
  For some choices of completeness and soundness $c,s$ and polynomial
  $p(n)$, with $c-s=\Omega(\exp(-p(n)))$,
  \begin{equation*}
    \class{QMA}_{\class{EXP}} \subseteq \class{MIP}^{*}(4,1,c,s),
  \end{equation*}
  and hence
  \begin{equation*}
    \class{MIP} \subsetneq \class{MIP}^{*}(4,1,c,s)
  \end{equation*}
  unless $\class{NEXP}=\class{QMA}_{\class{EXP}}$.
\end{corollary}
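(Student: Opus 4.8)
The plan is to derive the containment $\class{QMA}_{\class{EXP}} \subseteq \class{MIP}^{*}(4,1,c,s)$ from Theorem~\ref{thm:main} by a padding argument, and then to obtain the strictness of $\class{MIP} \subsetneq \class{MIP}^{*}(4,1,c,s)$ from the characterization $\class{MIP} = \class{NEXP}$. First I would fix a promise problem $L \in \class{QMA}_{\class{EXP}}$. By definition, on an input $x$ of length $n$ there is a quantum verification procedure that uses a witness of $2^{\poly(n)}$ qubits and runs in time $2^{\poly(n)}$. I would introduce the padded problem $\widehat{L}$ whose instances have the form $(x, 1^{N})$ with $N = 2^{\poly(\abs{x})}$; the purpose of the padding is that, measured against the length $\abs{x}+N = 2^{\poly(n)}$ of the padded instance, the very same verification now runs in polynomial time and uses polynomially many witness qubits, so that $\widehat{L} \in \class{QMA}$.

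Next I would apply Theorem~\ref{thm:main} to $\widehat{L}$ with $r=4$. This produces a four-player one-round game whose questions have length $O(\log(\abs{x}+N)) = O(\poly(n))$, whose answers have length $O(1)$, and whose completeness–soundness gap is $1/\poly(\abs{x}+N) = \Omega(\exp(-p(n)))$ for a suitable polynomial $p$. Because the question length is polynomial in $n$, the game is exactly the template of a single-round, four-prover entangled-prover protocol run by a polynomial-time classical verifier, and the non-local value bounds of the game become the completeness $c$ and soundness $s$ of the protocol. This establishes $\class{QMA}_{\class{EXP}} \subseteq \class{MIP}^{*}(4,1,c,s)$.

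For the strictness I would chain the inclusions $\class{MIP} = \class{NEXP} \subseteq \class{QMA}_{\class{EXP}} \subseteq \class{MIP}^{*}(4,1,c,s)$, where the first equality is the theorem of Babai, Fortnow, and Lund~\cite{BFL90} and the middle inclusion is the exponentially scaled-up form of $\class{NP} \subseteq \class{QMA}$. If the inclusion $\class{MIP} \subseteq \class{MIP}^{*}(4,1,c,s)$ were not proper, i.e. if $\class{MIP}^{*}(4,1,c,s) = \class{MIP} = \class{NEXP}$, then the containment just proved would give $\class{QMA}_{\class{EXP}} \subseteq \class{NEXP}$, and together with $\class{NEXP} \subseteq \class{QMA}_{\class{EXP}}$ this forces $\class{NEXP} = \class{QMA}_{\class{EXP}}$, which is precisely the stated exceptional case.

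The main obstacle I expect is the \emph{uniformity} of the scaled-up construction: one must ensure that the verifier of the resulting game can compute its questions and its accept/reject predicate in time polynomial in $n$, working directly from $x$, without ever writing down the exponentially large padded instance or the exponentially large local Hamiltonian underlying it. This requires the reduction behind Theorem~\ref{thm:main} to be explicit and local enough that each polynomial-size question and each constant-size decision depends on only a succinctly describable piece of the instance. Some additional care is needed to track how the $\exp(-\poly(n))$ gap coexists with the exponential-time resources so that the object remains a legitimate $\class{MIP}^{*}$ protocol.
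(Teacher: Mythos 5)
Your proposal is correct and follows exactly the route the paper intends: the paper gives no detailed proof of this corollary beyond the remark that it follows ``by scaling up the problem size,'' and your padding argument, the application of Theorem~\ref{thm:main} with $r=4$ to the padded problem, and the chain $\class{MIP}=\class{NEXP}\subseteq\class{QMA}_{\class{EXP}}\subseteq\class{MIP}^*(4,1,c,s)$ for strictness are precisely that scaling-up argument made explicit. Your closing observation about uniformity (that the verifier must sample questions and evaluate the predicate from a succinct description of the exponential-size Hamiltonian) is a genuine point the paper silently elides, and flagging it is appropriate.
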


\subsection{Techniques and Proof Overview}

The main technical difficulty we face is how a classical verifier can
check the quantum witness state distributed among a number of players.
In a one-round game, the only thing that the classical verifier can
collect is some information about the conditional distributions
$\Pr(a|q)$ for all possible questions $q$ and answers $a$. Consider
the situation of remote state certification, in which two players $A$
and $B$ share a quantum state $\rho_{AB}$ and want to convince the
verifier of this fact. If the state $\rho_{AB}$ is an EPR state
$(\ket{00}+\ket{11}) / \sqrt{2}$, this is possible in some sense by
the verifier playing the CHSH game~\cite{CHSH69} with the players. The
rigidity of the CHSH game~\cite{MYS12,RUV13} implies that if the
players win the CHSH game with almost optimal probability, then the
state is close to the EPR state up to local isometries. If the state
is mixed, however, the situation becomes problematic in a very strong
sense. Suppose the two players want to prove that $\rho_{AB}$ is the
Werner state~\cite{Wer89}
\begin{equation*}
  \rho_{\text{W}}(\phi) = \frac{(d-\phi)I + (d\phi-1) \SWAP_d}{d^3-d},
\end{equation*}
where $\SWAP_d$ is unitary operator satisfying
$\SWAP_d \ket{i,j} = \ket{j,i}$ for all $0\le i,j \le d-1$. It is
shown by Werner~\cite{Wer89} that, for some choices of $\phi$, the
state is an entangled state, but any prescribed local measurement
setting on $A$ and $B$ performed on the state $\rho_{\text{W}}$
produces distributions $\Pr(a|q)$ that have local hidden variable
models. That is, the distribution can be exactly reproduced by two
classical players with shared randomness and no shared entangled
states whatsoever!

It is natural to consider methods from the study of device-independent
quantum information processing or self-testing quantum
devices~(e.g.,~\cite{MY98,DMMS00,MS12,McK14}). For example, such ideas
have successful applications in achieving the classical command of
quantum system as shown in~\cite{RUV13}. A key ingredient behind such
device independent setting is the rigidity of non-local games such as
the CHSH game. By the definition of rigidity, however, the players
will essentially share a specific entangled state, such as the EPR
state or the GHZ state $(\ket{000}+\ket{111}) / \sqrt{2}$, and perform
prescribed measurements on the state. This seems contradictory to what
we need here---the ability to store the quantum witness state
distributed among the players. The quantum witness state is usually an
entangled state with complex structures that are far way from what EPR
or GHZ states can represent.

Our solution that resolves the above mentioned difficulties is to
encode the quantum witness state by certain stabilizer code and play a
new game, which we call the stabilizer game, defined by the
stabilizer. We prove a rigidity theorem for the stabilizer game which
roughly states that the only way for the players to win the game with
high probability is to share a correctly encoded state of the
stabilizer code, perform measurements according to the measurement
specifications given in the questions, and respond with the
measurement outcome. That is, the stabilizer game we construct
provides a device-independent verification of the encoding of the
corresponding stabilizer code. Having both the rigidity property and
the ability to encode quantum data, the stabilizer game serves as an
essential tool for our work and may find other applications in
device-independent quantum information processing. For example, in the
remote state certification problem discussed above, although it is
impossible for the players to certify the Werner state
$\rho_{\text{W}}$, the stabilizer games provide a way to certify an
\emph{encoded} Werner state using a stabilizer code.

In~\cite{FV15}, quantum error correcting codes are also employed in an
essential way. The intuition is that, by the quantum error correction
property, there will be only one qubit of the player that can pass the
encoding check of the error correcting code once the rest of the
players respond with the correct qubit. In our case, the stabilizer
codes have the same effect but also have an important additional use.
Namely, they are responsible for enforcing the players to measure
their systems according to the measurement specifications they
receive. Instead of using the decoding circuit of the code, we measure
the logical $X$, $Z$ operators on the encoded state. As a result, our
proof directly applies to quantum error detecting codes as well.

The proof of the rigidity for the stabilizer games consists of two
steps. In the first step, an idea motivated by the CHSH game is
employed and we introduce the special-player stabilizer game. Using
similar techniques for proving the rigidity of the CHSH game, we
establish a pair of anti-commuting reflections in the strategy for the
special player. Then, a quantity called consistency is used to promote
this partial rigidity of the special-player stabilizer game to
the full rigidity property of the stabilizer game itself. Consistency
and another state dependent distance measure of two quantum
measurements appeared before in the analysis of non-local games and
are intensively used in many proofs of this paper.

We then consider the multi-qubit stabilizer game, the non-local game
version of the stabilizer encoding check in~\cite{FV15}. We prove a
partial rigidity theorem for the multi-qubit stabilizer game. This is
achieved by establishing the approximate commutativity of reflections
corresponding to measurement specifications on different qubits, the
proof of which uses the consistency property again in an essential
way. Finally, the multi-qubit stabilizer game is performed in the game
for the local Hamiltonian problem with high probability to regularize
the behavior of the players.

Two additional difficulties arise as the verifier only has limited
access to the quantum witness state. First, since the verifier can
only enforce Pauli measurements on the witness state, the measurement
of energy of the local Hamiltonian uses Pauli measurements only. This
will be less efficient than having the quantum data and measuring
directly the POVM corresponding to each term of the Hamiltonian as is
done in~\cite{FV15}, but there will only be a constant overhead
compared to the direct measurement approach.

Second, for convenience, we work with stabilizer codes whose
generators are the tensor products of Pauli $I$, $X$, $Z$ operators.
Alternatively, one may design an extended version of the stabilizer
game using ideas from the extended CHSH game proposed in~\cite{RUV13}
to include Pauli $Y$ operators in the problem. The current form of the
stabilizer game is, however, much easier to analyze. As a result, by
the rigidity property, the verifier only have access to Pauli $X$, $Z$
measurements on the quantum witness state. This requires that the
Hamiltonian in the local Hamiltonian problem has terms in the real
linear span of tensor products of $I$, $X$ and $Z$ operators.
Fortunately, as observed in~\cite{BL08}, such restricted version of
the local Hamiltonian problem remains \class{QMA}-complete.

\subsection{Open Problems}

We briefly mention several related open problems. One major weakness
of our result for the multi-prover interactive proofs with entangled
provers in Corollary~\ref{cor:MIP*} is the exponential small gap
between the completeness and soundness parameters. It is an intriguing
and challenging problem to improve this and show that
$\class{QMA}_{\class{EXP}}$ is contained in $\class{MIP}^*$.

Second, as we don't have any upper bound on the power of non-local
games, it remains possible to prove even stronger hardness results than
the \class{QMA}-hardness shown in this paper. A possible candidate
problem is to show \class{QMA}(2)-hardness of non-local games.

Finally, it is an interesting question to ask whether one can improve
the completeness parameter of our protocol. For example, are there
non-local games with perfect completeness for $\class{QMA}_1$? It
seems that, to achieve this, we need to redesign the stabilizer game
and simplify the form of the local Hamiltonians to improve the
efficiency of the energy measurement.

\subsection{Organizations}

The rest of the paper is organized as follows. In Sec.~\ref{sec:prel},
we introduce the notions and related concepts used in this paper. The
stabilizer games are introduced and analyzed in Sec.~\ref{sec:stab}.
The non-local game for the local Hamiltonian problem is given and
analyzed in Sec.~\ref{sec:lh}.

\section{Preliminaries}

\label{sec:prel}

\subsection{Notions}

We use calligraphic $\H$ to denote Hilbert spaces, and $\Density(\H)$,
$\Lin(\H)$, $\Herm(\H)$, $\Pos(\H)$ to denote the set of density
operators, bounded linear operators, Hermitian operators and positive
semidefinite operators on $\H$. Two-dimensional Hilbert spaces
$\complex^2$ corresponding to a qubit is denoted by $\B$. For two
Hermitian operators $M,N\in \Herm(\H)$, we write $M \le N$ to mean
$N-M\in \Pos(\H)$. For matrix $M$, $\abs{M}$ is defined to be
$\sqrt{M^\dagger M}$. For a string $x$, $\abs{x}$ denotes its length.
For a positive integer $k$, $[k]$ is the abbreviation of the set
$\{1,2,\ldots, k\}$. For two complex numbers $x$, $y$, we use
$x \approx_\epsilon y$ as a shorthand notion for
$\abs{x-y}\le O(\epsilon)$. Two-qubit unitary gates $\CNOT$ and
$\SWAP$ are defined as
\begin{equation*}
  \CNOT \ket{i,j} = \ket{i}\ket{i\oplus j},\quad
  \SWAP \ket{i,j} = \ket{j,i},
\end{equation*}
for $i,j\in\{0,1\}$.

Let $\rho\in \Density(\H)$ be a quantum state on $\H$. For operators
$M,N\in \Lin(\H)$, introduce the following notions
\begin{subequations}
  \begin{align}
    \tr_\rho (M) & = \tr(M\rho),\\
    \ip{M}{N}_\rho & = \tr_\rho (M^\dagger N),\\
    \norm{M}_\rho & = \sqrt{\ip{M}{M}_\rho}.
  \end{align}
\end{subequations}
It is straightforward to verify that $\ip{\cdot}{\cdot}_\rho$ is a
semi-inner-product, $\norm{\cdot}_\rho$ is a seminorm and they become
an inner product and a norm, respectively, when $\rho$ is a full-rank
state. In particular, the Cauchy-Schwarz inequality holds
\begin{equation*}
  \abs{\ip{M}{N}_\rho} \le \norm{M}_\rho \norm{N}_\rho,
\end{equation*}
or more explicitly,
\begin{equation*}
  \abs{\tr_\rho (M^\dagger N)} \le \left[ \tr_\rho(M^\dagger M) \,
    \tr_\rho(N^\dagger N) \right]^{1/2}.
\end{equation*}

For state $\rho\in \Density(\H_A \otimes \H_B)$, and an operator
$M\in \Lin(\H_A)$, we may also write $\tr_\rho(M)$ even though the
state $\rho$ and the operator $M$ do not act on the same space. In
this case, it is understood that $\tr_\rho(M) = \tr_{\rho_A} (M)$
where $\rho_A$ is the reduced state of $\rho$ on system $A$. This is
one reason that makes $\tr_\rho(\cdot)$ easy to use as it is not
necessary to specify the correct reduced state explicitly all the
time.

A quantum measurement is described by a collection $M=\{M^a\}$ of
measurement operators. These are operators acting on the state space
$\H$ of the system being measured. In this paper, we always use
superscript to index the measurement outcomes and subscript to index
different quantum measurements. The measurement operators satisfy the
completeness equation
\begin{equation*}
  \sum_a (M^a)^\dagger M^a = I.
\end{equation*}
If the state of the system before the measurement is
$\rho\in \Density(\H)$, the probability that outcome $a$ occurs is
given by
\begin{equation*}
  \tr \bigl( (M^a)^\dagger M^a\rho \bigr) = \norm{M^a}_\rho^2,
\end{equation*}
and the post-measurement state of the system is
\begin{equation*}
  \frac{M^a \rho (M^a)^\dagger}{\norm{M^a}_\rho^2}.
\end{equation*}

A positive operator valued measure (POVM) is a collection of positive
semidefinite operators $\{M^a\}$, such that
\begin{equation*}
  \sum_a M^a = I.
\end{equation*}
POVMs give descriptions of quantum measurements when the
post-measurement state is not important in the analysis. The
probability that the measurement has outcome $a$ on state $\rho$ is
given by $\tr_\rho (M^a)$.

A projective quantum measurement is described by $\{M^a\}$ where each
operator $M^a$ is a projection and $\sum_a M^a = I$. A reflection $R$
is a Hermitian matrix squared to $I$. It naturally describes a
two-outcome projective quantum measurement $\{R^a\}$ where
\begin{equation*}
  R^a = \frac{I+(-1)^a R}{2},
\end{equation*}
for $a=0,1$. The Pauli operators
\begin{equation*}
  I =
  \begin{bmatrix}
    1 & 0\\
    0 & 1
  \end{bmatrix},\quad X =
  \begin{bmatrix}
    0 & 1\\
    1 & 0
  \end{bmatrix},\quad Y =
  \begin{bmatrix}
    0 & -i\\
    i & \phantom{-}0
  \end{bmatrix},\quad Z =
  \begin{bmatrix}
    1 & \phantom{-}0\\
    0 & -1
  \end{bmatrix},
\end{equation*}
are examples of $2$-by-$2$ reflections. We also use
$\sigma_0, \sigma_1, \sigma_2, \sigma_3$ to represent the four Pauli
operators. A multi-qubit Pauli operator is of $XZ$-form if each tensor
factor is one of $I$, $X$ and $Z$. A Hermitian matrix $H$ is of
$XZ$-form if it is in the real linear span of $XZ$-from Pauli
operators.

Define $X'$ and $Z'$ as
\begin{equation}
  \label{eq:XZ'}
  X' = \frac{X+Z}{\sqrt{2}},\qquad Z' = \frac{X-Z}{\sqrt{2}},
\end{equation}
and $W$ as
\begin{equation}
  \label{eq:W}
  W = \cos (\pi/8) X + \sin(\pi/8) Z.
\end{equation}
It is easy to verify that $W$ is a reflection and
\begin{alignat*}{2}
  X & = W X' W, & \qquad X' & = W X W,\\
  Z & = W Z' W, & \qquad Z' & = W Z W.
\end{alignat*}
That is, under the conjugation of $W$, $X'$ and $Z'$ are mapped to $X$
and $Z$ respectively, and vice versa. The reflections $X$, $Z$, $X'$,
$Z'$ and $W$ are illustrate in Fig.~\ref{fig:refs}. They play an
important role in the CHSH game and the stabilizer games introduced in
this paper.

\begin{figure}[ht]
  \centering
  \includegraphics{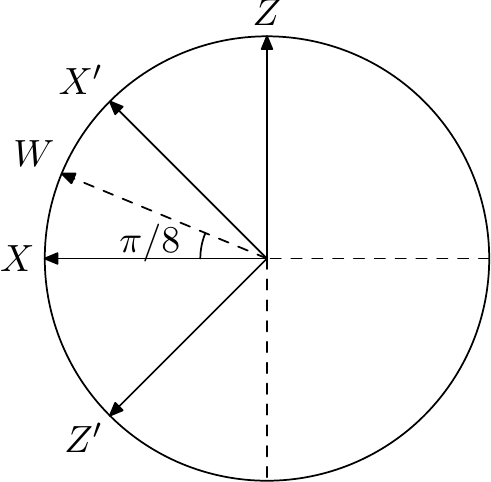}
  \caption{Reflections $X$, $Z$, $X'$, $Z'$, $W$ in the $x,z$-plane of
    the Bloch sphere.}
  \label{fig:refs}
\end{figure}

\subsection{Non-Local Games}

A multi-player one-round game involves two or more players and a
verifier who communicates with the players classically in one round.
The verifier samples questions and sends them out to the players and
expects to receive answers back. He then decides whether to accept or
reject based on the questions and answers. The players are allowed to
agree on a strategy before the game starts, but cannot communicate
with each other during the game.

Let there be $r$ players, $(1), (2), \ldots, (r)$. Let $\Gamma^{(i)}$
be a finite set of questions for player $(i)$ and $\Lambda^{(i)}$ be a
finite set of possible answers from player $(i)$. An $r$-player game
is defined by a distribution $\pi$ over $\prod_{i=1}^r \Gamma^{(i)}$
and a function
$V: \prod_{i=1}^r \Lambda^{(i)} \times \prod_{i=1}^r \Gamma^{(i)}
\rightarrow [0,1]$,
specifying the acceptance probability. By a convexity argument, it
suffices to consider the strategy of classical players described by
functions $f^{(i)} : \Gamma^{(i)} \rightarrow \Lambda^{(i)}$. The
value of the strategy is the acceptance probability
\begin{equation*}
  \omega = \E_{q\sim \pi} V (a(q),q),
\end{equation*}
for $q=(q_1,q_2,\ldots,q_r)$ distributed according to $\pi$ and
$a(q) = \bigl( f^{(1)}(q_1), f^{(2)}(q_2), \ldots, f^{(r)}(q_r)
\bigr)$.
The classical value of the game is the maximum of the values of all
classical strategies. An XOR game is a multi-player game in which each
player answers a bit $a_i$ and the verifier accepts or rejects
depending only on the parity of $\bigoplus_{i=1}^r a_i$. More
precisely, there is a function
$\hat{V}:\{0,1\} \times \prod_{i=1}^r \Gamma^{(i)} \rightarrow [0,1]$
such that
\begin{equation*}
  V(a,q) = \hat{V} \bigl( \bigoplus_{i=1}^r a_i,q \bigr),
\end{equation*}
for all $q\in \prod_{i=1}^r\Gamma^{(i)}$,
$a\in \prod_{i=1}^r\Lambda^{(i)}$. In most of the games considered in
this paper, the verifier accepts or rejects only depending on the
parity of some, not all, of the answer bits. We call them generalized
XOR games.

In a non-local game, the players are allowed to share an arbitrary
entangled state before the game starts. A quantum strategy $\S$ for
the non-local game is described by the shared state $\rho$, the
measurements $\bigl\{ M^{(i)}_{q_i} \bigr\}$ that player $(i)$
performs when the question is $q_i\in \Gamma^{(i)}$. The value of the
strategy is defined as
\begin{equation*}
  \omega^*(\S) = \E_{q\sim \pi} \sum_a \biggl[ \tr_\rho
  \bigl(\bigotimes_{i=1}^r M^{(i),a_i}_{q_i} \bigr) V(a,q) \biggr],
\end{equation*}
for $a=(a_1,a_2,\ldots,a_r)$ and $q=(q_1,q_2,\ldots,q_r)$. The
non-local value of the game is the supremum of the values of all
quantum strategies.

The CHSH game~\cite{CHSH69} is arguably one of the most important
non-local games. It arises from the study of fundamental questions in
quantum mechanics such as entanglement and non-locality via Bell
inequalities~\cite{Bel64}. The CHSH game is a two-player XOR game. The
verifier samples two bits $q_1$ and $q_2$ independently and uniformly
at random, sends $q_1$ to the first player and $q_2$ to the second
player. The verifier accepts if and only if two answer bits $a_1$ and
$a_2$ satisfy $a_1\oplus a_2 = q_1 \land q_2$. The classical value of
the game is $3/4=0.75$ and the non-local value of the game is
$\omega^*_{\text{CHSH}} = (2+\sqrt{2})/4 \approx 0.85$~\cite{Tsi80}.

In an optimal strategy for the CHSH game, the players share an EPR
state $(\ket{00}+\ket{11})/\sqrt{2}$, and the first player obtains the
answer by measuring $X$ (or $Z$) if the question is $0$ (or $1$
respectively), while the second player measures $X'$ (or $Z'$) in
Eq.~\eqref{eq:XZ'} if the question is $0$ (or $1$). The rigidity
property of the CHSH game roughly states that this is essentially the
only strategy for the players to achieve the non-local value, up to
local isometries. Furthermore, any strategy that has value close to
the game value must be close to this optimal strategy in some sense.
Rigidity of the CHSH game and other non-local games has found
interesting applications in certifiable randomness
generation~(e.g.,~\cite{Col06,PAM+10,VV12,MS14}), device-independent
quantum cryptography~(e.g.,~\cite{ABG+07,VV14,MS14}) and classical
command of quantum systems~\cite{RUV13}.

\subsection{Quantum Proofs and Local Hamiltonian Problems}

The idea of efficient proof verification of \class{NP} has a natural
quantum generalization given in the following definition.
\begin{definition}
  The complexity class \class{QMA} is the set of promise problems
  $L=(L_{\text{yes}},L_{\text{no}})$ such that there is a polynomial
  $p(\cdot)$ and a quantum polynomial-time verifier $V$, and
  \begin{itemize}
  \item {\it Completeness.\/} If $x\in L_{\text{yes}}$, there exists a
    state $\ket{\psi}$ of $p(\abs{x})$ qubits,
    \begin{equation*}
      \Pr \bigl[ V \text{ accepts } \ket{x}\otimes \ket{\psi} \bigr]
      \ge \frac{2}{3},
    \end{equation*}
  \item {\it Soundness.\/} If $x\in L_{\text{no}}$, then for all state
    $\ket{\psi}$ of $p(\abs{x})$ qubits,
    \begin{equation*}
      \Pr \bigl[ V \text{ accepts } \ket{x}\otimes \ket{\psi} \bigr]
      \le \frac{1}{3}.
    \end{equation*}
  \end{itemize}
\end{definition}

\begin{definition}[Local Hamiltonian Problems]
  An instance of the $k$-local Hamiltonian problem of $n$-qubits is
  described by the tuple $(H,a,b)$, where the Hamiltonian
  $H=\sum_{j=1}^m H_j$ and each term $H_j$ acts non-trivially on at
  most $k$-qubits, $H_j$ is positive semidefinite and
  $\norm{H_j} \le 1$, $a,b\in \real$ are numbers satisfying
  $b-a\ge 1/\poly(n)$. Let the ground state energy of the Hamiltonian
  $H$ be $\lambda_{\min} = \min_{\rho \in \Density} \ip{H}{\rho}$. In
  the $k$-local Hamiltonian problem, $(H,a,b)$ is a yes-instance if
  $\lambda_{\min} \le am$ and a no-instance if
  $\lambda_{\min} \ge bm$.
\end{definition}

The quantum analog of the Cook-Levin theorem by Kitaev states that the
$k$-local Hamiltonian problem is \class{QMA}-complete for
$k\ge 5$~\cite{Kit99,AN02}. This was later improved to $2$-local and
more physical Hamiltonians in a series of works~(see
e.g.,~\cite{KKR06,OT08,CM14}).

\subsection{Quantum Error Correction and Stabilizer Formalism}

A quantum error correcting code encodes a number of qubits, called the
logical qubits, into a larger number of physical qubits with the aim
of protecting the quantum information in the logical qubits from
certain types of noises.

The stabilizer formalism provides a convenient language and great
amount of examples of quantum error correcting codes. We present
several relevant definitions and facts about the stabilizer codes and
refer the reader to the thesis of Gottesman~\cite{Got97} for more
details. Let $\P_r$ be the group generated by the $r$-fold tensor
product of Pauli operators
\begin{equation*}
  \P_r = \biggl\{ e^{i\phi} \bigotimes_{j=1}^r P_j, \text{ for }\phi \in
  \{0,\pi/2,\pi,3\pi/2\},\; P_j \in \{I,X,Y,Z\} \biggr\}.
\end{equation*}
A stabilizer $\Stab$ is an abelian subgroup of $\P_r$ not containing
$-I^{\otimes r}$. The stabilizer provides a succinct description of a
subspace of $(\complex^2)^{\otimes r}$, the simultaneous
$+1$-eigenspace of the operators in $\Stab$. This subspace is called
the code space of the stabilizer. A set of operators in $\Stab$ is
called the generators of $\Stab$ if they generate the group $\Stab$.

The weight of an operator in $\P_r$ is the number of non-identity
tensor factors in it. Let $C(\Stab)$ be the centralizer of $\Stab$ in
$\P_r$, the set of operators in $\P_r$ that commutes with $\Stab$. The
distance of the stabilizer code is $d$ if there is no operator of
weight less than $d$ in $C(\Stab)-\Stab$. The logical $X$ and $Z$
operators $L_X$ and $L_Z$ are a pair of anti-commuting operators in
$C(\Stab)-\Stab$.

As a simple example, the operators $XX$ and $ZZ$ generate a stabilizer
for the EPR state. The famous five-qubit code~\cite{LMPZ96,BDSW96} has
a stabilizer representation given in Fig.~\ref{fig:5code1}. The
five-qubit code encodes one logical qubit in five physical qubits and
has distance $3$. The logical $X$ and $Z$ operators are
$X^{\otimes 5}$ and $Z^{\otimes 5}$. This will be the stabilizer code
we use most of the time as the example. Operators $X^{\otimes 4}$ and
$Z^{\otimes 4}$ generate the stabilizer for the four-qubit quantum
error detecting code. It has distance $2$ and encodes two qubits. The
operators $XXII$ and $ZIZI$ form a pair of anti-commuting operators
and can serve as the logical $X$ and $Z$ operators. There is another
pair of the logical operators for the other encoded qubit that we do
not use.

\subsection{State-Dependent Distance Measures of Quantum Measurements}

\label{sec:dist}

We introduce a distance measure and a consistency measure of quantum
measurements that will be helpful in our treatment of non-local games.
They grow out of the study of non-local games~\cite{IKM09,IV12,Vid13}
and may be useful in more general contexts. A common feature of them
is that they are both state-dependent.

A general question that one usually faces is what the consequences are
if measurement $\{M_1^a\}$ is used in place of $\{M_0^a\}$ in a
non-local game by one of the players. Will it change the overall
analysis at lot? More concretely, suppose that the state of a joint
quantum system $\H_A$ and $\H_B$ is $\rho$ and that $\{M_0^a\}$,
$\{M_1^a\}$ are quantum measurements on system $A$. The
post-measurement states are
\begin{equation}
  \label{eq:postms}
  \rho_i = \sum_a \op{a}{a} \otimes M_i^a \rho (M_i^a)^\dagger,
\end{equation}
for $i = 0,1$, respectively, depending on which measurement is
performed. By the monotonicity of the trace distance, the difference
will be bounded by $\dtr(\rho_0, \rho_1)$ no matter what operation
follows the measurement. In particular, in a non-local game, if Bob
measures on his system $\H_B$ and then the verifier makes the
decision, the acceptance probabilities will differ by at most
$\dtr(\rho_0, \rho_1)$.

The quantity defined next provides a way to bound the distance
$\dtr(\rho_0, \rho_1)$.

\begin{definition}
  For two quantum measurements $M_i = \bigl\{ M_i^a \bigr\}$ with
  $i=0,1$ that have the same set of possible outcomes, define
  \begin{equation}
    \label{eq:drho}
    d_\rho(M_0,M_1) = \Bigl[ \sum_a \norm{M_0^a-M_1^a}_\rho^2
    \Bigr]^{1/2}.
  \end{equation}
  More explicitly,
  \begin{equation}
    \label{eq:drho2}
    d_\rho(M_0,M_1) = \biggl[2-2\Re \sum_a \tr_\rho \bigl(
    (M_0^a)^\dagger M_1^a \bigr) \biggr]^{1/2}.
  \end{equation}
\end{definition}

\begin{lemma}
  \label{lem:drho}
  Let $M_i = \bigl\{ M_i^a \bigr\}$ for $i=0,1$ be two quantum
  measurements with the same set of possible outcomes, and $\rho_i$ be
  the post-measurement states in Eq.~\eqref{eq:postms}. Then
  \begin{equation*}
    \dtr(\rho_0, \rho_1) \le d_\rho (M_0, M_1).
  \end{equation*}
\end{lemma}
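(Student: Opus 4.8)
The plan is to bound the trace distance by constructing explicit purifications of $\rho_0$ and $\rho_1$ whose Euclidean distance equals $d_\rho(M_0,M_1)$ exactly, and then to invoke monotonicity of the trace distance under the partial trace. First I would fix a purification $\ket{\psi}$ of $\rho$ on $\H_A\otimes\H_B\otimes\H_P$ for some purifying system $P$, and define, for $i=0,1$,
\begin{equation*}
  \ket{\Phi_i} = \sum_a \ket{a}_R \otimes \bigl(M_i^a\otimes I_{BP}\bigr)\ket{\psi} \otimes \ket{a}_{R'},
\end{equation*}
where $R$ and $R'$ are two fresh registers that both record the outcome label $a$. The role of the second register $R'$ is the crucial point: tracing out $R'$ together with $P$ kills the off-diagonal terms in $R$, so that the reduced state of $\ket{\Phi_i}$ on $R\otimes\H_A\otimes\H_B$ is exactly the block-diagonal state $\rho_i$ of Eq.~\eqref{eq:postms}. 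Without this doubling, the reduced state would retain coherences between distinct outcomes and would fail to be a purification of $\rho_i$.

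The next step is a direct computation. By the completeness relation $\sum_a (M_i^a)^\dagger M_i^a = I$, both $\ket{\Phi_0}$ and $\ket{\Phi_1}$ are unit vectors; and their overlap collapses, thanks to the two Kronecker deltas coming from $R$ and $R'$, to
\begin{equation*}
  \ip{\Phi_0}{\Phi_1} = \sum_a \tr_\rho\bigl((M_0^a)^\dagger M_1^a\bigr).
\end{equation*}
Consequently $\norm{\ket{\Phi_0}-\ket{\Phi_1}}^2 = 2 - 2\Re\sum_a\tr_\rho\bigl((M_0^a)^\dagger M_1^a\bigr)$, which is precisely $d_\rho(M_0,M_1)^2$ by Eq.~\eqref{eq:drho2}.

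To finish I would chain two standard facts. Monotonicity of the trace distance under the partial trace (tracing out $P$ and $R'$) gives $\dtr(\rho_0,\rho_1)\le \dtr\bigl(\op{\Phi_0}{\Phi_0},\op{\Phi_1}{\Phi_1}\bigr)$, and for pure states the right-hand side equals $\sqrt{1-\abs{\ip{\Phi_0}{\Phi_1}}^2}$. Writing $c=\ip{\Phi_0}{\Phi_1}$, it then remains to note the elementary inequality $1-\abs{c}^2 \le 2 - 2\Re c$, which follows from $1+\abs{c}^2 \ge 1+(\Re c)^2 \ge 2\Re c$. Combining these, $\dtr(\rho_0,\rho_1)\le \sqrt{2-2\Re c} = d_\rho(M_0,M_1)$, as claimed.

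The only genuinely delicate point is the construction of the purification: one must record the outcome coherently in two places so that the reduced state is the dephased, block-diagonal $\rho_i$ rather than a state carrying residual cross-terms between outcomes; everything else is routine. An alternative, purification-free route would expand $\rho_0-\rho_1$ in its $R$-block-diagonal form and bound $\tfrac12\sum_a\norm{M_0^a\rho (M_0^a)^\dagger - M_1^a\rho (M_1^a)^\dagger}_1$ term by term, using the splitting $M_0^a\rho (M_0^a)^\dagger - M_1^a\rho (M_1^a)^\dagger = (M_0^a-M_1^a)\rho (M_0^a)^\dagger + M_1^a\rho (M_0^a-M_1^a)^\dagger$ followed by Cauchy--Schwarz. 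I expect this to reproduce the same bound but with substantially more bookkeeping, so I would favor the purification argument.
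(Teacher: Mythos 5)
Your proof is correct and follows essentially the same route as the paper: purify $\rho$, compare the resulting pure states via $\sqrt{1-\abs{c}^2}$ with $c=\sum_a\tr_\rho\bigl((M_0^a)^\dagger M_1^a\bigr)$, and finish with the elementary bound $1-\abs{c}^2\le 2-2\Re c$. Your only departure is the doubled outcome register $R'$, which makes $\ket{\Phi_i}$ an exact purification of the block-diagonal $\rho_i$; the paper instead works with the single-register vectors $\sum_a\ket{a}\otimes M_i^a\ket{\psi}$ and absorbs the needed dephasing of the outcome register into its appeal to monotonicity, so your version is marginally more explicit on that point but not a different argument.
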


\begin{proof}
  Let $\ket{\psi} \in \H_A \otimes \H_B \otimes \H_C$ be a
  purification of $\rho$. Then
  \begin{equation*}
    \begin{split}
      \dtr(\rho_0,\rho_1) & \le \dtr \bigl(\sum_a \ket{a} \otimes
      (M_0^a \ket{\psi}), \sum_a \ket{a} \otimes (M_1^a \ket{\psi})
      \bigr)\\
      & = \biggl[ 1- \Bigl| \sum_a \bra{\psi} (M_0^a)^\dagger M_1^a
      \ket{\psi} \Bigr|^2 \biggr]^{1/2}\\
      & \le \biggl[ 2 \biggl( 1- \Bigl| \sum_a \bra{\psi}
      (M_0^a)^\dagger
      M_1^a \ket{\psi} \Bigr| \biggr) \biggr]^{1/2}\\
      & \le \biggl[2-2\Re \sum_a \tr_\rho \bigl( (M_0^a)^\dagger M_1^a
      \bigr) \biggr]^{1/2}\\
      & = d_\rho(M_0,M_1).
    \end{split}
  \end{equation*}
  The first inequality follows from the monotonicity of the trace
  distance. The second line follows by a direct calculation of the
  trace distance for two pure states. The third line is from
  $1-x^2\le 2(1-x)$ for $x\in[0,1]$.
\end{proof}

As discussed above, a direct corollary of the above lemma is that when
measurement $M_0$ is replaced with $M_1$ in a strategy for a non-local
game using shared state $\rho$, the acceptance probability change by
at most $d_\rho (M_0, M_1)$. This claim works for all types of quantum
measurements including the general quantum measurement, POVMs,
projective quantum measurements and binary projective measurements
described by reflections.

For $M_i = \bigl\{ M_i^a \bigr\}$, $i=0,1$, describing two POVMs that
satisfy $\sum_a M_i^a = I$, define the corresponding distance as
\begin{equation*}
  d_\rho(M_0,M_1) = \biggl[2-2\Re \sum_a \tr_\rho \bigl( \sqrt{M_0^a}
  \sqrt{M_1^a} \bigr) \biggr]^{1/2}.
\end{equation*}
For projective measurements $M_i = \bigl\{ M_i^a \bigr\}$, define
\begin{equation*}
  d_\rho(M_0,M_1) = \biggl[2-2\Re \sum_a \tr_\rho \bigl( M_0^a M_1^a
  \bigr) \biggr]^{1/2}.
\end{equation*}
Finally, for reflections $R_0, R_1$, let
\begin{equation*}
  R_i^a = \frac{I+(-1)^a R_i}{2}
\end{equation*}
be the projective measurement operators correspond to $R_i$. Define
\begin{equation*}
  \begin{split}
    d_\rho(R_0,R_1) & = d_\rho(\{R_0^a\}, \{R_1^a\})\\
    & = \biggl[2 - 2\Re \sum_a \tr_\rho \bigl( \frac{I+(-1)^a
      R_0}{2} \frac{I+(-1)^a R_1}{2} \bigr) \biggr]^{1/2}\\
    & = \bigl[1 - \Re \tr_\rho \bigl( R_0 R_1 \bigr) \bigr]^{1/2}.
  \end{split}
\end{equation*}

It is easy to verify that $d_\rho$ satisfy the triangle inequality.

\begin{lemma}
  Let $M_0$, $M_1$, $M_2$ be three measurements on state $\rho$. Then
  \begin{equation*}
    d_\rho(M_0,M_2) \le d_\rho(M_0,M_1) + d_\rho(M_1,M_2).
  \end{equation*}
\end{lemma}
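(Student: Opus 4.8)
The plan is to recognize that $d_\rho$ is nothing but the seminorm induced by a semi-inner-product on a direct-sum space, so that the triangle inequality for $d_\rho$ reduces to the familiar triangle inequality for seminorms. Concretely, for a measurement $M = \{M^a\}$ I would form the tuple $(M^a)_a$ living in the direct sum $\bigoplus_a \Lin(\H)$, equipped with the pairing $\langle (M^a)_a, (N^a)_a\rangle = \sum_a \ip{M^a}{N^a}_\rho$. Since $\ip{\cdot}{\cdot}_\rho$ is a semi-inner-product, this direct-sum pairing is again a semi-inner-product, and by Eq.~\eqref{eq:drho} we have $d_\rho(M_0,M_1) = \bigl\| (M_0^a - M_1^a)_a \bigr\|$ where $\|\cdot\|$ is the induced seminorm. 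The claim is then just the seminorm triangle inequality applied to the decomposition $(M_0^a - M_2^a)_a = (M_0^a - M_1^a)_a + (M_1^a - M_2^a)_a$.

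Alternatively, and more explicitly, I would carry out the estimate in two steps. First, for each fixed outcome $a$, the seminorm $\norm{\cdot}_\rho$ obeys its own triangle inequality, so $\norm{M_0^a - M_2^a}_\rho \le \norm{M_0^a - M_1^a}_\rho + \norm{M_1^a - M_2^a}_\rho$. Second, writing $x_a = \norm{M_0^a - M_1^a}_\rho$ and $y_a = \norm{M_1^a - M_2^a}_\rho$, I would square, sum over $a$, and apply Minkowski's inequality for the $\ell^2$ norm over the outcome index:
\begin{equation*}
  \Bigl[ \sum_a \norm{M_0^a-M_2^a}_\rho^2 \Bigr]^{1/2}
  \le \Bigl[ \sum_a (x_a+y_a)^2 \Bigr]^{1/2}
  \le \Bigl[ \sum_a x_a^2 \Bigr]^{1/2} + \Bigl[ \sum_a y_a^2 \Bigr]^{1/2},
\end{equation*}
whose right-hand side is exactly $d_\rho(M_0,M_1) + d_\rho(M_1,M_2)$.

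Both the scalar triangle inequality for $\norm{\cdot}_\rho$ and Minkowski's inequality ultimately rest on the Cauchy--Schwarz inequality for $\ip{\cdot}{\cdot}_\rho$, which is available. I do not expect a genuine obstacle here: the only point worth care is that $d_\rho$ aggregates the outcomes in an $\ell^2$ fashion, so the per-outcome triangle inequality alone does not suffice and must be combined with Minkowski (equivalently, one invokes the direct-sum seminorm structure a single time). Finally, I would note that the argument is insensitive to which of the several definitions of $d_\rho$ is used: for POVMs the relevant operators are the $\sqrt{M_i^a}$, and for reflections they are the projectors $R_i^a$, but in every case $d_\rho$ is the seminorm of a difference of tuples $(M_0^a - M_1^a)_a$, so the same proof goes through verbatim.
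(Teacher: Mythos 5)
Your proof is correct. The paper states this lemma without proof (prefacing it only with the remark that it is easy to verify), and your argument---viewing $d_\rho$ as the seminorm induced by the direct-sum semi-inner-product $\sum_a \ip{M^a}{N^a}_\rho$, or equivalently combining the per-outcome triangle inequality for $\norm{\cdot}_\rho$ with Minkowski's inequality over the outcome index---is exactly the standard verification, including the correct observation that the POVM and reflection variants reduce to the same tuple-seminorm form (via the completeness relation $\sum_a M_i^a = I$ applied to the $\sqrt{M_i^a}$ or the projectors $R_i^a$).
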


The next important quantity measures the consistency of two quantum
measurements.

\begin{definition}
  Let $\rho \in \Density(\H_A \otimes \H_B)$ be the shared state
  between system $A$ and $B$, let $M= \bigl\{ M^a \bigr\}$,
  $N= \bigl\{ N^a \bigr\}$ be POVMs on system $A$ and $B$ respectively
  having the same set of possible outcomes. Define the consistency of
  $M$, $N$ on state $\rho$ as
  \begin{equation}
    \label{eq:cons}
    C_\rho(M,N) = \sum_a \tr_\rho (M^a\otimes N^a).
  \end{equation}
  $M$ and $N$ are called $\epsilon$-consistent on state $\rho$ if
  $C_\rho(M,N) \ge 1 - \epsilon$.
\end{definition}

For two reflections $R$, $S$, let $\{R^a\}$, $\{S^a\}$ be their
corresponding projective measurements. Define
\begin{equation}
  \label{eq:consref}
  C_\rho(R,S) = C_\rho(\{R^a\}, \{S^a\}) = \frac{1+\tr_\rho(R\otimes
    S)}{2}.
\end{equation}
The condition $\tr_\rho(R\otimes S) \approx_\epsilon 1$, or
equivalently, $R,S$ are $O(\epsilon)$-consistent on $\rho$, can be
thought of as a quantitative way of saying that $\rho$ is
approximately stabilized by $R\otimes S$.

The consistency of measurements puts strong structural constraints on
the strategies of non-local game. It will be a key ingredient in our
proof of the main result. In the following lemma, it is proved that if
two measurements $M_0$, $M_1$ are consistent with the same measurement
$N$, then $M_0$, $M_1$ must be close to each other in terms of the
distance $d_\rho$.

\begin{lemma}
  \label{lem:cons}
  Let $\rho \in \Density(\H_A \otimes \H_B)$ be a state on system $A$
  and $B$. Let $M_i= \bigl\{ M_i^a \bigr\}$ for $i=0,1$ be POVMs on
  system $A$, $N= \bigl\{ N^a \bigr\}$ be a POVM on system $B$. If
  \begin{equation*}
    C_\rho(M_i,N) \ge 1 - \epsilon,
  \end{equation*}
  for $i=0,1$, then
  \begin{equation*}
    d_\rho(M_0, M_1) \le O(\sqrt{\epsilon}).
  \end{equation*}
\end{lemma}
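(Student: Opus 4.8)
The plan is to linearize everything against a purification of $\rho$ and reduce the quantity $d_\rho(M_0,M_1)$ to the length of a difference of two vectors, then route both of those vectors through a common third vector built from $N$. Fix a purification $\ket{\psi}\in\H_A\otimes\H_B\otimes\H_C$ of $\rho$ and, in the enlarged space $\bigoplus_a(\H_A\otimes\H_B\otimes\H_C)$ indexed by the shared outcome set, define $\ket{\Phi_i}=\sum_a\ket{a}\otimes(\sqrt{M_i^a}\otimes I)\ket{\psi}$ for $i=0,1$ and $\ket{\Xi}=\sum_a\ket{a}\otimes(I\otimes\sqrt{N^a})\ket{\psi}$. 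Since $\sum_a M_i^a=\sum_a N^a=I$, each of these is a unit vector, and a one-line computation gives $\langle\Phi_0|\Phi_1\rangle=\sum_a\tr_\rho(\sqrt{M_0^a}\sqrt{M_1^a})$, so that $\norm{\ket{\Phi_0}-\ket{\Phi_1}}^2=2-2\Re\langle\Phi_0|\Phi_1\rangle=d_\rho(M_0,M_1)^2$. By the triangle inequality $\norm{\ket{\Phi_0}-\ket{\Phi_1}}\le\norm{\ket{\Phi_0}-\ket{\Xi}}+\norm{\ket{\Xi}-\ket{\Phi_1}}$, it then suffices to show that each $\ket{\Phi_i}$ is within $O(\sqrt{\epsilon})$ of the single vector $\ket{\Xi}$.

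Next I would expand $\norm{\ket{\Phi_i}-\ket{\Xi}}^2=2-2\Re\langle\Phi_i|\Xi\rangle$, where $\langle\Phi_i|\Xi\rangle=\sum_a\tr_\rho(\sqrt{M_i^a}\otimes\sqrt{N^a})$. The hypothesis gives me only the consistency bound $C_\rho(M_i,N)=\sum_a\tr_\rho(M_i^a\otimes N^a)\ge 1-\epsilon$, and the gap between what I have and what I want is precisely the square roots on the POVM elements. I expect this to be the one genuine obstacle, and I would resolve it with the operator inequality that $\sqrt{P}\ge P$ whenever $0\le P\le I$ (immediate from $\sqrt{\lambda}\ge\lambda$ on $[0,1]$), together with the monotonicity fact that $A\ge A'\ge 0$ and $B\ge B'\ge 0$ imply $A\otimes B\ge A'\otimes B'$. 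Applied to the POVM elements $0\le M_i^a\le I$ and $0\le N^a\le I$, this yields $\sqrt{M_i^a}\otimes\sqrt{N^a}\ge M_i^a\otimes N^a$ termwise, hence $\langle\Phi_i|\Xi\rangle\ge C_\rho(M_i,N)\ge 1-\epsilon$.

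Finally, each term $\tr_\rho(\sqrt{M_i^a}\otimes\sqrt{N^a})$ is real and nonnegative (the operator is Hermitian and positive), so $\langle\Phi_i|\Xi\rangle$ is real, and Cauchy--Schwarz gives $\langle\Phi_i|\Xi\rangle\le\norm{\ket{\Phi_i}}\,\norm{\ket{\Xi}}=1$; thus $\langle\Phi_i|\Xi\rangle\in[1-\epsilon,1]$ and $\norm{\ket{\Phi_i}-\ket{\Xi}}^2=2-2\langle\Phi_i|\Xi\rangle\le 2\epsilon$. Combining this with the triangle-inequality reduction of the first step yields $d_\rho(M_0,M_1)=\norm{\ket{\Phi_0}-\ket{\Phi_1}}\le 2\sqrt{2\epsilon}=O(\sqrt{\epsilon})$, as claimed. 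Note that for projective measurements one has $\sqrt{M_i^a}=M_i^a$ and the square-root step is vacuous, so the operator inequality $\sqrt{P}\ge P$ is exactly the only place where the general POVM statement needs anything beyond the projective case.
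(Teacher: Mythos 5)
Your proof is correct, and it takes a cleaner, more geometric route than the paper's. The paper stays entirely inside the $\tr_\rho$ calculus: it proves the two approximate identities $\sum_a \tr_\rho\bigl[(1-\sqrt{M_0^a})(1-\sqrt{M_1^a})\otimes N^a\bigr] \approx_\epsilon 0$ and $\sum_a \tr_\rho\bigl[\sqrt{M_0^a}\sqrt{M_1^a}\otimes(1-N^a)\bigr] \approx_\epsilon 0$, each by two applications of Cauchy--Schwarz for $\ip{\cdot}{\cdot}_\rho$ together with the scalar bound $(1-\sqrt{x})^2 \le 1-x$ on $[0,1]$, and then adds them to conclude $\sum_a \tr_\rho(\sqrt{M_0^a}\sqrt{M_1^a}) \approx_\epsilon 1$. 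You instead purify, lift each POVM to a unit vector $\ket{\Phi_i}$ (respectively $\ket{\Xi}$) in an outcome-indexed direct sum, observe that $d_\rho(M_0,M_1) = \norm{\ket{\Phi_0}-\ket{\Phi_1}}$ exactly matches the paper's POVM definition of $d_\rho$, and route the triangle inequality through the common vector $\ket{\Xi}$; the only estimate needed is $\sqrt{P} \ge P$ for $0 \le P \le I$ plus tensor monotonicity, which converts $\langle\Phi_i|\Xi\rangle \ge C_\rho(M_i,N) \ge 1-\epsilon$. (Amusingly, your scalar fact $x \le \sqrt{x}$ and the paper's $(1-\sqrt{x})^2 \le 1-x$ are equivalent on $[0,1]$, so the analytic content is the same; the difference is purely structural.) Your version buys an explicit constant ($2\sqrt{2\epsilon}$), makes transparent why the hypothesis forces $M_0$ and $M_1$ together (both are close to the same object built from $N$), and implicitly re-proves the triangle inequality for $d_\rho$ that the paper states as a separate lemma; the paper's version avoids introducing a purification and reuses the same Cauchy--Schwarz template it applies throughout. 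All steps check out, including the positivity of each $\tr_\rho(\sqrt{M_i^a}\otimes\sqrt{N^a})$ and the upper bound $\langle\Phi_i|\Xi\rangle \le 1$.
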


\begin{proof}
  First prove that
  \begin{equation}
    \label{eq:conslemma1}
    \sum_a \tr_\rho \biggl[ \bigl( 1-\sqrt{M_0^a} \bigr) \bigl(
    1-\sqrt{M_1^a} \bigr)\otimes N^a \biggr]
    \approx_\epsilon 0.
  \end{equation}
  By the Cauchy-Schwarz inequality, the absolute value of the left
  hand side is at most
  \begin{equation*}
    \begin{split}
      & \sum_a \sqrt{\tr_\rho \biggl[ \bigl( 1-\sqrt{M_0^a}
        \bigr)^2\otimes N^a \biggr] \tr_\rho \biggl[ \bigl(
        1-\sqrt{M_1^a} \bigr)^2\otimes N^a \biggr] }\\
      \le & \sum_a \sqrt{\tr_\rho \bigl[ \bigl( 1-M_0^a \bigr) \otimes
        N^a \bigr] \tr_\rho \bigl[ \bigl( 1-M_1^a \bigr) \otimes N^a
        \bigr] }\\
      \le & \sqrt{\sum_a \tr_\rho \bigl[ \bigl( 1-M_0^a \bigr) \otimes
        N^a \bigr] \sum_a \tr_\rho \bigl[ \bigl( 1-M_1^a \bigr)
        \otimes N^a \bigr] }\\
      \le & \epsilon,
    \end{split}
  \end{equation*}
  where the first inequality follows from the fact that
  $(1-\sqrt{x})^2 \le 1-x$ for $x\in[0,1]$, the second inequality is
  another Cauchy-Schwarz inequality and the last inequality follows
  from the condition that $C_\rho(M_i,N)\ge 1-\epsilon$.

  Similarly, by using the Cauchy-Schwarz inequality twice, we have
  \begin{equation}
    \label{eq:conslemma2}
    \sum_a \tr_\rho \biggl[ \sqrt{M_0^a} \sqrt{M_1^a}  \otimes \bigl(
    1 - N^a \bigr) \biggr] \approx_\epsilon 0.
  \end{equation}

  Adding Eqs.~\eqref{eq:conslemma1} and~\eqref{eq:conslemma2} gives
  \begin{equation*}
    \begin{split}
      \sum_a \tr_\rho \biggl[ \sqrt{M_0^a}\sqrt{M_1^a} \biggr] &
      \approx_\epsilon \sum_a \tr_\rho \biggl[\sqrt{M_0^a} \otimes N^a
      \biggr] + \sum_a \tr_\rho \biggl[\sqrt{M_1^a} \otimes N^a
      \biggr] - 1\\
      & \approx_\epsilon 1+1-1=1.
    \end{split}
  \end{equation*}
  The lemma follows by the definition of $d_\rho$ for POVMs.
\end{proof}

In the analysis, it is useful to have a quantity characterizing the
approximate commutativity of two projective measurements $M=\{ M^a \}$
and $N=\{ N^a \}$ on a state $\rho$. The quantity we choose is
\begin{equation*}
  \sum_a \norm{[M^a,N^a]}_\rho^2.
\end{equation*}

For reflections $R,S$, let $\{ R^a \}$ and $\{ S^a \}$ be the
projective measurements correspond to $R$ and $S$ respectively. The
commutativity of these two measurements on state $\rho$
\begin{equation*}
  \begin{split}
    \sum_{a\in \{0,1\}} \norm{[R^a,S^a]}^2_\rho & = \sum_{a\in
      \{0,1\}} \norm{\left[ \frac{I+(-1)^a R}{2},\frac{I+(-1)^a S}{2}
      \right]}_\rho^2\\
    & = \frac{1}{8} \norm{[R,S]}_\rho^2.
  \end{split}
\end{equation*}
For this reason, $\norm{[R,S]}_\rho^2$ equivalently serves as a bound
on the approximate commutativity of the two projective measurements
defined by $R,S$.

\section{Stabilizer Games and Rigidity}

\label{sec:stab}

\subsection{CHSH Game Revisited}

Before introducing the stabilizer games, it is beneficial to revisit
the CHSH game in the stabilizer formalism.

Recall that the EPR state
$\ket{\Phi} = (\ket{00} + \ket{11}) / \sqrt{2}$ is a stabilizer state
defined by two generators $g_1=XX$ and $g_2=ZZ$, and the eigenstate of
eigenvalue $2$ of the operator $g_1 + g_2$. If a verifier has trusted
measuring devices, it suffices to perform the projective measurements
associated with the reflections $g_1, g_2$ to check whether the state
is an EPR state. However, this simple measurement setting does not
correspond to any non-trivial schemes that allow device-independent
certification of the EPR state.

In the CHSH game, one of the two players is asked to measure her share
of $\ket{\Phi}$ with $X$, $Z$, the other is asked to measure $X'$,
$Z'$ in Eq.~\eqref{eq:XZ'}, the $\pi/4$ rotated versions of $X$, $Z$.
This motivates us to consider the generators $g_1' = XX'$ and
$g_2' = ZZ'$. By the conjugation relation of $X$, $Z$ and $X'$, $Z'$,
they generate a stabilizer for the state
$\ket{\Phi'} = I\otimes W \ket{\Phi}$ and
\begin{equation}
  \label{eq:CHSH'}
  \bigbra{\Phi'} \sum_{i=1}^2g_i' \bigket{\Phi'} = 2.
\end{equation}
Expanding the $X'$, $Z'$ in $g_1'$ and $g_2'$ using $X$ and $Z$ gives
four operators
\begin{equation}
  \label{eq:CHSH4}
  h_1 = XX,\quad h_2 = XZ,\quad h_3 = ZX,\quad h_4 = -ZZ,
\end{equation}
such that $h_1 + h_2 = \sqrt{2} g_1'$ and $h_3 + h_4 = \sqrt{2} g_2'$.
The four operators $h_i$ in Eq.~\eqref{eq:CHSH4} recover exactly the
CHSH game by encoding the Pauli operators $X$, $Z$ in $h_i$ as
questions $0,1$ and the sign $\pm 1$ of $h_i$ as the expected parity
of answers. The Eq.~\eqref{eq:CHSH'} becomes
\begin{equation*}
  \bigbra{\Phi'} \sum_{i=1}^4 h_i \bigket{\Phi'} = 2\sqrt{2},
\end{equation*}
which is an explanation of the $\sqrt{2}$ quantum advantage in the
CHSH game.

\subsection{Special-Player Stabilizer Game}

In this section, we introduce the stabilizer games with a special
player. The construction works with any non-trivial stabilizer code
that has a set of generators all in the tensor product form of
$I,X,Z$.

Consider the generators of the stabilizer group for the five-qubit quantum
code in Fig.~\ref{fig:5code1}. Its code space is the two-dimensional
eigenspace of eigenvalue $4$ of the operator $\sum_{j=1}^4g_j$, where
$g_j$'s are operators in Fig.~\ref{fig:5code1}.

\begin{figure}[ht]
  \begin{subfigure}[t]{.45\linewidth}
    \centering
    \begin{tabular}[c]{c|c@{}c@{}c@{}c@{}c}
      \hline\hline
      Name & \multicolumn{5}{c}{Operator}\\
      \hline
      $g_1$ & $I$ & $X$ & $Z$ & $Z$ & $X$\\
      $g_2$ & $X$ & $I$ & $X$ & $Z$ & $Z$\\
      $g_3$ & $Z$ & $X$ & $I$ & $X$ & $Z$\\
      $g_4$ & $Z$ & $Z$ & $X$ & $I$ & $X$\\
      \hline\hline
    \end{tabular}
    \caption{Standard generators}
    \label{fig:5code1}
  \end{subfigure}
  \begin{subfigure}[t]{.45\linewidth}
    \centering
    \begin{tabular}[c]{c|c@{}c@{}c@{}c@{}c}
      \hline\hline
      Name & \multicolumn{5}{c}{Operator}\\
      \hline
      $g'_1$ & $I$ & $X$ & $Z$ & $Z$ & $X'$\\
      $g'_2$ & $X$ & $I$ & $X$ & $Z$ & $Z'$\\
      $g'_3$ & $Z$ & $X$ & $I$ & $X$ & $Z'$\\
      $g'_4$ & $Z$ & $Z$ & $X$ & $I$ & $X'$\\
      \hline\hline
    \end{tabular}
    \caption{Generators with the last qubit rotated.}
    \label{fig:5code2}
  \end{subfigure}
  \caption{Stabilizer generators for the five-qubit code.}
\end{figure}

Motivated by the CHSH game, we apply the $\pi/4$-trick to the last
column of the four generators in Fig.~\ref{fig:5code1}. That is, we
replace $X$ and $Z$ in the last column with $X'$ and $Z'$ in
Eq.~\eqref{eq:XZ'}. This gives us another set of generators, as in
Fig.~\ref{fig:5code2}, which generates the stabilizer for the
five-qubit code with the last qubit rotated by the single-qubit
unitary $W$ in Eq.~\eqref{eq:W}. If $\ket{\psi}$ is in the code space
of the five-qubit code, state
\begin{equation}
  \label{eq:psi'}
  \ket{\psi'}=(I\otimes W)\ket{\psi}
\end{equation}
is in the eigenspace of $\sum_{j=1}^4g'_j$ with eigenvalue $4$.

Expanding the primed $X,Z$ operators in each of the generators in
Fig.~\ref{fig:5code2} into $X,Z$, we get a set of eight operators
$h_j$ as in Fig.~\ref{fig:5code3}. For state $\ket{\psi'}$ in
Eq.~\eqref{eq:psi'},
\begin{equation}
  \label{eq:h_i}
  \bigbra{\psi'} \sum_{j=1}^8 h_j \bigket{\psi'} = 4\sqrt{2}.
\end{equation}

The table in Fig.~\ref{fig:5code4} is obtained by translating the
operators $I$, $X$, $Z$ in Fig.~\ref{fig:5code3} to the questions in
the alphabet of $*,0,1,2,3$. The $I$ operator is always translated to
$*$, which denotes a null question. The verifier will not ask anything
to the player and do not expect any answers if the question is the
null question $*$. For convenience, we sometimes assume that the
verifier will replace $*$ with either $0$ or $1$ as the question and
ignore the answers corresponding to this question. The operators $X$,
$Z$ are translated to $0$, $1$ respectively in the first four columns
and to $2$, $3$ respectively in the last column. One can of course
also use $0$, $1$ in the last the column and the change is only for
later convenience. Finally, the parity column is read off from the
$\pm 1$ signs in the $h_i$ operators.

\begin{figure}[ht]
  \begin{subfigure}[t]{.45\linewidth}
    \centering
    \begin{tabular}[c]{c|c@{}c@{}c@{}c@{}c@{}c}
      \hline\hline
      Name & \multicolumn{6}{c}{Operator}\\
      \hline
      $h_1$ & & $I$ & $X$ & $Z$ & $Z$ & $X$\\
      $h_2$ & & $I$ & $X$ & $Z$ & $Z$ & $Z$\\
      \hline
      $h_3$ & & $X$ & $I$ & $X$ & $Z$ & $X$\\
      $h_4$ & $-$ & $X$ & $I$ & $X$ & $Z$ & $Z$\\
      \hline
      $h_5$ & & $Z$ & $X$ & $I$ & $X$ & $X$\\
      $h_6$ & $-$ & $Z$ & $X$ & $I$ & $X$ & $Z$\\
      \hline
      $h_7$ & & $Z$ & $Z$ & $X$ & $I$ & $X$\\
      $h_8$ & & $Z$ & $Z$ & $X$ & $I$ & $Z$\\
      \hline\hline
    \end{tabular}
    \caption{Eight operators obtained from Fig.~\ref{fig:5code2}.}
    \label{fig:5code3}
  \end{subfigure}
  \begin{subfigure}[t]{.45\linewidth}
    \centering
    \begin{tabular}[c]{c|c@{\hskip 1ex}c@{\hskip 1ex}c@{\hskip
          1ex}c@{\hskip 1ex}c}
      \hline\hline
      Parity & \multicolumn{5}{c}{Question}\\
      \hline
      $0$ & $*$ & $0$ & $1$ & $1$ & $2$\\
      \hline
      $0$ & $*$ & $0$ & $1$ & $1$ & $3$\\
      \hline
      $0$ & $0$ & $*$ & $0$ & $1$ & $2$\\
      \hline
      $1$ & $0$ & $*$ & $0$ & $1$ & $3$\\
      \hline
      $0$ & $1$ & $0$ & $*$ & $0$ & $2$\\
      \hline
      $1$ & $1$ & $0$ & $*$ & $0$ & $3$\\
      \hline
      $0$ & $1$ & $1$ & $0$ & $*$ & $2$\\
      \hline
      $0$ & $1$ & $1$ & $0$ & $*$ & $3$\\
      \hline\hline
    \end{tabular}
    \caption{The table for the game with a special player.}
    \label{fig:5code4}
  \end{subfigure}
  \caption{Translation from measurement operators to game
    specifications.}
\end{figure}

The table in Fig.~\ref{fig:5code4} specifies a five-player game in
Fig.~\ref{fig:spsgame} called the {\it Special-Player Stabilizer
  Game\/}. In the game, the verifier randomly selects four players
each time, asks a question encoded with a single bit and expects a
single bit answer. He accepts or rejects depending on the parity of
the answers received. The fifth player is special because of the
$\pi/4$-rotation applied on the fifth column of the stabilizer
generators. This breaks the translation invariance of the five-qubit
code and, in the honest strategy, the fifth player performs
differently from other players.

\begin{figure}[!htb]
  \begin{framed}
    \ul{Special-Player Stabilizer Game}\\[1em]
    Let the $s_j$ for $j\in [8]$ be the eight entries in the parity
    column of Fig.~\ref{fig:5code4} and let $w_j = (w_{j,i})$ be the
    $j$-th row of the question column. The verifier does the
    following:
    \begin{enumerate}\setlength{\itemsep}{0pt}
    \item Select an index $j\in [8]$ uniformly at random.
    \item For $i\in [5]$, send $w_{j,i}$ in Fig.~\ref{fig:5code4} to
      the player $(i)$ if $w_{j,i}$ is not $*$, the null question.
    \item Receive a bit $a^{(i)}$ from player $(i)$ if she was asked a
      question.
    \item Accept if and only if the parity of the answers
      $\bigoplus_i a^{(i)}$ equals $s_j$.
    \end{enumerate}
    \caption{Special-player stabilizer game for the five-qubit code.
      The fifth player is the special player.}
    \label{fig:spsgame}
  \end{framed}
\end{figure}

It turns out that the special-player stabilizer game in
Fig.~\ref{fig:spsgame} has non-local value
\begin{equation}
  \label{eq:spsvalue}
  \omega^*_{\text{SPS}} = \frac{2+\sqrt{2}}{4},
\end{equation}
the same as that of the CHSH game. The following strategy achieves the value.
The five players share the state $\ket{\psi'}$ as in
Eq.~\eqref{eq:psi'} and measure $X$ (or $Z$) if the question is even
(or odd, respectively) and reply with the outcome. Let $h_{j,i}$ be
the $i$-th Pauli operator of $h_j$. The value this strategy achieves
is
\begin{equation}
  \label{eq:spscomp}
  \begin{split}
    \omega^*_{\text{SPS}} \, & = \E_j \frac{1+(-1)^{s_j}\bigbra{\psi'}
      \bigl( \bigotimes_{i=1}^5 h_{j,i} \bigr) \bigket{\psi'}}{2}\\
    & = \frac{1+\E_j \bigbra{\psi'} h_j \bigket{\psi'}}{2},
  \end{split}
\end{equation}
which gives the desired value using Eq.~\eqref{eq:h_i}. It is
beneficial to restate the above optimal strategy using $\ket{\psi}$
instead of $\ket{\psi'}$. By the conjugation relation between $X$, $Z$
and $X'$ and $Z'$, the fifth player essentially measures $X'$ and $Z'$
on the state $\ket{\psi}$ in the code space when the question is $2,3$
respectively. If we think of questions $0,1,2,3$ as measurement
specifications of $X$, $Z$, $X'$, $Z'$, then players who honestly
follow the measurement instructions on an encoded state has acceptance
probability $\omega^*_{\text{SPS}}$. We mention without proof that the
classical value of the game is $3/4$, again the same as that of the
CHSH game.

We have seen a strategy for the game with value $(2+\sqrt{2})/4$. The
fact that this value is optimal is given in the following theorem. The
theorem also proves an important structural result about strategies
that almost achieve the non-local value of the game. Namely, the
special player (the fifth player) must measure honestly the $X'$ and
$Z'$ measurements up to an isometry. It is a partial rigidity property
of the special-player stabilizer games.

\begin{theorem}
  \label{thm:sps}
  Let $\S = \bigl(\rho, \bigl\{ R^{(i)}_w \bigr\} \bigr)$ be a
  strategy for the special-player stabilizer game in
  Fig.~\ref{fig:spsgame} for the five-qubit code, where $\rho$ is the
  state shared between the players before the game starts and
  $R^{(i)}_w$ is the reflection on Hilbert space $\H_i$ describing the
  projective measurements the player $(i)$ performs when receiving
  question $w\in \{0,1,2,3\}$. Then the value of the strategy $\S$ is
  at most $\omega^*_{\text{SPS}}$ given in Eq.~\eqref{eq:spsvalue}.
  Furthermore, if the value is at least
  $\omega^*_{\text{SPS}} - \epsilon$, then there exists an isometry
  $V\in \Lin(\H_5,\B\otimes \hat{\H}_5)$ such that
  $R^{(5)}_3 = V^\dagger (Z\otimes I) V$, and
  \begin{equation*}
    d_\rho \bigl( R^{(5)}_2, V^\dagger (X\otimes I) V \bigr) \le
    O(\sqrt{\epsilon}).
  \end{equation*}
\end{theorem}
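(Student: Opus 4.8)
The plan is to establish the upper bound first and then the rigidity statement, following the structure of the CHSH rigidity proof. For the upper bound, I observe that the game is essentially a direct sum of four CHSH-like sub-games, one for each generator $g'_j$. Specifically, the eight operators $h_j$ in Fig.~\ref{fig:5code3} pair up so that $h_{2k-1}+h_{2k} = \sqrt{2}\, g'_k$ (in the sense of Eq.~\eqref{eq:h_i}). The acceptance probability as computed in Eq.~\eqref{eq:spscomp} is $\tfrac{1}{2}(1+\E_j \tr_\rho(h_j))$ where each $h_j$ is a tensor product of reflections on the relevant subset of players. I would bound $\E_j \tr_\rho(h_j)$ using the fact that for each $k$, the pair of expectations involves the same four players' reflections combined with the special player's $R^{(5)}_2$ and $R^{(5)}_3$, and apply the standard Tsirelson-type argument: the operator $\sum_j h_j$ can be written so that its operator norm is bounded by $4\sqrt{2}$, giving $\E_j \tr_\rho(h_j)\le \sqrt{2}/... $ and hence value at most $(2+\sqrt{2})/4$. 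The cleanest route is to complete the square or invoke the SOS decomposition $4\sqrt2\, I - \sum_j h_j \succeq 0$, which reduces to the operator inequality for the two anti-commuting combinations on player five.

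The heart of the matter is the rigidity claim, and here the key step is to extract a pair of anti-commuting reflections on the special player's space $\H_5$. Near-optimality forces the bias term $\E_j \tr_\rho(h_j)$ to be $\approx_\epsilon 2\sqrt{2}$, which by the SOS slack argument forces each squared term in the decomposition to be $O(\epsilon)$ small in $\norm{\cdot}_\rho$. Translating back, I would show that the combinations $A := \tfrac{1}{\sqrt2}(R^{(5)}_2+R^{(5)}_3)$ and $B := \tfrac{1}{\sqrt2}(R^{(5)}_2-R^{(5)}_3)$ behave like the anti-commuting operators $X$ and $Z$ on $\rho$: concretely $A^2\approx_\epsilon I$, $B^2\approx_\epsilon I$, and the anticommutator $\{A,B\}=R^{(5)}_2{}^2 - R^{(5)}_3{}^2 = 0$ exactly since each is a reflection, so $\{R^{(5)}_2,R^{(5)}_3\}\approx O(\sqrt\epsilon)$ in the $\rho$-seminorm. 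Anti-commutation of $R^{(5)}_2$ and $R^{(5)}_3$ (approximately, on the support of $\rho$) is exactly the CHSH rigidity input.

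Given approximately anti-commuting reflections $R^{(5)}_2,R^{(5)}_3$, I would invoke the standard isometry construction used in CHSH rigidity (as in~\cite{MYS12,RUV13}): define $V$ on $\H_5$ via the swap/teleportation gadget built from $R^{(5)}_2$ and $R^{(5)}_3$, so that $V$ maps the pair to the Pauli pair on an external qubit $\B$, giving $R^{(5)}_3 = V^\dagger(Z\otimes I)V$ exactly (taking $R^{(5)}_3$ as the reference reflection defining the $Z$-axis) and $d_\rho(R^{(5)}_2, V^\dagger(X\otimes I)V)\le O(\sqrt\epsilon)$, where the error comes from the approximate anti-commutation and approximate reflection property. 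The main obstacle I anticipate is that the special-player game is genuinely multipartite: the anti-commutation of $R^{(5)}_2$ and $R^{(5)}_3$ does not follow from a single two-player CHSH instance but must be synthesized from the four overlapping sub-games in which player five participates together with rotating subsets of the other four players. Extracting a single clean operator inequality on $\H_5$ from this requires carefully summing the four generator contributions and controlling the cross terms; this is where the consistency machinery of Lemma~\ref{lem:cons} and the $d_\rho$ triangle inequality will be needed to glue the per-generator estimates into the uniform bound $O(\sqrt\epsilon)$ on $d_\rho(R^{(5)}_2, V^\dagger(X\otimes I)V)$.
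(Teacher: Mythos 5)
Your treatment of the upper bound is essentially the paper's: the identity $\sum_{l=1}^4\Delta_l^2 = 8I-\sqrt{2}\sum_j h_j(\S)$, with each $\Delta_l$ of the form (a product of four players' reflections) minus $(R^{(5)}_2\pm R^{(5)}_3)/\sqrt{2}$, is exactly the SOS decomposition you describe, and it yields both optimality and the starting point $\tr_\rho(\Delta_1^2)\le O(\epsilon)$. The rigidity half is where you diverge, and where there is a genuine gap. The step ``$\{A,B\}=R_2^2-R_3^2=0$ exactly, so $\{R^{(5)}_2,R^{(5)}_3\}\approx O(\sqrt{\epsilon})$ in the $\rho$-seminorm'' is a non sequitur: $\{A,B\}=0$ holds identically for \emph{any} pair of reflections, anticommuting or not, so it carries no information. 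The content you need is $A^2\approx I$ on $\rho$ (since $A^2=I+\tfrac{1}{2}\{R_2,R_3\}$), and this cannot be obtained by algebra on $\H_5$ alone; it must be extracted from $\norm{\Delta_1}_\rho^2\le O(\epsilon)$ by using that $A$ is $\rho$-close to the exact reflection $R=I\otimes R^{(2)}_0\otimes R^{(3)}_1\otimes R^{(4)}_1$ on the \emph{other} players, so that $A^2\approx AR=RA\approx R^2=I$. That transfer-to-the-other-side argument is precisely the ``main obstacle'' you flag and defer, so the crucial anticommutation estimate is left unproved. A second gap: the swap gadget of~\cite{MYS12,RUV13} does not automatically give $R^{(5)}_3=V^\dagger(Z\otimes I)V$ \emph{exactly}; you need a construction built from the exact eigenprojections of $R_3$, and then the error in $V^\dagger(X\otimes I)V$ versus $R_2$ is $\tfrac{1}{2}\norm{\{R_2,R_3\}R_3}_\rho$, which is measured in the conjugated state $R_3\rho R_3$ rather than $\rho$ and costs a further approximation step (and possibly a worse exponent than the stated $O(\sqrt{\epsilon})$).

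The paper sidesteps all of this with a more elementary route: apply Jordan's lemma directly to $R^{(5)}_2,R^{(5)}_3$ to obtain an isometry with $R_3=V^\dagger(Z\otimes I)V$ exactly and $R_2$ block-diagonal with angles $\theta_l$; then bound $\tr_\rho\bigl(R\otimes(R_2+R_3)\bigr)\le\tr_\rho\abs{R_2+R_3}$ via Lemma~\ref{lem:abs} inside $\tr_\rho(\Delta_1^2)$ to get $\E_l(\sqrt{1+\cos\theta_l}-1)^2\le O(\epsilon)$, and convert this to $\E_l\sin\theta_l\ge 1-O(\epsilon)$ by Lemma~\ref{lem:ClSl}, which is exactly $d_\rho\bigl(R_2,V^\dagger(X\otimes I)V\bigr)^2\le O(\epsilon)$. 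No anticommutation estimate, no swap gadget, and no consistency lemma is needed here (Lemma~\ref{lem:cons} enters only later, in Theorem~\ref{thm:sgame}). Your route can probably be completed, but only after you supply the operator-transfer argument you postponed and verify that the accumulated losses still give $O(\sqrt{\epsilon})$.
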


We note that some previous works use different distance measures for
measurements in the statement of rigidity theorem. For example, the
quantity $\norm{(R-S) \otimes I \ket{\psi}}$ is used in~\cite{RUV13}
for reflections $R$, $S$ on state $\ket{\psi}$. It is easy to verify
that this is the same as our distance measure $d_\rho(R,S)$ up to a
constant for $\rho = \ket{\psi}\bra{\psi}$.

The proof of the above theorem relies on the following lemmas.

\begin{lemma}[Jordan's Lemma~\cite{Jor75}]
  For any two reflections $R_0,R_1$ acting on a finite dimensional
  Hilbert space $\H$, there exists a decomposition of $\H$ into
  orthogonal one- and two-dimensional subspaces invariant under both
  $R_0$ and $R_1$.
\end{lemma}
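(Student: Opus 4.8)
The plan is to reduce everything to the algebra of the single unitary $C = R_0 R_1$ and then peel off invariant subspaces one at a time by induction on $\dim\H$. First I would record the basic identities. Since $R_0,R_1$ are Hermitian involutions, $C = R_0 R_1$ is unitary, with $C^{-1} = R_1 R_0 = C^\dagger$. Conjugating $C$ by either reflection inverts it: $R_0 C R_0 = R_1 R_0 = C^{-1}$ and likewise $R_1 C R_1 = C^{-1}$. The two consequences I will use repeatedly are $R_1 = R_0 C$ and $C R_0 = R_0 C^{-1}$.

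The heart of the argument is to exhibit, in any space of dimension at least one, a joint invariant subspace of dimension one or two. Over $\complex$ the unitary $C$ has an eigenvector $v\neq 0$, say $Cv = \mu v$ with $\abs{\mu}=1$. Because $R_0$ is invertible, $R_0 v \neq 0$, and I set $S = \mathrm{span}\{v,\,R_0 v\}$, which has dimension one or two. I claim $S$ is invariant under both reflections. It is manifestly invariant under $R_0$, which sends $v$ to the second generator $R_0 v$ and, by $R_0^2 = I$, sends $R_0 v$ back to $v$. For $R_1 = R_0 C$ one computes $R_1 v = R_0 C v = \mu\, R_0 v \in S$, while from $C R_0 v = R_0 C^{-1} v = \mu^{-1} R_0 v$ one gets $R_1(R_0 v) = R_0 C R_0 v = \mu^{-1} v \in S$. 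Hence $S$ is a one- or two-dimensional subspace invariant under both $R_0$ and $R_1$.

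To globalize, I induct on $\dim\H$; when $\dim\H \le 2$ the whole space already is such a block. For the inductive step I take the subspace $S$ just constructed and pass to its orthogonal complement $S^\perp$. Since $R_0$ and $R_1$ are self-adjoint and leave $S$ invariant, they also leave $S^\perp$ invariant, and $\dim S^\perp < \dim\H$, so the inductive hypothesis decomposes $S^\perp$ into orthogonal one- and two-dimensional joint invariant subspaces; adjoining $S$ completes the decomposition of $\H$. The main point to get right is the invariance of $S$ under $R_1$, which rests entirely on the conjugation identity $R_0 C R_0 = C^{-1}$; the remainder is the routine orthogonal-complement induction, valid precisely because reflections are Hermitian.
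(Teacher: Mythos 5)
Your proof is correct. Note first that the paper does not prove this lemma at all: it is stated as a classical result with a citation to Jordan's 1875 paper and then used as a black box in the proof of Theorem~\ref{thm:sps}, so there is no in-paper argument to compare against. Your argument is the standard modern proof, and every step checks out: $C=R_0R_1$ is unitary since the $R_i$ are Hermitian involutions; the conjugation identity $R_0CR_0=C^{-1}$ (equivalently $R_1R_0=C^{-1}$) gives $R_1v=\mu\,R_0v$ and $R_1(R_0v)=C^{-1}v=\mu^{-1}v$, so $S=\mathrm{span}\{v,R_0v\}$ is jointly invariant of dimension one or two; and the orthogonal-complement induction is legitimate precisely because the $R_i$ are self-adjoint, so invariance of $S$ forces invariance of $S^\perp$, and the resulting blocks are mutually orthogonal by construction. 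One remark on how the lemma is deployed in the paper: in the proof of Theorem~\ref{thm:sps} the author additionally wants each two-dimensional block of $R_3$ to be a reflection with both $\pm1$ eigenvalues, which your decomposition does not by itself guarantee (a block can act as $\pm I$ on a two-dimensional piece, and one-dimensional blocks occur); the paper handles this separately by the phrase ``adding extra dimensions if necessary,'' so this is not a gap in your proof of the lemma as stated, merely a caution that the lemma alone is slightly weaker than what its application uses.
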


\begin{lemma}
  \label{lem:abs}
  Let $R$ be a reflection and $H$ be a Hermitian matrix. Then
  \begin{equation*}
    \abs{\tr_\rho(R \otimes H)} \le \tr_\rho \abs{H}.
  \end{equation*}
\end{lemma}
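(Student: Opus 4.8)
The plan is to peel $H$ apart along its spectrum and reduce the whole inequality to a one-projection estimate that the reflection property $R^2=I$ makes essentially trivial. Reading the statement with $R$ a reflection on $\H_A$ and $H$ Hermitian on $\H_B$, I would write the spectral decomposition $H=\sum_k \lambda_k \Pi_k$, where $\{\Pi_k\}$ are the orthogonal spectral projections of $H$ on $\H_B$, so that $\abs{H}=\sum_k \abs{\lambda_k}\Pi_k$. By linearity of $\tr_\rho$ we have $\tr_\rho(R\otimes H)=\sum_k \lambda_k\,\tr_\rho(R\otimes\Pi_k)$, and an application of the triangle inequality reduces the lemma to the single-projection bound
\[
  \abs{\tr_\rho(R\otimes\Pi_k)} \le \tr_\rho(I\otimes\Pi_k),
\]
since then $\abs{\tr_\rho(R\otimes H)}\le\sum_k\abs{\lambda_k}\,\tr_\rho(I\otimes\Pi_k)=\tr_\rho(I\otimes\abs{H})$, which is exactly $\tr_\rho\abs{H}$ under the reduced-state convention fixed in the preliminaries.

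For this single estimate I would follow the purification route already used for Lemma~\ref{lem:drho}. Let $\ket{\psi}\in\H_A\otimes\H_B\otimes\H_C$ purify $\rho$, so that $\tr_\rho(R\otimes\Pi_k)=\bra{\psi}(R\otimes\Pi_k\otimes I)\ket{\psi}$, and set $\ket{\phi}=(R\otimes I\otimes I)\ket{\psi}$, a unit vector because $R\otimes I$ is unitary thanks to $R^2=I$. Since $R$ is Hermitian, $\tr_\rho(R\otimes\Pi_k)=\bra{\phi}(I\otimes\Pi_k\otimes I)\ket{\psi}$, and inserting $\Pi_k=\Pi_k^2$ and applying Cauchy--Schwarz gives
\[
  \abs{\tr_\rho(R\otimes\Pi_k)} \le \norm{(I\otimes\Pi_k\otimes I)\ket{\phi}}\,\norm{(I\otimes\Pi_k\otimes I)\ket{\psi}}.
\]
The decisive observation is that $I\otimes\Pi_k\otimes I$ commutes with the unitary $R\otimes I\otimes I$, so the first norm equals the second; both therefore compute to $\bra{\psi}(I\otimes\Pi_k\otimes I)\ket{\psi}=\tr_\rho(I\otimes\Pi_k)$, which is the claimed bound.

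An equivalent argument avoids purification entirely: factoring $\rho=\sqrt{\rho}\sqrt{\rho}$ and applying the Hilbert--Schmidt Cauchy--Schwarz inequality to $A=(R\otimes\Pi_k)\sqrt{\rho}$ and $B=(I\otimes\Pi_k)\sqrt{\rho}$ yields $\tr(A^\dagger B)=\tr_\rho(R\otimes\Pi_k)$, while $\tr(A^\dagger A)=\tr(B^\dagger B)=\tr_\rho(I\otimes\Pi_k)$, again using only $R^2=I$ and $\Pi_k^2=\Pi_k$. Either route reaches the same single-projection estimate.

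I do not anticipate a genuine obstacle here: the proof is short and the reflection hypothesis enters in an essential but elementary way. The only points that demand care are the bookkeeping---tracking which tensor factor each operator acts on, and remembering that $\tr_\rho(\cdot)$ silently traces against the correct reduced state---together with the recognition that passing $\abs{\lambda_k}$ through the triangle inequality is precisely what converts $H$ into $\abs{H}$. The one structural input that makes everything work is simply that $R^2=I$ renders $R\otimes I$ a norm-preserving unitary that commutes through the spectral projections of $H$.
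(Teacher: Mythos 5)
Your proof is correct, but it takes a genuinely different route from the paper's. The paper decomposes $H=H^+-H^-$ into its positive and negative parts and observes in two lines that
\begin{equation*}
  \tr_\rho(R\otimes H)-\tr_\rho\abs{H} = -\tr_\rho\bigl((I-R)\otimes H^+\bigr)-\tr_\rho\bigl((I+R)\otimes H^-\bigr)\le 0,
\end{equation*}
since $I\pm R$ and $H^\pm$ are all positive semidefinite, with the reverse bound obtained symmetrically; no Cauchy--Schwarz, purification, or spectral resolution is needed. You instead resolve $H$ into its full spectral projections, push $\abs{\lambda_k}$ through the triangle inequality, and prove the single-projection estimate $\abs{\tr_\rho(R\otimes\Pi_k)}\le\tr_\rho(I\otimes\Pi_k)$ by Cauchy--Schwarz (either on a purification or in the Hilbert--Schmidt form with $\sqrt{\rho}$); every step checks out, including the commutation of $I\otimes\Pi_k\otimes I$ with $R\otimes I\otimes I$ across tensor factors. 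The paper's argument is shorter and more elementary, needing only operator positivity; yours isolates a clean reusable single-projection inequality and makes the mechanism (norm preservation under $R\otimes I$) explicit. It is worth noting that both arguments in fact use only $\norm{R}\le 1$, i.e.\ $-I\le R\le I$, rather than the full reflection property $R^2=I$: in your Cauchy--Schwarz step the equality of the two norms could be relaxed to an inequality, so the lemma holds for any Hermitian contraction $R$.
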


\begin{proof}
  We first prove that
  \begin{equation*}
    \tr_\rho(R \otimes H) \le \tr_\rho \abs{H}.
  \end{equation*}
  Let $H^+$ and $H^-$ be the positive and negative part of
  $H = H^+ - H^-$ for positive semidefinite $H^+$ and $H^-$. The claim
  follows by a direct calculation
  \begin{align*}
    & \tr_\rho(R\otimes H) - \tr_\rho \abs{H}\\
    = & \tr_\rho(R\otimes (H^+-H^-)) - \tr_\rho(I\otimes(H^++H^-))\\
    = & -\tr_\rho((I-R) \otimes H^+) - \tr_\rho ((I+R)\otimes H^-)
        \le 0.
  \end{align*}
  A similar argument establishes
  \begin{equation*}
    \tr_\rho(R \otimes H) \ge - \tr_\rho \abs{H},
  \end{equation*}
  which completes the proof.
\end{proof}

\begin{lemma}
  \label{lem:ClSl}
  For $\theta_l \in [0,\pi]$, $C_l = \cos\theta_l$,
  $S_l = \sin\theta_l$, and any probability distribution over $l$, if
  \begin{equation*}
    \E_l \bigl( \sqrt{1+C_l}-1 \bigr)^2 \le \epsilon,
  \end{equation*}
  then
  \begin{equation*}
    \E_l S_l \ge 1 - O(\epsilon).
  \end{equation*}
\end{lemma}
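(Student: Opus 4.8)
The plan is to reduce the statement to a purely pointwise inequality relating $1-S_l$ and $(\sqrt{1+C_l}-1)^2$ that holds for each individual $l$, and then to take expectations at the very end. The crucial preliminary observation is that $\theta_l\in[0,\pi]$ forces $S_l=\sin\theta_l\ge 0$, so $S_l=\sqrt{1-C_l^2}$ and the entire argument becomes an elementary estimate about the single real variable $C_l\in[-1,1]$.

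Concretely, I would introduce the shifted variable $\delta_l=\sqrt{1+C_l}-1$, so that the hypothesis reads $\E_l\,\delta_l^2\le\epsilon$ while $C_l=(1+\delta_l)^2-1=2\delta_l+\delta_l^2$. Because $1+C_l$ ranges over $[0,2]$, we automatically have $\delta_l\in[-1,\sqrt2-1]$, and this bounded range is exactly what converts the hypothesis into the conclusion with only a constant loss. First I would record the elementary bound $1-S_l=1-\sqrt{1-C_l^2}\le C_l^2$, which follows from $1-\sqrt{1-t}\le t$ applied at $t=C_l^2\in[0,1]$. Next I would factor $C_l^2=\delta_l^2(2+\delta_l)^2$ and use $2+\delta_l\in[1,1+\sqrt2]$ to obtain $(2+\delta_l)^2\le(1+\sqrt2)^2=3+2\sqrt2$, giving the pointwise inequality $1-S_l\le(3+2\sqrt2)\,\delta_l^2$.

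Taking the expectation over $l$ then closes the argument: $1-\E_l S_l=\E_l(1-S_l)\le(3+2\sqrt2)\,\E_l\,\delta_l^2\le(3+2\sqrt2)\,\epsilon$, hence $\E_l S_l\ge 1-O(\epsilon)$, as required.

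I do not expect a genuine obstacle here; the only point that must be handled with care is the bookkeeping of the range of $\delta_l$, since it is precisely the uniform bound on the factor $(2+\delta_l)^2$ that keeps the constant in front of $\epsilon$ finite. An alternative route I considered is the half-angle substitution $\phi=\theta_l/2\in[0,\pi/2]$, under which $\sqrt{1+C_l}=\sqrt2\cos\phi$ and $1-S_l=(\cos\phi-\sin\phi)^2$; this also delivers the pointwise inequality, but it requires more trigonometric manipulation, so I would favor the $\delta_l$-substitution above.
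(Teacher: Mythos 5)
Your proof is correct and follows essentially the same route as the paper's: a pointwise bound of $1-S_l$ by a constant multiple of $(\sqrt{1+C_l}-1)^2$ via $1-\sqrt{1-t}\le t$, followed by taking expectations. Your version is in fact slightly cleaner, since writing $C_l=\delta_l(2+\delta_l)$ and bounding $(2+\delta_l)^2\le 3+2\sqrt2$ avoids the case split at $\epsilon_l\ge 1/9$ that the paper uses and yields a better constant than the paper's $9$.
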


\begin{proof}
  Let $\epsilon_l$ be $(\sqrt{1+C_l}-1)^2$. Then
  $\E_l \epsilon_l \le \epsilon$. For each $l$, $\epsilon_l \le 1$. We
  claim that for all $l$, $S_l \ge 1-9 \epsilon_l$. This will
  obviously finish the proof.

  As
  \begin{equation*}
    -2\sqrt{\epsilon_l} \le C_l = \epsilon_l \pm 2\sqrt{\epsilon_l}
    \le 3 \sqrt{\epsilon_l},
  \end{equation*}
  it follows that $C_l^2 \le 9 \epsilon_l$. If $\epsilon_l \ge 1/9$,
  the claim is trivial since $\theta_l \in [0,\pi]$ and $S_l \ge 0$.
  Otherwise,
  \begin{equation*}
    S_l = \sqrt{1-C_l^2} \ge \sqrt{1-9\epsilon_l} \ge 1-9\epsilon_l.
  \end{equation*}
\end{proof}

\begin{proof}[Proof of Theorem~\ref{thm:sps}]
  We first give an expression of the game value for the strategy $\S$.
  For each operator $h_j$ in Fig.~\ref{fig:5code3}, define $h_j(\S)$
  as the operator obtained by substituting the $X,Z$ operators in
  $h_j$ with the players' corresponding reflections in the strategy
  $\S$. That is
  \begin{equation*}
    h_j(\S) = (-1)^{s_j} \bigotimes_{i=1}^5 R^{(i)}_{w_{j,i}},
  \end{equation*}
  where $s_j$, $w_{j,i}$ are defined in Fig.~\ref{fig:spsgame} and
  $R^{(i)}_*$ is defined to be $I$. Following the similar steps as in
  Eq.~\eqref{eq:spscomp}, the value of $\S$ is computed as
  \begin{equation*}
    \omega^*(\S) = \frac{1+\E_j \tr_\rho (h_j(\S))}{2}.
  \end{equation*}

  Consider four matrices
  \begin{subequations}
    \label{eq:Delta}
    \begin{align}
      \Delta_1(\S) & = I\otimes R^{(2)}_0 \otimes R^{(3)}_1 \otimes
                     R^{(4)}_1 \otimes I - I^{\otimes 4} \otimes
                     \frac{R^{(5)}_2+R^{(5)}_3}{\sqrt{2}},\\
      \Delta_2(\S) & = R^{(1)}_0 \otimes I \otimes R^{(3)}_0 \otimes
                     R^{(4)}_1 \otimes I - I^{\otimes 4}  \otimes
                     \frac{R^{(5)}_2-R^{(5)}_3}{\sqrt{2}},\\
      \Delta_3(\S) & = R^{(1)}_1 \otimes R^{(2)}_0 \otimes I\otimes
                     R^{(4)}_0 \otimes I - I^{\otimes 4}  \otimes
                     \frac{R^{(5)}_2-R^{(5)}_3}{\sqrt{2}},\\
      \Delta_4(\S) & = R^{(1)}_1 \otimes R^{(2)}_1 \otimes R^{(3)}_0
                     \otimes I \otimes I - I^{\otimes 4} \otimes
                     \frac{R^{(5)}_2+R^{(5)}_3}{\sqrt{2}}.
    \end{align}
  \end{subequations}
  In the following, we write them as $\Delta_l$ and hide their
  dependence on the strategy $\S$ when there is no ambiguity. The sum
  of square of the four matrices is
  \begin{equation*}
    \sum_{l=1}^4 \Delta_l^2 = 8 I - \sqrt{2} \sum_j h_j(\S).
  \end{equation*}
  This gives the following expression for the game value
  \begin{equation*}
    \omega^*(\S) = \frac{1}{2} + \frac{8\sqrt{2} -
      \sqrt{2}\sum_{l=1}^4 \tr_\rho(\Delta^2_l)}{32},
  \end{equation*}
  from which the optimality of $\omega^*_{\text{SPS}}$ is obvious. It
  also implies that for any strategy $\S$ having value at least
  $\omega^*_{\text{SPS}} - \epsilon$,
  \begin{equation}
    \label{eq:D_bound}
    \tr_\rho (\Delta_1^2) \le \sum_{l=1}^4 \tr_\rho (\Delta_l^2) \le
    O(\epsilon).
  \end{equation}

  For simplicity, let $R_2, R_3$ be the shorthand notion of
  $R^{(5)}_2$ and $R^{(5)}_3$ in the rest of the proof. Following a
  similar truncation argument as in~\cite{RUV13}, we may assume
  without loss of generality that the underlying Hilbert spaces of the
  players are finite dimensional so that Jordan's Lemma applies. Using
  Jordan's lemma and adding extra dimensions if necessary, one get
  simultaneous $2$-by-$2$ block diagonalizations of $R_2$ and $R_3$
  such that each $2$-by-$2$ block is a reflection having both $\pm 1$
  eigenvalues. Hence, there is an isometry
  $V\in \Lin(\H_5, \B\otimes \hat{\H}_5)$ such that
  \begin{equation*}
    R_3 = V^\dagger (Z\otimes I) V,
  \end{equation*}
  and
  \begin{equation*}
    R_2 = V^\dagger \sum_l \left[
      \begin{pmatrix}
        C_l & \phantom{-}S_l\\
        S_l & -C_l
      \end{pmatrix}
      \otimes \op{l}{l}\right] V,
  \end{equation*}
  where $C_l = \cos\theta_l, S_l = \sin\theta_l$ for
  $\theta_l \in [0,\pi]$ and $l$ is the index of the two-dimensional
  invariant subspaces obtained by Jordan's lemma.

  Substitute the expression for $R_2$ and $R_3$ in
  Eq.~\eqref{eq:D_bound},
  \begin{equation}
    \label{eq:D_eps}
    \begin{split}
      O(\epsilon) \ge \tr_\rho (\Delta_1^2) & = 2 + \frac{1}{2}
      \tr_\rho (R_2R_3+R_3R_2) + \sqrt{2} \tr_\rho\bigl(
      R\otimes(R_2 + R_3)\bigr)\\
      & \ge 2 + \frac{1}{2} \tr_\rho
      (R_2R_3+R_3R_2)-\sqrt{2} \tr_\rho\abs{R_2+R_3}\\
      & = 2 + \frac{1}{2}\tr_{\tilde{\rho}} \left[\sum_l
        \begin{pmatrix}
          2C_l & 0\\
          0 & 2C_l
        \end{pmatrix}
        \otimes \op{l}{l} \right] - 2 \tr_{\tilde{\rho}} \bigl(
      \sum_l \sqrt{1+C_l} I \otimes \op{l}{l} \bigr)\\
      & = \E_l (\sqrt{1+C_l}-1)^2,
    \end{split}
  \end{equation}
  where $R$ is the reflection
  $I\otimes R^{(2)}_0 \otimes R^{(3)}_1 \otimes R^{(4)}_1$, the second
  line follows from Lemma~\ref{lem:abs},
  $\tilde{\rho} = V\rho V^\dagger$, and the expectation $\E_l$ is over
  the probability distribution
  $\Pr(l) = \tr_{\tilde{\rho}} (I\otimes \op{l}{l})$.

  To complete the proof, consider the state dependent distance between
  reflections $R_2$ and $V^\dagger (X\otimes I) V$
  \begin{equation*}
    \begin{split}
      d_\rho \bigl( R_2, V^\dagger (X\otimes I) V \bigr) & = \bigl[1 -
      \Re \tr_\rho \bigl( R_2 V^\dagger(X\otimes I) V \bigr)
      \bigr]^{\frac{1}{2}}\\
      & = \bigl(1 - \E_l S_l\bigr)^{\frac{1}{2}}.
    \end{split}
  \end{equation*}
  This equation and Eq.~\eqref{eq:D_eps} together with
  Lemma~\ref{lem:ClSl} give the second part in the theorem.
\end{proof}

In the above discussion, the fifth player plays the role of the
special player in the game. It is natural to generalize this to a game
with player $(t)$ as the special player. The five-qubit code game with
special player $(t)$ is the game defined by the table in
Fig.~\ref{fig:5code4} after a cyclic rotation of the question columns
such that the special column becomes the $t$-th one. This makes use of the
translation invariance of the five-qubit code.

For a general $r$-qubit stabilizer $\Stab$ that has a set of $XZ$-form
generators $g_1, g_2, \ldots, g_l$, we need to select two generators
$g^{(t)}_X$, $g^{(t)}_Z$ among $g_1, \ldots, g_l$ for each qubit
$t\in [r]$ such that $g^{(t)}_X$ has $X$ operator on the $t$-th qubit
and $g^{(t)}_Z$ has $Z$ operator on the $t$-th qubit. This is possible
in most cases as long as the distance of the stabilizer is larger than
or equal to $2$ and the $t$-th qubit is not fixed to a pure state for
all code states. We call a stabilizer non-trivial if such choices of a
pair of generators are possible for all $t\in [r]$. The choice of
$g^{(t)}_X$ and $g^{(t)}_Z$ may not be unique, but any such choice
will work. To play the special player stabilizer game with special
player $(t)$, we follow the procedure in Fig.~\ref{fig:spsgame} with
generators $g^{(t)}_X$ and $g^{(t)}_Z$. Even though the game may not
essentially depend on all the generators, the proof of partial
rigidity still follows in this general case.

More specifically, for $t\in [r]$, let $g^{(t)}_X$ and $g^{(t)}_Z$ be
generators that have $X$ and $Z$ on the $t$-th qubit respectively.
Following the idea in the design of the game for the five-qubit code,
define four operators obtained by doing the $\pi/4$-trick on the
$t$-th qubit to this pair of generators
\begin{equation}
  \label{eq:spsh}
  h^{(t)}_1 = g^{(t)}_X, \quad h^{(t)}_2 = g^{(t)}_{X\mapsto Z}, \quad
  h^{(t)}_3 = g^{(t)}_{Z\mapsto X}, \quad h^{(t)}_4 = -g^{(t)}_Z,
\end{equation}
where $g^{(t)}_{X\mapsto Z}$ is the operator obtained by changing the
$X$ operator of the $t$-th qubit of $g^{(t)}_X$ to $Z$ and, similarly,
$g^{(t)}_{Z\mapsto X}$ is the operator obtained by changing the $Z$ on
$t$-th qubit of $g^{(t)}_Z$ to $X$.

As in the case for the five-qubit code game, we translate the
operators $h^{(t)}_j$ to questions $w_j = (w_{j,i})$ where $w_{j,i}$
is in $ \{*, 0, 1, 2, 3\}$. Following the translation rule, $w_{j,i}$
is $*$ if the $i$-th tensor factor of $h_j$ is $I$. It is $0$, $1$ if
$i\ne t$ and the corresponding tensor factor is $X$, $Z$ respectively,
and it is $2$, $3$ if $i=t$ and the tensor factor is $X$, $Z$
respectively. Similarly, define $s_j$ to be $0$ or $1$ if the sign of
$h^{(t)}_j$ is $1$ or $-1$ respectively.

The stabilizer game with special player $(t)$ is the game defined as
in Fig.~\ref{fig:spsgen}.

\begin{figure}[!htb]
  \begin{framed}
    \ul{Special-Player Stabilizer Game (General Case)}\\[1em]
    For a stabilizer $\Stab$ with $XZ$-form generators, let $w_{j,i}$
    and $s_j$ be the questions and parities defined by the operators
    $h^{(t)}_j$ in Eq.~\eqref{eq:spsh} for $j\in [4]$. In the
    special-player stabilizer game with player $(t)$ as the special
    player, the verifier performs the following steps:
    \begin{enumerate}\setlength{\itemsep}{0pt}
    \item Select an index $j\in [4]$ uniformly at random.
    \item For $i\in [r]$, send $w_{j,i}$ in Fig.~\ref{fig:5code4} to
      the player $(i)$ if $w_{j,i}$ is not $*$, the null question.
    \item Receive a bit $a^{(i)}$ from player $(i)$ if she was asked a
      question.
    \item Accept if and only if the parity of the answers
      $\bigoplus_i a^{(i)}$ equals $s_j$.
    \end{enumerate}
    \caption{Special-player stabilizer game.}
    \label{fig:spsgen}
  \end{framed}
\end{figure}

\subsection{Stabilizer Game}

The partial rigidity of the special-player stabilizer game applies
only to the measurements preformed by the special player. It
essentially forces the special player to measure $X'$ and $Z'$ on her
system. There are, however, no rigidity known for the other players'
measurements and nothing is proved about the shared state of the
strategy. The stabilizer game uses the special-player stabilizer game
as a sub-module to achieve the full rigidity properties for all
players.

The specification of the {\it Stabilizer Game\/} is given in
Fig.~\ref{fig:sgame}. It is defined by the $r$-qubit stabilizer
$\Stab$ with $XZ$-form generators. It also implicitly depends on a
fixed choice of generators $g^{(t)}_X$, $g^{(t)}_Z$ for each
$t\in [r]$ in order to perform the second test. The game involves $r$
players, each of whom may receive a question of two bits, and is
required to answer one bit. The verifier's decision depends only on
the parity of some of the answer bits and the game is a generalized
XOR game. In this paper, the number of qubits $r$ of the stabilizer is
always assumed to be a constant, and it may come in the Big-$O$ notions
in the rest of the paper.

\begin{figure}[!htb]
  \begin{framed}
    \ul{Stabilizer Game}\\[1em]
    For an $r$-qubit non-trivial stabilizer $\Stab$ with $XZ$-form
    generators $g_1, g_2, \ldots, g_l$, define the stabilizer game of
    $\Stab$ as follows. Let $w_{j,i}$ for $j\in [l]$, $i\in [r]$ be
    $*$, $0$, or $1$ if $g_j$ has $I$, $X$, or $Z$ on the $i$-th qubit
    respectively. Let $s_j$ for $j\in [l]$ be the $0$, $1$ if the sign
    of $g_j$ is $1$, $-1$ respectively. The verifier performs the
    following two tests with equal probability:
    \begin{enumerate}\setlength{\itemsep}{0pt}
    \item Select an index $j\in [l]$ uniformly at random. Send
      $w_{j,i}$ to the player $(i)$ if $w_{j,i} \ne *$. Receives a bit
      from each player. Accept if the answers not corresponding to the
      null questions have the same parity as $s_j$. Reject otherwise.
    \item Select $t\in [r]$ uniformly at random. Play the
      special-player stabilizer game with special player $(t)$ in
      Fig.~\ref{fig:spsgen}.
    \end{enumerate}
    \caption{Stabilizer game for a stabilizer $\Stab$.}
    \label{fig:sgame}
  \end{framed}
\end{figure}

The non-local value of the game is
\begin{equation*}
  \omega^*_{\text{S}} = \frac{1 + \omega^*_\text{SPS}}{2} =
  \frac{6+\sqrt{2}}{8},
\end{equation*}
which can be achieved by players who share an encoded state of the
stabilizer code and measures $X$, $Z$, $X'$, $Z'$ when receiving
$0,1,2,3$, respectively. This value is easily seen to be optimal as it
saturates the winning probability in both tests of the stabilizer
game.

The stabilizer game has the following rigidity property.

\begin{theorem}
  \label{thm:sgame}
  For any non-trivial $r$-qubit stabilizer with $XZ$-form generators,
  the stabilizer game in Fig.~\ref{fig:sgame} has the following
  rigidity property. For any strategy $\S$ of the game specified by
  Hilbert spaces $\{ \H_i \}_{i=1}^r$, a state
  $\rho \in \Density(\bigotimes_{i=1}^r \H_i)$, and reflections
  $R^{(i)}_w$ on $\H_i$ for $i\in [r]$, $w\in \{0,1,2,3\}$, if the
  value of the strategy is at least $\omega^*_{\text{S}} - \epsilon$,
  then, there are isometries $V_i \in \Lin(\H_i,\B\otimes \hat{\H}_i)$
  for $i\in [r]$, such that the following properties holds
  \begin{itemize}
  \item For all $i\in[r]$, $R^{(i)}_3 = V_i^\dagger (Z'\otimes I) V_i$
    and
    \begin{subequations}
      \begin{align}
        d_\rho \bigl( R^{(i)}_2, V_i^\dagger (X'\otimes I) V_i \bigr)
        & \le O(\epsilon^{1/2}),\\
        d_\rho \bigl( R^{(i)}_1, V_i^\dagger (Z\otimes I) V_i \bigr)
        & \le O(\epsilon^{1/4}),\label{eq:sgame2}\\
        d_\rho \bigl(R^{(i)}_0, V_i^\dagger (X\otimes I) V_i \bigr)
        & \le O(\epsilon^{1/4}),\label{eq:sgame3}
      \end{align}
    \end{subequations}
    where $X', Z'$ are defined in Eq.~\eqref{eq:XZ'}.
  \item Let $\Pi$ be the projection to the code space of the
    stabilizer code of $\;\Stab$, and let $V$ be the isometry
    $\bigotimes_{i=1}^r V_i$, then
    \begin{equation*}
      \ip{\Pi\otimes I}{V\rho V^\dagger} \ge 1-O(\epsilon^{1/4}),
    \end{equation*}
    where $\Pi$ acts on the $r$ qubits, each of which is the first
    qubit of each player after the application of isometry $V$.
  \end{itemize}
\end{theorem}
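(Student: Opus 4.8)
The plan is to leverage the partial rigidity already established for the special-player stabilizer game (Theorem~\ref{thm:sps}) simultaneously for every choice of special player, and then to glue these local statements together using the consistency measure $C_\rho$. First I would observe that if the overall strategy wins with value at least $\omega^*_{\text{S}} - \epsilon$, then because $\omega^*_{\text{S}}$ is a balanced average of the two sub-tests, each sub-test is passed with value at least its own optimum minus $O(\epsilon)$. In particular, for every $t \in [r]$ the special-player game with player $(t)$ as the special player is won with value at least $\omega^*_{\text{SPS}} - O(\epsilon)$. Applying Theorem~\ref{thm:sps} to each $t$ yields, for every player $(i)$, an isometry $V_i \in \Lin(\H_i, \B \otimes \hat\H_i)$ with $R^{(i)}_3 = V_i^\dagger(Z' \otimes I)V_i$ exactly and $d_\rho\bigl(R^{(i)}_2, V_i^\dagger(X' \otimes I)V_i\bigr) \le O(\sqrt{\epsilon})$. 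This already establishes the two $O(\epsilon^{1/2})$-bounds on the primed reflections, once I rename the role of player $t$ and use the translation- (or relabeling-) invariance of the construction so that one fixed isometry $V_i$ serves for all the tests in which $(i)$ appears.

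Next I would upgrade the \emph{unprimed} operators $R^{(i)}_0, R^{(i)}_1$ to the claimed $O(\epsilon^{1/4})$-approximations of $V_i^\dagger(X\otimes I)V_i$ and $V_i^\dagger(Z\otimes I)V_i$. The key tool here is the consistency lemma (Lemma~\ref{lem:cons}): the first sub-test enforces that, whenever two players both answer questions $0$ or $1$ on a shared generator, their reflections are highly consistent with the stabilizer relations. Concretely, each generator relation $g_j$ forces the tensor product of the corresponding player reflections to stabilize $\rho$ up to error $O(\epsilon)$, i.e.\ $\tr_\rho\bigl(\bigotimes_i R^{(i)}_{w_{j,i}}\bigr) \approx_\epsilon (-1)^{s_j}$. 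Because the stabilizer is non-trivial, each unprimed reflection $R^{(i)}_0$ (resp.\ $R^{(i)}_1$) appears in some generator $g^{(i)}_X$ (resp.\ $g^{(i)}_Z$) whose remaining tensor factors are already pinned down (via the first test and the already-established closeness of the other players' operators to honest Paulis). Feeding this consistency into Lemma~\ref{lem:cons}, with $N$ the product measurement of the other players and $M_0, M_1$ being $R^{(i)}_0$ and the pulled-back honest $X$, gives $d_\rho\bigl(R^{(i)}_0, V_i^\dagger(X\otimes I)V_i\bigr) \le O(\epsilon^{1/4})$, and symmetrically for $Z$. The exponent $1/4$ rather than $1/2$ arises precisely because the consistency input is itself only guaranteed to within $O(\sqrt{\epsilon})$ once the primed-operator isometries have been substituted in, and Lemma~\ref{lem:cons} converts an $O(\sqrt{\epsilon})$-consistency into an $O(\epsilon^{1/4})$ distance.

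Finally, for the statement that the encoded state lies almost entirely in the code space, I would combine the approximate stabilizer relations $\tr_\rho\bigl(g_j(\S)\bigr) \approx 1$ with the isometries $V = \bigotimes_i V_i$. After pulling back through $V$, each generator $g_j$ becomes (approximately) the honest Pauli operator acting on the first qubit of each player, and $\tr_{V\rho V^\dagger}(g_j \otimes I) \ge 1 - O(\epsilon^{1/4})$. Since the code-space projection $\Pi$ can be written as the product $\prod_j \tfrac{1}{2}(I + g_j)$ over an independent generating set, a union-bound-style estimate over the finitely many ($O(1)$) generators yields $\ip{\Pi \otimes I}{V\rho V^\dagger} \ge 1 - O(\epsilon^{1/4})$, as desired. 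I expect the main obstacle to be the bookkeeping in the gluing step: I must ensure that a \emph{single} isometry $V_i$ simultaneously witnesses both the primed-operator rigidity (coming from the special-player game in which $(i)$ is special) and the unprimed-operator closeness (coming from consistency with other players whose operators have \emph{already} been fixed by their own special-player games). The care required is that the isometry extracted from Theorem~\ref{thm:sps} for player $(i)$ is canonically determined by $R^{(i)}_3 = V_i^\dagger(Z'\otimes I)V_i$, which is an exact identity, so $V_i$ does not drift between tests; the consistency arguments then only constrain the \emph{other} operators relative to this fixed $V_i$, and the errors compose additively via the triangle inequality for $d_\rho$.
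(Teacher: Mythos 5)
Your proposal follows essentially the same route as the paper: Theorem~\ref{thm:sps} applied with each player in turn as the special player gives the isometries $V_i$ and the primed-operator rigidity; Lemma~\ref{lem:cons}, with the product of the other players' reflections appearing in a generator $g^{(i)}_X$ serving as the common reference $N$ for both $R^{(i)}_0$ and the pulled-back honest $X=(X'+Z')/\sqrt{2}$, yields the $\epsilon^{1/4}$ bounds; and the approximate stabilization $\tr_{\tilde\rho}(g_j)\approx 1$ gives the code-space overlap (the paper uses the spectral bound $\sum_j g_j \le (l-2)I+2\Pi$ where you use an equivalent union bound over the $O(1)$ generators). One caution: your parenthetical that the remaining tensor factors of $g^{(i)}_X$ are \emph{already pinned down} via the other players' closeness to honest Paulis would be circular and is not needed --- the paper never pins them down, but only shows that both $R^{(i)}_0$ (via the first test) and $V_i^\dagger(X\otimes I)V_i$ (via the $\Delta_1$ bound of the special-player game after substituting the honest primed operators) are consistent with the same, otherwise arbitrary, reflection on the other players' joint system.
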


\begin{proof}
  By the symmetry of the game, it suffices to prove the statement for
  one of the players, say, the player $(r)$. For simplicity, use $R_w$
  to represent the reflection $R^{(r)}_w$ of player $(r)$. It is easy
  to see that strategy $\S$ wins the special-player stabilizer game of
  special player $(r)$ with probability
  $\omega^*_{\text{SPS}} - O(\epsilon)$. By Theorem~\ref{thm:sps},
  there exists a isometry $V$ such that
  $R_3 = V^\dagger (Z\otimes I) V$ and
  \begin{equation*}
    d_\rho \bigl(R_2, V^\dagger (X\otimes I) V \bigr) \le
    O(\epsilon^{1/2}).
  \end{equation*}
  Taking $V_r = (W\otimes I) V$, we get the first two conditions in
  the first item of the theorem, where $W$ is the reflection defined
  in Eq.~\eqref{eq:W}.

  Next, we prove the claim in Eqs.~\eqref{eq:sgame2}
  and~\eqref{eq:sgame3}. For any $XZ$-form Pauli operator $g$, define
  $g(\S)$ to be the operator obtained by replacing $X$ with
  $R^{(i)}_0$ and $Z$ with $R^{(i)}_1$. Let $R$ be the product of the
  first $r-1$ tensor factors of $g^{(r)}_X(\S)$. Here, $g^{(r)}_X$ is
  the chosen generator that has $X$ on the $r$-th qubit in the
  stabilizer game with special player $(r)$. As the strategy $\S$ has
  value at least $\omega^*_{\text{S}} - \epsilon$, it has value at
  least $1-O(\epsilon)$ for the first test of the stabilizer game, and
  therefore,
  \begin{equation}
    \label{eq:consa}
    \tr_\rho (R\otimes R_0) = \tr_\rho \bigl( g^{(r)}_X(\S) \bigr) \ge
    1 - O(\epsilon).
  \end{equation}
  In other words, the reflection $R_0$ is $O(\epsilon)$-consistent
  with $R$ on $\rho$. We emphasize that the reflection $R$ acts on the
  joint system of the first $r-1$ players. This does not cause any
  problem as it is only used for our proof and is never actually
  measured on the joint system.

  Consider a new strategy $\hat{\S}$ modified from strategy $\S$ by
  changing $R_2$ in the strategy $\S$ to
  $V_r^\dagger (X'\otimes I) V_r$. This new strategy has value at
  least $\omega^*_{\text{S}} - O(\sqrt{\epsilon})$ by
  Lemma~\ref{lem:drho}. Consider the matrix $\Delta_1$ for strategy
  $\hat{\S}$, as in Eq.~\eqref{eq:Delta},
  \begin{equation*}
    \begin{split}
      \Delta_1(\hat{\S}) & = R \otimes I - I \otimes \Bigl[
      V_r^\dagger \bigl( \frac{X'+Z'}{\sqrt{2}} \otimes I \bigr) V_r
      \Bigr],\\
      & = R \otimes I - I \otimes V_r^\dagger (X\otimes I) V_r.
    \end{split}
  \end{equation*}
  By a similar argument that gives Eq.~\eqref{eq:D_bound} and the fact
  that $\hat{\S}$ has value at least
  $\omega^*_{\text{SPS}} - O(\sqrt{\epsilon})$ in the second part of
  the game, we have
  \begin{equation*}
    \tr_\rho (\Delta^2_1(\hat{\S})) \le O(\sqrt{\epsilon}).
  \end{equation*}
  This gives
  \begin{equation}
    \label{eq:consb}
    \tr_\rho (R\otimes V_r^\dagger (X\otimes I) V_r) \ge
    1-O(\sqrt{\epsilon}),
  \end{equation}
  which proves the $O(\sqrt{\epsilon})$-consistency of
  $ V_r^\dagger (X\otimes I) V_r$ and $R$ on $\rho$.

  The Eqs.~\eqref{eq:consa} and~\eqref{eq:consb} and
  Lemma~\ref{lem:cons} imply that
  \begin{equation*}
    d_\rho \bigl(R_0, V_r^\dagger (X\otimes I) V_r \bigr) \le
    O(\epsilon^{1/4}).
  \end{equation*}
  This completes the proof for Eq.~\eqref{eq:sgame3}. A similar
  argument establishes Eq.~\eqref{eq:sgame2}.

  Finally, to prove the second item of the theorem, consider a
  strategy $\tilde{\S}$ that uses the same state $\rho$ and
  reflections
  \begin{alignat*}{2}
    R^{(i)}_0 & = V_i^\dagger (X\otimes I) V_i, \qquad & R^{(i)}_2 & =
    V_i^\dagger(X'\otimes I) V_i,\\
    R^{(i)}_1 & = V_i^\dagger (Z\otimes I) V_i, \qquad & R^{(i)}_3 & =
    V_i^\dagger(Z'\otimes I) V_i.
  \end{alignat*}
  By Lemma~\ref{lem:drho} and the first part of the theorem, strategy
  $\tilde{\S}$ has value at least $\omega^*_S - O(\epsilon^{1/4})$.
  Hence, it has acceptance probability at least $1-O(\epsilon^{1/4})$
  in the first test of the stabilizer game. This means that
  \begin{equation*}
    \frac{1+\E_j \tr_{\tilde{\rho}} (g_j) }{2} = 1 - O(\epsilon^{1/4}),
  \end{equation*}
  where $j$ is uniformly random over $[l]$, $g_j$'s are generators of
  the stabilizer, and $\tilde{\rho} = V^\dagger \rho V$. This is
  equivalent to
  \begin{equation}
    \label{eq:sgame4}
    \tr_{\tilde{\rho}} \bigl( \sum_{j=1}^l g_j \bigr) = l - O(\epsilon^{1/4}).
  \end{equation}
  Operator $\sum_{j=1}^l g_j$ has eigenvalues in
  $\{-l, -l+2, \ldots, l-2, l\}$ and $\Pi$ projects to the eigenspace
  of eigenvalue $l$. Hence
  \begin{equation*}
    \sum_{j=1}^l g_j \le l\Pi + (l-2)(I-\Pi) = (l-2)I + 2\Pi.
  \end{equation*}
  This, together with Eq.~\eqref{eq:sgame4}, implies that
  \begin{equation*}
    \tr_{\tilde{\rho}} (\Pi) = 1 - O(\epsilon^{1/4}),
  \end{equation*}
  which is equivalent to the second part of the theorem.
\end{proof}

\subsection{Multi-Qubit Stabilizer Game}

In this section, we consider a multi-qubit variant of the stabilizer
game called the $(k,n)$-stabilizer game. It is a non-local game
implementation of the stabilizer check of the Fitzsimons-Vidick
protocol~\cite{FV15}. Instead of asking for the qubits and performing
encoding check on them, the verifier sends the measurement
instructions on the corresponding qubits to the players. The optimal
strategy of the game is to encode each qubit with the stabilizer code
and measure honestly the $X$, $Z$, $X'$ and $Z'$ on the encoded data
on corresponding qubits. We prove a partial rigidity theorem for the
multi-qubit stabilizer game, which suffices for our purpose. In
particular, we only prove the rigidity for the reflections
corresponding to questions in $0,1$. The full rigidity properties can
be proved with some little extra effort.

The $(k,n)$-stabilizer game is given in Fig.~\ref{fig:mqsgame}. For
simplicity, we assume in the multi-qubit stabilizer game that, when
the question is $*$, the verifier replaces it with either $0$ or $1$
and ignore the corresponding answer. With this convention, each player
will either see a question of the form $(u,w)$ for $u\in [n]$ and
$w=0,1,2,3$ or a tuple of $k$ such questions. Answers are either a
single bit or a string of $k$-bits correspondingly. The verifier
accepts or rejects depending on the parity of some of the answer bits.

\begin{figure}[!htb]
  \begin{framed}
    \ul{Multi-Qubit Stabilizer Game}\\[1em]
    Let $\Stab$ be a non-trivial $r$-qubit stabilizer with a set of
    generators of $XZ$-form. Let $[n]$ be the index of $n$ qubits and
    let $k\ge 2$ be a constant. The $(k,n)$-stabilizer game for
    $\Stab$ is an $r$-player non-local game where the verifier does
    the following with equal probability:
    \begin{enumerate}\setlength{\itemsep}{0pt}
    \item Select a subset $J \subset [n]$ of size $k$, an index
      $u\in J$, and a player $t\in [r]$, all uniformly at random. For
      each qubit $v\in J$, randomly select questions
      $w_v = (w_{v,i})$, where $w_{v,i} \in [0, 1, 2, 3]$ and each
      $w_v$ is sampled as in the stabilizer game. Define
      $q^{(i)}_v = (v,w_{v,i})$ for $i\in [r]$ and $v\in J$. Send
      $q^{(i)}_u$ to player $(i)$ and receive an answer bit $a^{(i)}$
      if $i\ne t$. Send $\vq = \bigl( q^{(t)}_v \bigr)_{v\in J}$ to
      player $(t)$, and receive a $k$-bit string $b=(b_v)_{v\in J}$.
      Define $a^{(t)} = b_u$ and
      $a=(a^{(1)}, a^{(2)}, \ldots, a^{(r)})$. The verifier accepts if
      and only if the verifier for the stabilizer game accepts when
      the questions are $w_u$ and answers are $a$.
    \item Select a qubit $u\in [n]$ uniformly at random. Play the
      stabilizer game on qubit $u$. That is, the verifier sample
      $w = (w_i)$ as in the stabilizer game. Define $q^{(i)}=(u,w_i)$
      for $i\in [r]$. Send $q^{(i)}$ to player $(i)$ and receive an
      answer bit $a^{(i)}$. The verifier accepts if the verifier for
      the stabilizer game accepts on questions $w$ and answers
      $a=(a^{(i)})$.
    \end{enumerate}
    \caption{Multi-qubit stabilizer game.}
    \label{fig:mqsgame}
  \end{framed}
\end{figure}

It is easy to see that the non-local value $\omega^*_{\text{MQS}}$ of
the $(k,n)$-stabilizer game in Fig.~\ref{fig:mqsgame} equals to the
value of the stabilizer game $\omega^*_{\text{S}}$. Let $\H_i$ be the
state space of player $(i)$. A strategy for the $k$-qubit stabilizer
game,
\begin{equation*}
  \S = \bigl(\rho, \bigl\{ R^{(i)}_q \bigr\}, \bigl\{
  M^{(i)}_\vq \bigr\} \bigr),
\end{equation*}
consists of a state
$\rho \in \Density\bigl( \bigotimes_{i=1}^r \H_i \bigr)$, reflections
$R^{(i)}_q$ the players measure for question $q$ and measurements
$M^{(i)}_\vq$ with $k$-bit outcomes for question $\vq$. The
superscripts of the measurements indexing the players are sometimes
omitted if there will be no ambiguity.

Without loss of generality, it is assumed that the measurements
$M_\vq$ are projective measurements. For each $q$ that occurs as the
$i$-th entry in the tuple $\vq$, define a reflection
\begin{equation*}
  S_{q|\vq} = \sum_{b\in \{0,1\}^k} (-1)^{b_i} M^b_\vq.
\end{equation*}
For $\vq=(q_1,q_2,\ldots,q_k)$, the measurement $M_\vq$ has a
one-to-one correspondence with the collection of $k$ pairwise
commuting reflections
\begin{equation*}
  \bigl\{ S_{q_s|\vq} \bigr\}_{s=1}^k,
\end{equation*}
and we refer to these reflections as the reflections associated with
the projective measurement $M_\vq$. We also write $S_{q|\vq}$ as
$S_{u,w|\vq}$ for $q=(u,w)$. On the other hand, for any collection of
$k$ pairwise commuting reflections, there associates a projective
measurement with $k$-bit outcome as the repeated application of the
two-outcome measurements defined by the reflections.

We prove the following partial rigidity property of the
$(k,n)$-stabilizer game.

\begin{theorem}
  \label{thm:mqsgame}
  For any constant integer $k \ge 2$, there exists a constant
  $\kappa>0$ that depends only on $k$ such that the $(k,n)$-stabilizer
  game in Fig.~\ref{fig:mqsgame} has the following rigidity property.
  For any quantum strategy
  $\S = \bigl(\rho, \bigl\{ R^{(i)}_q \bigr\}, \bigl\{ M^{(i)}_\vq
  \bigr\} \bigr)$
  that has value at least $\omega^*_{\text{S}} - \epsilon$, there are
  isometries $V_i \in \Lin(\H_i,\B^{\otimes n} \otimes \hat{\H}_i)$,
  such that the following properties hold
  \begin{itemize}
  \item For all $i\in[r]$, all $q=(u,w)$, $q_s = (u_s,w_s)$ with
    $u, u_s\in [n], w, w_s \in\{0,1\}, s\in[k]$, and
    $\vq=(q_1,q_2,\ldots,q_k)$,
    \begin{subequations}
      \begin{align}
        d_\rho \bigl( R^{(i)}_q, V_i^\dagger D_q V_i \bigr)
        & \le O(n^\kappa \epsilon^{1/\kappa}),\label{eq:mqs1}\\
        d_\rho \bigl( M^{(i)}_\vq, N^{(i)}_\vq \bigr)
        & \le O(n^\kappa \epsilon^{1/\kappa}),\label{eq:mqs2}
      \end{align}
    \end{subequations}
    where $D_q$ is the $X$, $Z$ operator on the $u$-th qubit for
    $w=0,1$ respectively, and $N^{(i)}_{\vq}$ is the projective
    measurement with $k$-bit outcome associated with the reflections
    $\bigl\{ V_i^\dagger D_{q_s} V_i \bigr\}_{s=1}^k$.
  \item Let $\Pi$ be the projection to the code space of the
    stabilizer code, $V$ be the isometry $\bigotimes_{i=1}^r V_i$,
    then
    \begin{equation}
      \label{eq:mqs3}
      \ip{\Pi^{\otimes n}\otimes I}{V\rho V^\dagger} \ge
      1-O(n^\kappa \epsilon^{1/\kappa}),
    \end{equation}
    where the $t$-th tensor factor of $\; \Pi^{\otimes n}$ acts on $r$
    qubits, each of which is the $t$-th qubit of each player's system
    after the application of $V$.
  \end{itemize}
\end{theorem}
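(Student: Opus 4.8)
The plan is to exploit the two-test structure of the game. Since $\omega^*_{\text{MQS}} = \omega^*_{\text{S}}$ and each test can be won with probability at most $\omega^*_{\text{S}}$ (the second test \emph{is} the single-qubit stabilizer game, and the first is a relabeling of it), a strategy of overall value $\omega^*_{\text{S}}-\epsilon$ must score at least $\omega^*_{\text{S}}-2\epsilon$ on each test separately. The second test plays the stabilizer game on a uniformly random qubit $u\in[n]$, so $\frac1n\sum_u(\text{value on qubit }u)\ge \omega^*_{\text{S}}-2\epsilon$; as each summand is at most $\omega^*_{\text{S}}$, every individual qubit $u$ has stabilizer-game value at least $\omega^*_{\text{S}}-2n\epsilon$. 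Applying Theorem~\ref{thm:sgame} qubit by qubit then produces, for each player $i$ and each $u$, a \emph{local} isometry $V_{i,u}\in\Lin(\H_i,\B\otimes\hat{\H}_{i,u})$ under which $R^{(i)}_{(u,0)}$ and $R^{(i)}_{(u,1)}$ are $O((n\epsilon)^{1/4})$-close to the logical $X$ and $Z$ on qubit $u$, together with the per-qubit code-space bound. The central difficulty is that these $n$ local isometries act on the same space $\H_i$ in a priori incompatible ways; the heart of the proof is to splice them into one tensor-product isometry $V_i$.

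First I would use the first test to tie the special player's multi-qubit measurement to the single-qubit reflections. Conditioned on $(J,u,t)$, the special player $(t)$ answers the whole tuple $\vq$ and its $u$-component is governed by $S_{u,w|\vq}$, while the stabilizer game on qubit $u$ is checked against the single-qubit answers of the other players. Thus winning the first test on qubit $u$ is exactly winning the single-qubit stabilizer game with player $(t)$ substituting $S_{u,w|\vq}$ for its honest $R^{(t)}_{(u,w)}$. Both reflections are therefore consistent (in the sense of Eq.~\eqref{eq:cons}) with the same joint ``rest-of-players'' reflection on qubit $u$, so Lemma~\ref{lem:cons} yields
\begin{equation*}
  d_\rho\bigl(S_{u,w|\vq},\,R^{(t)}_{(u,w)}\bigr)\le O\bigl(n^{k/2}\epsilon^{1/2}\bigr),
\end{equation*}
where the $n^{k/2}$ reflects that the defect $2\epsilon$ of the first test is spread over the $\Theta(n^k)$ tuples, so a fixed $\vq$ carries per-instance defect $O(n^k\epsilon)$.

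The key structural consequence is approximate commutativity \emph{across} qubits. For a fixed tuple the reflections $\{S_{q_s|\vq}\}_{s=1}^k$ are mutually commuting, being associated with the single projective measurement $M_\vq$. Transporting this exact commutativity through the distance estimate above, and using $\norm{[R,S]}_\rho^2$ as the commutativity measure set up in Section~\ref{sec:dist}, I obtain a bound of the form $\norm{[R^{(t)}_{q},R^{(t)}_{q'}]}_\rho^2\le O(n^{O(1)}\epsilon^{1/O(1)})$ whenever $q,q'$ live on distinct qubits that co-occur in some size-$k$ set $J$; since $k\ge 2$, every pair of qubits co-occurs in some $J$, so this holds for all pairs.

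Finally I would splice the per-qubit isometries into the global $V_i=\bigotimes_u V_{i,u}$ by peeling off one qubit at a time: at each step the approximate commutativity guarantees that the reflections on the not-yet-processed qubits are almost unaffected by the isometry already applied, so their local Pauli structure survives, and propagating the errors through the $n$ peeling steps gives the stated $O(n^\kappa\epsilon^{1/\kappa})$ bound, establishing Eq.~\eqref{eq:mqs1}. Equation~\eqref{eq:mqs2} then follows because $M^{(i)}_\vq$ is determined by its associated commuting reflections $S_{q_s|\vq}$, each now close to a genuine Pauli, so they assemble into a product measurement near $N^{(i)}_\vq$ (via Lemma~\ref{lem:drho}). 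The code-space bound Eq.~\eqref{eq:mqs3} is obtained exactly as in the second item of Theorem~\ref{thm:sgame}: under $V$ the honest reflections stabilize each logical qubit, so the first test forces $\tr_{V\rho V^\dagger}(\sum_j g_j)$ near its maximum on every qubit, placing the state near the joint code space $\Pi^{\otimes n}$. The main obstacle is precisely this splicing step and its error analysis---turning pairwise approximate commutativity into a bona fide tensor-product isometry while keeping the accumulated error polynomial in $n$ and only a fixed root of $\epsilon$---which is what forces $\kappa$ to depend on $k$ and demands the careful averaging that distributes the global defect $\epsilon$ across the $\Theta(n^k)$ question tuples.
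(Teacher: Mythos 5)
Your skeleton matches the paper's: restrict to each qubit $u$ to get per-qubit stabilizer-game value $\omega^*_{\text{S}}-O(n\epsilon)$ and apply Theorem~\ref{thm:sgame} to get local isometries; use the first test to show $S_{u,w|\vq}$ and the single-qubit reflections share a common consistent measurement on the other players' systems; transport the exact commutativity of $\{S_{q_s|\vq}\}$ to approximate commutativity across qubits (this is exactly the paper's Lemma~\ref{lem:mqscomm}). Up to that point you are on the paper's track, with essentially the same error bookkeeping ($\delta=n\epsilon$, $\delta_k=n^k\epsilon$).

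The gap is the step you yourself flag as ``the heart of the proof.'' Writing $V_i=\bigotimes_u V_{i,u}$ is not a construction: each $V_{i,u}$ is an isometry with domain $\H_i$, so their tensor product is a map on $\H_i^{\otimes n}$, not on $\H_i$, and naive sequential composition also fails because $V_{i,2}$ is defined on $\H_i$, not on the output space $\B\otimes\hat{\H}_{i,1}$ of $V_{i,1}$. ``Peeling off one qubit at a time'' with approximate commutativity does not by itself produce a well-defined isometry in $\Lin(\H_i,\B^{\otimes n}\otimes\hat{\H}_i)$, which is the very object the theorem asserts to exist. The paper resolves this with a specific device: for each $u$ it introduces a fresh EPR pair and sets $C_u=(I\otimes V_u^\dagger)\,\SWAP_u\,(\ket{\Phi}_{u,u'}\otimes V_u)$, which maps $\H$ into $\B_u\otimes\B_{u'}\otimes\H$ --- crucially retaining a full copy of $\H$ in the output, so that $V=C_nC_{n-1}\cdots C_1$ is a legitimate composition. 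The error analysis then rests on Lemma~\ref{lem:C}, which shows that conjugation by $C_u$ acts as a Pauli twirl $\E_s (V_u^\dagger\sigma_s V_u)(\cdot)(V_u^\dagger\sigma_s V_u)$; these twirls are stripped off one at a time using the cross-qubit approximate commutativity together with Lemma~\ref{lem:appstab} (which needs the consistency of $\hat{R}_{u,w}$ with a reflection on the other players). Without this (or an equivalent) explicit construction and the twirl identity, Eq.~\eqref{eq:mqs1} is not established, and Eqs.~\eqref{eq:mqs2} and~\eqref{eq:mqs3} (which you correctly reduce to Lemma~\ref{lem:repeat} and the argument of Theorem~\ref{thm:sgame}) have nothing to hang on.
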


The proof of Theorem~\ref{thm:mqsgame} relies on the following lemmas.

\begin{lemma}
  \label{lem:mqscomm}
  Let $\rho\in \Density(\H_A\otimes \H_B)$ be a state on systems $A$
  and $B$. Let $M_0, M_1, N_0, N_1$ be four projective measurements on
  $\H_A$ such that $M_1^a,N_1^a$ commute for all $a$. Let $M, N$ be
  two projective measurements on $\H_B$. Suppose that $M_0, M_1$ are
  both $\epsilon$-consistent with $M$, and $N_0, N_1$ are both
  $\epsilon$-consistent with $N$ on state $\rho$. Then
  \begin{equation*}
    \sum_a \norm{[M_0^a,N_0^a]}_\rho^2 \le O(\sqrt{\epsilon}).
  \end{equation*}
\end{lemma}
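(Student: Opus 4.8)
The plan is to reduce the claim to showing that $M_0^a N_0^a$ and $N_0^a M_0^a$ act almost identically on $\rho$, and to bound their difference by a telescoping chain of substitutions each of which only ever replaces the \emph{rightmost} factor of a product. The reason for insisting on this is the only submultiplicativity available for the seminorm, namely $\norm{XY}_\rho \le \norm{X}\,\norm{Y}_\rho$ with $\norm{X}$ the operator norm; replacing a \emph{left} factor is not controlled by $\norm{\cdot}_\rho$. This is why the naive expansion $[M_0^a,N_0^a] = [M_1^a,\delta N^a] + [\delta M^a, N_1^a] + [\delta M^a,\delta N^a]$, with $\delta M^a = M_0^a - M_1^a$ and $\delta N^a = N_0^a - N_1^a$, fails: terms such as $\delta N^a M_1^a$ carry the small factor on the left. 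Getting around this is the main obstacle, and it is exactly where the hypotheses are spent.

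Three ingredients feed the chain. First, Lemma~\ref{lem:cons} applied to the two pairs gives $d_\rho(M_0,M_1) \le O(\sqrt\epsilon)$ and $d_\rho(N_0,N_1)\le O(\sqrt\epsilon)$, i.e. $\sum_a \norm{\delta M^a}_\rho^2$ and $\sum_a\norm{\delta N^a}_\rho^2$ are $O(\epsilon)$. Second, each consistency hypothesis, for instance $C_\rho(N_0,N)\ge 1-\epsilon$, is equivalent to the transfer estimate $\sum_a \norm{N_0^a\otimes I - I\otimes N^a}_\rho^2 = 2\bigl(1 - C_\rho(N_0,N)\bigr) \le 2\epsilon$, and likewise for the pairs $(N_1,N)$, $(M_0,M)$, $(M_1,M)$; this lets me move a measurement from system $A$ to system $B$ at $\rho$-seminorm cost $O(\sqrt\epsilon)$. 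Third, operators on $B$ commute with operators on $A$, and by hypothesis $M_1^a$ and $N_1^a$ commute on $A$.

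With these I would run two parallel chains, writing $\hat M^a = I\otimes M^a$ and $\hat N^a = I \otimes N^a$. For the first, $M_0^a N_0^a \to M_0^a\hat N^a = \hat N^a M_0^a \to \hat N^a M_1^a = M_1^a \hat N^a \to M_1^a N_1^a$, each arrow replacing the rightmost factor (first $N_0^a \to \hat N^a$ by transfer, then $M_0^a\to M_1^a$ by $d_\rho$-closeness, then $\hat N^a \to N_1^a\otimes I$ by transfer), the equalities being the $A$--$B$ commutations. For the second, $N_0^a M_0^a \to N_0^a \hat M^a = \hat M^a N_0^a \to \hat M^a N_1^a = N_1^a \hat M^a \to N_1^a M_1^a$, again replacing only rightmost factors. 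Since $\norm{M_i^a},\norm{\hat N^a}\le 1$, the telescoped error of each chain is a sum of three terms bounded in $\norm{\cdot}_\rho$ by $\norm{\delta M^a}_\rho$, $\norm{N_0^a\otimes I - \hat N^a}_\rho$, and $\norm{\hat N^a - N_1^a\otimes I}_\rho$ (and their $M$-analogues). The commutativity $M_1^a N_1^a = N_1^a M_1^a$ makes the two chains land on the same operator, so $[M_0^a,N_0^a]$ equals the difference of the two error sums.

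To finish, I take $\norm{\cdot}_\rho$, apply the triangle inequality, sum over $a$, and use $(x_1+\cdots+x_6)^2 \le 6(x_1^2+\cdots+x_6^2)$ to pass to sums of squared seminorms; by the three ingredients every contribution is $O(\epsilon)$, so in fact $\sum_a\norm{[M_0^a,N_0^a]}_\rho^2 \le O(\epsilon) \le O(\sqrt\epsilon)$, which implies the claim. I expect the only delicate bookkeeping to be maintaining the rightmost-factor rule throughout, so that no substitution ever incurs an uncontrolled left multiplication by a small operator.
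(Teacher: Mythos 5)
Your proof is correct, and it takes a genuinely different route from the paper's. The paper expands $\sum_a\norm{[M_0^a,N_0^a]}_\rho^2$ into four sums of traces of products of projections, such as $\sum_a\tr_\rho\bigl(M_0^aN_0^aM_0^aN_0^a\bigr)$, and then repeatedly applies a cyclic ``move-and-substitute'' step: using the consistency with $M$ (resp.\ $N$) on system $B$ together with the Cauchy--Schwarz inequality, the leftmost operator of each product is carried to the right end and simultaneously changed from $M_0^a$ to $M_1^a$ (resp.\ $N_0^a$ to $N_1^a$); after all four moves every term becomes $\sum_a\tr_\rho\bigl(M_1^aN_1^aM_1^aN_1^a\bigr)$ up to $O(\sqrt\epsilon)$, and the commutativity and idempotence of $M_1^a,N_1^a$ make the four terms cancel. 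Each Cauchy--Schwarz insertion costs $O(\sqrt\epsilon)$, which is exactly where the paper's exponent comes from. You instead work entirely at the level of the seminorm $\norm{\cdot}_\rho$, telescoping $M_0^aN_0^a$ and $N_0^aM_0^a$ to the common endpoint $M_1^aN_1^a$ by substitutions of rightmost factors only, controlled by $\norm{XY}_\rho\le\norm{X}\,\norm{Y}_\rho$, and routing operators through system $B$ via the identity $\sum_a\norm{N_0^a\otimes I-I\otimes N^a}_\rho^2=2\bigl(1-C_\rho(N_0,N)\bigr)\le 2\epsilon$ (valid since all measurements here are projective). Your bookkeeping is sound: every substitution really is of a rightmost factor, the two chains land on the same operator precisely because $[M_1^a,N_1^a]=0$, and summing the six squared error terms over $a$ gives $O(\epsilon)$. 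This buys a quantitatively stronger conclusion, $\sum_a\norm{[M_0^a,N_0^a]}_\rho^2\le O(\epsilon)$ rather than $O(\sqrt\epsilon)$, which would propagate to slightly better exponents in Eq.~\eqref{eq:Rcomm} and hence in Theorem~\ref{thm:mqsgame}; the paper's trace-based manipulation is somewhat more robust when one wants to handle non-projective measurements, but that generality is not needed for this lemma.
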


\begin{proof}
  First prove that
  \begin{equation*}
    \sum_a \tr_\rho \bigl( M_0^a N_0^a M_0^a N_0^a \bigr)
    \approx_{\sqrt{\epsilon}} \sum_a \tr_\rho \bigl( N_0^a M_0^a N_0^a
    M_1^a \bigr).
  \end{equation*}
  Namely, we can move operator $M_0^a$ in the front to the end of the
  product and change it to $M_1^a$ without incurring too much error in
  the expression.

  By $\epsilon$-consistency between $M_0$ and $M$,
  \begin{equation*}
    \sum_a \tr_\rho \bigl( M_0^a N_0^a M_0^a N_0^a \bigr)
    \approx_{\sqrt{\epsilon}} \sum_a \tr_\rho \bigl( M_0^a N_0^a M_0^a
    N_0^a \otimes M^a \bigr),
  \end{equation*}
  as the absolute value of the difference on the two sides is
  \begin{equation*}
    \begin{split}
      & \abs{\sum_a \tr_\rho \bigl( M_0^a N_0^a M_0^a N_0^a \otimes
        (1-M^a) \bigr)}\\
      \le & \sqrt{\sum_a \tr_\rho \bigl( M_0^a \otimes (1-M^a)\bigr)
        \sum_a \tr_\rho \bigl( N_0^a M_0^a N_0^a M_0^a N_0^a \bigr)}\\
      \le & \sqrt{\epsilon}.
    \end{split}
  \end{equation*}
  By similar arguments,
  \begin{equation*}
    \begin{split}
      \sum_a \tr_\rho \bigl( M_0^a N_0^a M_0^a N_0^a \otimes M^a
      \bigr) & \approx_{\sqrt{\epsilon}}
      \sum_a \tr_\rho \bigl( N_0^a M_0^a N_0^a \otimes M^a \bigr)\\
      & \approx_{\sqrt{\epsilon}} \sum_a \tr_\rho \bigl( N_0^a M_0^a
      N_0^a M_1^a \otimes M^a
      \bigr)\\
      & \approx_{\sqrt{\epsilon}} \sum_a \tr_\rho \bigl( N_0^a M_0^a
      N_0^a M_1^a \bigr).
    \end{split}
  \end{equation*}
  This proves our claim about moving and changing $M_0^a$ to $M_1^a$.
  Do this for the four operators in the product sequentially,
  \begin{equation*}
    \sum_a \tr_\rho \bigl( M_0^a N_0^a M_0^a N_0^a \bigr)
    \approx_{\sqrt{\epsilon}} \sum_a \tr_\rho \bigl( M_1^a N_1^a M_1^a
    N_1^a \bigr).
  \end{equation*}

  Expand the $\rho$-norm in left hand side of the claim in the lemma,
  \begin{equation*}
    \sum_a \norm{[M_0^a,N_0^a]}_\rho^2 = \sum_a \tr_\rho \bigl[M_0^a
    N_0^a M_0^a + N_0^a M_0^a N_0^a - M_0^a N_0^a M_0^a N_0^a  - N_0^a
    M_0^a N_0^a M_0^a \bigr].
  \end{equation*}
  Now the proof follows by applying the above operator moving
  procedure to each of the four terms in the expansion and the
  condition that $M_1^a$ commutes with $N_1^a$ for all $a$.
\end{proof}

\begin{lemma}
  \label{lem:C}
  Let $\B_1$, $\B_{1'}$, $\B$ be two-dimensional Hilbert spaces. Let
  $V\in \Lin(\H, \B \otimes \hat{\H})$ be an isometry,
  $R \in \Lin(\H)$ be an operator and $\ket{\Phi}$ be the EPR state on
  $\B_1\otimes \B_{1'}$. Define isometry
  $C\in \Lin(\H, \B_1\otimes \B_{1'}\otimes \H)$ as
  \begin{equation*}
    C = (I\otimes V^\dagger) \SWAP (\ket{\Phi} \otimes V),
  \end{equation*}
  where the $\SWAP$ acts on $\B_1$ and $\B$. Then
  \begin{equation*}
    C^\dagger R C = \E_{j=0,1,2,3} \bigl( V^\dagger \sigma_j V \bigr) R
    \bigl( V^\dagger \sigma_j V \bigr),
  \end{equation*}
  where $\sigma_j$ are Pauli operators.
\end{lemma}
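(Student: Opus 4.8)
The plan is to compute the isometry $C$ in closed form and then recognize the right-hand side as the resulting Pauli twirl. Throughout I read the operator $R$ in $C^\dagger R C$ as acting on the last tensor factor $\H$ of the codomain $\B_1\otimes\B_{1'}\otimes\H$ of $C$, i.e.\ as $I_{\B_1\otimes\B_{1'}}\otimes R$.

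First I would expand the swap gate in the Pauli basis. On two qubits one has
\[
  \SWAP = \frac{1}{2}\sum_{j=0}^{3}\sigma_j\otimes\sigma_j,
\]
which is verified directly on the computational basis, and here the two factors are $\B_1$ and $\B$ (with identities on $\B_{1'}$ and $\hat{\H}$). Substituting this into $C = (I\otimes V^\dagger)\SWAP(\ket{\Phi}\otimes V)$ and letting $V^\dagger$ absorb the $\B$-factor Pauli, I obtain
\[
  C = \frac{1}{2}\sum_{j=0}^{3}\ket{\phi_j}\otimes\bigl(V^\dagger\sigma_j V\bigr),
  \qquad \ket{\phi_j} := (\sigma_j\otimes I)\ket{\Phi},
\]
viewed as a map $\H\to\B_1\otimes\B_{1'}\otimes\H$.

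The key structural observation is that the four vectors $\ket{\phi_j}$ are, up to phases, the Bell basis of $\B_1\otimes\B_{1'}$ and hence orthonormal, $\ip{\phi_k}{\phi_j}=\delta_{jk}$; this is the step that collapses the double sum. Using in addition that each $\sigma_j$ is Hermitian, so that $V^\dagger\sigma_j V$ is Hermitian and $C^\dagger = \frac{1}{2}\sum_k \bra{\phi_k}\otimes (V^\dagger\sigma_k V)$, a direct expansion gives
\[
  C^\dagger (I\otimes R) C
  = \frac{1}{4}\sum_{j,k}\ip{\phi_k}{\phi_j}\,
    \bigl(V^\dagger\sigma_k V\bigr) R \bigl(V^\dagger\sigma_j V\bigr)
  = \frac{1}{4}\sum_{j}\bigl(V^\dagger\sigma_j V\bigr) R \bigl(V^\dagger\sigma_j V\bigr),
\]
which is exactly $\E_{j=0,1,2,3}(V^\dagger\sigma_j V) R (V^\dagger\sigma_j V)$.

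There is no serious analytic obstacle here; the statement is a finite-dimensional computation. The only points requiring care are the bookkeeping of tensor factors---ensuring that the $\SWAP$ acts on the pair $(\B_1,\B)$ and that $V^\dagger$ contracts precisely the $\B\otimes\hat{\H}$ factor produced by $V$---together with the two elementary facts I would verify in passing, namely the Pauli expansion of $\SWAP$ and the orthonormality of the twirled Bell vectors $\ket{\phi_j}$.
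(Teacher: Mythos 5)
Your proposal is correct and follows essentially the same route as the paper, which also expands $\SWAP$ in the Pauli basis as $\frac{1}{2}\sum_j\sigma_j\otimes\sigma_j$ and finishes by direct calculation; your use of the orthonormality of the Bell vectors $(\sigma_j\otimes I)\ket{\Phi}$ is just a clean way of organizing that calculation and collapsing the double sum. Your reading of $R$ as $I_{\B_1\otimes\B_{1'}}\otimes R$ is also the intended one, matching how the lemma is applied in the proof of Theorem~\ref{thm:mqsgame}.
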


\begin{proof}
  This follows by substituting the two $\SWAP$ gates using the
  identity
  \begin{equation*}
    \SWAP = \frac{I+XX+YY+ZZ}{2},
  \end{equation*}
  and a direct calculation.
\end{proof}

\begin{lemma}
  \label{lem:appstab}
  Let $\rho \in \Density(\H\otimes \H')$ be a state, $T\in \Lin(\H)$
  be an operator with constant operator norm, $R$ be a reflection on
  $\H$ that has an $\epsilon$-consistent reflection $S$ on $\H'$. Then
  \begin{equation*}
    \tr_\rho(RTR) \approx_{\sqrt{\epsilon}} \tr_\rho(T).
  \end{equation*}
\end{lemma}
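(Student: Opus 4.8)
The plan is to use the $\epsilon$-consistency to show that the reflection $R$ on $\H$ and the reflection $S$ on $\H'$ are interchangeable on $\rho$, and then to exploit that $S$, acting on the other tensor factor, commutes with $T$. Throughout I write $R$ for $R\otimes I$ and $S$ for $I\otimes S$ as operators on $\H\otimes\H'$; these two reflections commute and each squares to $I$. The first step is to convert the hypothesis into a $\rho$-norm estimate. By Eq.~\eqref{eq:consref}, $C_\rho(R,S) = (1+\tr_\rho(R\otimes S))/2 \ge 1-\epsilon$, so $\tr_\rho(R\otimes S)\ge 1-2\epsilon$, and since $R$ and $S$ commute,
\begin{equation*}
  \norm{R-S}_\rho^2 = \tr_\rho\bigl((R-S)^2\bigr) = 2 - 2\tr_\rho(R\otimes S) \le 4\epsilon,
\end{equation*}
which gives $\norm{R-S}_\rho \le 2\sqrt{\epsilon}$.

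Next I would replace the two copies of $R$ in $RTR$ by $S$ one at a time, absorbing the errors with the Cauchy--Schwarz inequality for $\ip{\cdot}{\cdot}_\rho$. For the rightmost factor,
\begin{equation*}
  \abs{\tr_\rho(RTR) - \tr_\rho(RTS)} = \abs{\ip{T^\dagger R}{R-S}_\rho} \le \norm{T^\dagger R}_\rho \, \norm{R-S}_\rho,
\end{equation*}
and since $\norm{T^\dagger R}_\rho^2 = \tr_\rho(R T T^\dagger R) \le \norm{T}^2$, this difference is $O(\sqrt{\epsilon})$. The same estimate applied to the leftmost factor gives $\abs{\tr_\rho(RTS) - \tr_\rho(STS)} \le \norm{R-S}_\rho\,\norm{TS}_\rho = O(\sqrt{\epsilon})$, using $\norm{TS}_\rho^2 = \tr_\rho(S T^\dagger T S) \le \norm{T}^2$. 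Finally, because $S = I\otimes S$ commutes with $T = T\otimes I$ and $S^2 = I$, we have $STS = TS^2 = T$, hence $\tr_\rho(STS) = \tr_\rho(T)$. Chaining the two approximations with the triangle inequality yields $\tr_\rho(RTR)\approx_{\sqrt{\epsilon}}\tr_\rho(T)$, as claimed.

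The step requiring the most care is this replacement argument, because $T$ is only assumed to have constant operator norm and is in general neither Hermitian nor a reflection. This is precisely where the operator-norm hypothesis enters: it bounds $\norm{T^\dagger R}_\rho$ and $\norm{TS}_\rho$ uniformly, keeping both replacement errors at $O(\norm{T}\sqrt{\epsilon}) = O(\sqrt{\epsilon})$. One should also note that identifying $R$ with $R\otimes I$ and $S$ with $I\otimes S$ makes the two reflections commute, which is what turns the first display into an exact identity and lets the final conjugation $STS = T$ hold without any error.
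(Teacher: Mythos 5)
Your proof is correct and follows essentially the same route as the paper's: both replace the two copies of $R$ by $S$ (acting on the other tensor factor) one at a time via consistency and Cauchy--Schwarz with the operator-norm bound on $T$, and then use $S^2=I$ to conclude. The only difference is packaging --- you convert the consistency hypothesis into the seminorm bound $\norm{R\otimes I - I\otimes S}_\rho \le 2\sqrt{\epsilon}$ up front and swap the reflections directly, whereas the paper performs the same swap at the level of the projectors $R^a$, $S^a$ and combines the two one-sided estimates by substituting $T\mapsto TR$.
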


\begin{proof}
  We first prove that
  \begin{equation}
    \label{eq:appstab1}
    \tr_\rho(RT) \approx_{\sqrt{\epsilon}} \tr_\rho(T \otimes S).
  \end{equation}

  By consistency of $R$ and $S$, we have
  \begin{equation*}
    \begin{split}
      \tr_\rho(RT) & = \sum_{a\in \{0,1\}} \tr_\rho(R^a (-1)^a T)\\
      & \approx_{\sqrt{\epsilon}} \sum_{a\in \{0,1\}} \tr_\rho(R^a
      (-1)^a T \otimes S^a)\\
      & \approx_{\sqrt{\epsilon}} \sum_{a\in \{0,1\}} \tr_\rho((-1)^a
      T \otimes S^a)\\
      & = \tr_\rho (T \otimes S).
    \end{split}
  \end{equation*}

  Similarly,
  \begin{equation}
    \label{eq:appstab2}
    \tr_\rho(TR) \approx_{\sqrt{\epsilon}} \tr_\rho(T \otimes S).
  \end{equation}

  Taking $T=TR$ in Eqs.~\eqref{eq:appstab1} and~\eqref{eq:appstab2},
  we have
  $\tr_\rho(RTR) \approx_{\sqrt{\epsilon}} \tr_\rho(TR \otimes S)$,
  and
  \begin{equation*}
    \tr_\rho(T) = \tr_\rho(TRR) \approx_{\sqrt{\epsilon}} \tr_\rho(TR
    \otimes S).
  \end{equation*}
  This completes the proof.
\end{proof}

\begin{lemma}
  \label{lem:repeat}
  Let $\rho \in \Density(\H_A\otimes \H_B)$ be a quantum state of
  systems $A$ and $B$, $M=\{ M^b \}$ and $N=\{ N^b \}$ be two
  projective measurements with $k$-bit outcomes on system $A$. Let
  $\{R_j\}_{j=1}^k$, $\{S_j\}_{j=1}^k$ be the reflections associated
  with the projective measurements $M$ and $N$ respectively. If there
  are reflections $T_j$ such that $R_j$ and $S_j$ are both
  $\epsilon$-consistent with $T_j$ on $\rho$ for all
  $j=1,2,\ldots, k$, then
  \begin{equation*}
    d_\rho(M, N) \le O(\epsilon^{1/4}).
  \end{equation*}
\end{lemma}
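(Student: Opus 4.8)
The plan is to reduce the statement about the two $k$-bit measurements to a family of single-reflection consistency statements, by Fourier-expanding the overlap that defines $d_\rho(M,N)$. Since $M,N$ are projective, the definition of $d_\rho$ gives $d_\rho(M,N)^2 = 2 - 2\Re\sum_b \tr_\rho(M^b N^b)$, and because the associated reflections commute within each family we have $M^b = \prod_{j=1}^k R_j^{b_j}$ and $N^b=\prod_{j=1}^k S_j^{b_j}$ with $R_j^{b_j}=(I+(-1)^{b_j}R_j)/2$. Writing $R_c=\prod_{j\in c}R_j$ and $S_c=\prod_{j\in c}S_j$ for $c\subseteq[k]$ and collecting terms, a Walsh--Fourier computation yields $\sum_b M^bN^b = 2^{-k}\sum_{c\subseteq[k]}R_cS_c$, so that
\[
  d_\rho(M,N)^2 = 2 - 2^{1-k}\sum_{c\subseteq[k]}\Re\tr_\rho(R_cS_c).
\]
It therefore suffices to prove $\Re\tr_\rho(R_cS_c)\ge 1-O(\epsilon)$ for every $c$; since there are $2^k$ subsets and $k$ is constant, averaging then gives $d_\rho(M,N)^2\le O(\epsilon)$.

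Fix a purification $\ket\psi$ of $\rho$ and write $c=\{j_1<\cdots<j_m\}$ with $m\le k$. The key step is to show that $R_c\ket\psi$ and $S_c\ket\psi$ are both close to one and the same vector. Consistency of $R_j$ with the reference reflection $T_j$ on system $B$ gives $\tr_\rho(R_j\otimes T_j)\ge 1-2\epsilon$, hence $\norm{(R_j-T_j)\ket\psi}\le O(\sqrt\epsilon)$, and likewise $\norm{(S_j-T_j)\ket\psi}\le O(\sqrt\epsilon)$. I would now telescope from the right: in $R_{j_1}\cdots R_{j_m}\ket\psi$ replace the innermost factor $R_{j_m}$ by $T_{j_m}$ at cost $O(\sqrt\epsilon)$ (the prefix is a product of reflections, hence norm-nonincreasing), and then, because $T_{j_m}$ acts on $B$ while every $R_i$ acts on $A$, commute $T_{j_m}$ freely to the far left; repeating for $R_{j_{m-1}},\ldots,R_{j_1}$ gives $R_c\ket\psi\approx_{\sqrt\epsilon} T_{j_m}\cdots T_{j_1}\ket\psi$ with total error $m\cdot O(\sqrt\epsilon)=O(\sqrt\epsilon)$. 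Running the identical procedure on $S_c$ produces the same vector $T_{j_m}\cdots T_{j_1}\ket\psi$, so the triangle inequality yields $\norm{(R_c-S_c)\ket\psi}\le O(\sqrt\epsilon)$. As $R_c,S_c$ are reflections, $\norm{(R_c-S_c)\ket\psi}^2 = 2-2\Re\tr_\rho(R_cS_c)$, whence $\Re\tr_\rho(R_cS_c)\ge 1-O(\epsilon)$. Combined with the display above this gives $d_\rho(M,N)\le O(\sqrt\epsilon)\le O(\epsilon^{1/4})$.

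The only real obstacle is that $R_c$ and $S_c$ interleave reflections drawn from two different measurements that need not commute with one another, so one cannot match $M^b$ to $N^b$ factor-by-factor inside the product. Routing each reflection through its common $B$-side reference $T_j$ is exactly what dissolves this difficulty: once an $A$-operator is replaced by the corresponding $B$-operator it commutes with all remaining factors and can be parked at the far end, so the whole product collapses to a fixed reference vector independent of which family we started from. The hypotheses that the $R_j$ mutually commute and that the $S_j$ mutually commute enter precisely to make the partial products $R_c$, $S_c$ well defined (and unitary), while the constancy of $k$ keeps the accumulated error at $O(\sqrt\epsilon)$. I note that this argument in fact produces the slightly stronger bound $O(\sqrt\epsilon)$; the weaker $O(\epsilon^{1/4})$ stated in the lemma is all that is needed downstream, and it can equally be obtained, with more bookkeeping, by a hybrid argument that swaps $R_j$ for $S_j$ one index at a time and controls the resulting pinching errors through the approximate commutativity of Lemma~\ref{lem:mqscomm}.
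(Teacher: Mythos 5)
Your proof is correct, and it takes a genuinely different route from the paper's. The paper works directly with the outcome-indexed projectors, writing $M^bN^b = R_1^{b_1}R_2^{b_2}S_2^{b_2}S_1^{b_1}$ (for $k=2$) and peeling off the outermost pair one index at a time inside $\sum_b\tr_\rho(M^bN^b)$: consistency is used first to insert $T_1^{b_1}$ on system $B$ and then to delete $R_1^{b_1}$ and $S_1^{b_1}$, each step costing $O(\sqrt{\epsilon})$ by Cauchy--Schwarz, until the single-index case remains and Lemma~\ref{lem:cons} finishes; this yields $\sum_b\tr_\rho(M^bN^b)\ge 1-O(\sqrt{\epsilon})$ and hence the stated $O(\epsilon^{1/4})$. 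You instead pass to the parity (Walsh--Fourier) picture, reducing the overlap to the $2^k$ quantities $\Re\tr_\rho(R_cS_c)$, and control each one by a vector-norm telescoping on a purification, routing every $A$-side reflection through its common $B$-side reference $T_j$ and parking it at the far end. Your identity $\sum_b M^bN^b = 2^{-k}\sum_c R_cS_c$ checks out, the prefix operators in your telescoping are indeed unitary so the per-step error really is $\norm{(R_{j_s}-T_{j_s})\ket{\psi}}\le O(\sqrt{\epsilon})$, and since $R_c,S_c$ are reflections the squared vector distance converts back to $2-2\Re\tr_\rho(R_cS_c)$. The payoff is quantitative: because your errors accumulate at the level of vector norms rather than traces, you lose only $O(\epsilon)$ in each overlap instead of $O(\sqrt{\epsilon})$, giving $d_\rho(M,N)\le O(\sqrt{\epsilon})$ rather than $O(\epsilon^{1/4})$, and you bypass Lemma~\ref{lem:cons} entirely. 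The paper's version stays uniformly within its $\approx_\epsilon$ trace-manipulation toolkit, which is why it accepts the weaker exponent; either bound suffices for the application in Theorem~\ref{thm:mqsgame}.
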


\begin{proof}
  We prove the case of $k=2$. The general case follows from a similar
  argument. Using the consistency conditions,
  \begin{equation*}
    \begin{split}
      \sum_{b\in \{0,1\}^2} \tr_\rho(M^bN^b) & = \sum_{b_1,b_2\in
        \{0,1\}}\tr_\rho(R^{b_1}_1
      R^{b_2}_2 S^{b_2}_2 S^{b_1}_1)\\
      & \approx_{\sqrt{\epsilon}} \sum_{b_1,b_2\in
        \{0,1\}}\tr_\rho(R^{b_1}_1
      R^{b_2}_2 S^{b_2}_2 S^{b_1}_1 \otimes T_1^{b_1})\\
      & \approx_{\sqrt{\epsilon}} \sum_{b_1,b_2\in \{0,1\}}\tr_\rho(
      R^{b_2}_2 S^{b_2}_2 \otimes T_1^{b_1})\\
      & = \sum_{b_2\in \{0,1\}}\tr_\rho( R^{b_2}_2 S^{b_2}_2)
      \approx_{\epsilon} 1.
    \end{split}
  \end{equation*}
  The first approximation uses the consistency condition to add the
  operator $T_1^{b_1}$, the second approximation uses the consistency
  condition to remove operators $R^{b_1}_1$ and $S^{b_1}_1$, the last
  approximation is by Lemma~\ref{lem:cons}.
\end{proof}

\begin{proof}[Proof of Theorem~\ref{thm:mqsgame}]

  Consider first the claim in Eq.~\eqref{eq:mqs1} of the theorem for
  $i=r$.

  In the proof, $D_w$ denotes $X$, $Z$ for $w=0, 1$ respectively, and
  $D_q$ denotes $D_w$ acting on the $u$-th qubit if $q=(u,w)$. Let
  $\delta$ be $n\epsilon$ and $\delta_k$ be $n^k\epsilon$. For
  simplicity, we omit the superscript $(r)$ in the reflections for
  player $(r)$.

  Since the strategy $\S$ has value at least
  $\omega^*_{\text{S}} - \epsilon$ for the $(k,n)$-stabilizer game, it
  must have value at least $\omega^*_{\text{S}} - O(\delta)$ for the
  stabilizer games for each $u\in [n]$ in the second part of the game.
  More precisely,
  $\S_u = \bigl( \rho, \bigl\{ R^{(i)}_{u,w} \bigr\} \bigr)$ forms a
  strategy for the stabilizer game with value at least
  $\omega^*_{\text{S}} - O(\delta)$.

  By Theorem~\ref{thm:sgame}, for all $u\in [n]$, there exist
  isometries $V_u \in \Lin(\H_r, \B \otimes \tilde{\H_r})$ such that
  \begin{equation*}
    d_\rho \bigl( R_{u,w}, V_u^\dagger (D_w\otimes I) V_u \bigr) \le
    O(\delta^{1/4}).
  \end{equation*}
  Define $\hat{R}_{u,w} = V_u^\dagger (D_w\otimes I) V_u$. The above
  equation becomes
  \begin{equation}
    \label{eq:R1}
    d_\rho \bigl( R_{u,w}, \hat{R}_{u,w} \bigr) \le O(\delta^{1/4}).
  \end{equation}

  Similarly, for all $J\subseteq [n]$ and $u\in J$, all choices of
  $w_v$ for $v\in J$ and $v\ne u$, consider state $\rho$, reflections
  $R^{(i)}_{u,w}$ for $i\in [r-1]$ and $S_{u,w|\vq}$ for player $(r)$.
  They form a strategy for the stabilizer game with value at least
  $\omega^*_{\text{S}} - O(\delta_k)$ for $q_u=(u,w)$, $q_v=(v,w_v)$
  and $\vq = (q_v)_{v\in J}$. We clarify that only $w$ is the index of
  questions for the stabilizer game in this strategy and everything
  else in the subscripts are fixed.

  By Eqs.~\eqref{eq:consa} and~\eqref{eq:consb}, reflection
  $S_{u,w|\vq}$ and $\hat{R}_{u,w}$ have the same
  $O(\sqrt{\delta_k})$-consistent measurement on $\rho$. Let $(u,w)$
  and $(v,w')$ be two entries of $\vq$, then the reflections
  $\hat{R}_{u,w}$, $S_{u,w|\vq}$, $S_{v,w'|\vq}$, $\hat{R}_{v,w'}$
  corresponds to measurements that satisfy the conditions for $M_0$,
  $M_1$, $N_1$, $N_0$ in Lemma~\ref{lem:mqscomm}. Hence,
  \begin{equation}
    \label{eq:Rcomm}
    \norm{\bigl[ \hat{R}_{u,w}, \hat{R}_{v,w'} \bigr]}_\rho^2 \le
    O(\delta_k^{1/4}),
  \end{equation}
  for all $u\ne v\in [n]$.

  Consider $2n$ two-dimensional Hilbert spaces $\B_u$, $\B_{u'}$ for
  $u\in [n]$. Denote $\H_{[n]} = \bigotimes_{u=1}^n \B_u$ and
  $\H_{[n]'} = \bigotimes_{u=1}^n \B_{u'}$. We sometimes call the
  system of $\B_u$ as the $u$-th qubit. Let $\ket{\Phi}_{u,u'}$ be the
  EPR state on systems $\B_u, \B_{u'}$. For each $u\in [n]$, define
  isometry $C_u \in \Lin(\H, \B_u\otimes \B_{u'} \otimes \H)$ as
  \begin{equation*}
    C_u = (I\otimes V_u^\dagger) \SWAP_u (\ket{\Phi}_{u,u'} \otimes V_u),
  \end{equation*}
  where $\SWAP_u$ is the $\SWAP$ gate acting on the $u$-th qubit and
  the first output qubit of $V_u$.

  Define isometry
  $V\in \Lin \bigl(\H_r, \H_{[n]} \otimes \H_{[n]'} \otimes \H_r
  \bigr)$ as the sequential application of $C_1, C_2, \ldots, C_n$,
  \begin{equation}
    \label{eq:V}
    V = C_n C_{n-1} \cdots C_1.
  \end{equation}
  We claim that this choice of $V$ works for the claims of the theorem
  by taking $\hat{\H}_r$ to be $\H_{[n]'} \otimes \H_r$. Define
  \begin{equation*}
    \tilde{R}_q = V^\dagger D_q V,
  \end{equation*}
  for $q=(u,w)$. The aim is to first prove that
  \begin{equation*}
    d_\rho(\hat{R}_q, \tilde{R}_q) \le O(n^\kappa \epsilon^{1/\kappa}).
  \end{equation*}

  Recall that $\sigma_s$ for $s=0,1,2,3$ are the Pauli operators.
  Define reflection $T_q = V_u^\dagger \sigma_s V_u$ for $q=(u,s)$ and
  the corresponding superoperator $\T_q(\sigma) = T_q \sigma T_q$. It
  is easy to verify that
  \begin{equation}
    \label{eq:T}
    T_{u,0} = I,\quad T_{u,1} = \hat{R}_{u,0},\quad T_{u,2} =
    -i\hat{R}_{u,0}\hat{R}_{u,1},\quad T_{u,3} = \hat{R}_{u,1}.
  \end{equation}

  As $D_q$ and $C_v$ commutes for all $v>u$ and $q=(u,w)$, we have
  \begin{equation*}
    \tilde{R}_q = C_1^\dagger C_2^\dagger \ldots C_{u-1}^\dagger
    \hat{R}_q C_{u-1} C_{u-2} \ldots C_1.
  \end{equation*}

  A series of applications of Lemma~\ref{lem:C} gives the expression
  of $\tilde{R}_q$ for $q=(u,w)$,
  \begin{equation*}
    \tilde{R}_q = \E_{s\in \{0,1,2,3\}^{u-1}} \T_{1,s_1} \circ
    \T_{2,s_2} \circ \cdots \circ \T_{u-1,s_{u-1}} (\hat{R}_q).
  \end{equation*}

  For convenience, define
  \begin{equation*}
    R = \E_{s_2,\ldots,s_{u-1}} \T_{2,s_2} \circ \cdots \circ
    \T_{u-1,s_{u-1}} (\hat{R}_q).
  \end{equation*}
  Then
  \begin{equation*}
    \tilde{R}_q = \E_{s_1 \in \{0,1,2,3\}} \T_{1,s_1} (R),
  \end{equation*}
  and
  \begin{equation*}
    \tr_\rho \bigl( \tilde{R}_q \hat{R}_q \bigr) = \E_{s_1 \in \{0,1,2,3\}}
    \tr_\rho \bigl( \T_{1,s_1} (R) \hat{R}_q \bigr).
  \end{equation*}

  For each of the four cases for $s_1$, it is claimed that
  $\tr_\rho \bigl( \T_{1,s_1} (R) \hat{R}_q \bigr)$ is close to
  $\tr_\rho (R\hat{R}_q)$. In words, removing the superoperator
  $\T_{1,s_1}$ induces a bounded error in the expression.

  Consider the case $s_1 = 1$ first. In this case
  \begin{equation*}
    \begin{split}
      \tr_\rho \bigl( \T_{1,1} (R) \hat{R}_q \bigr) & =
      \tr_\rho \bigl( \hat{R}_{1,0} R \hat{R}_{1,0} \hat{R}_q \bigr)\\
      & \approx_{\delta_k^{1/8}}
      \tr_\rho \bigl( \hat{R}_{1,0}R\hat{R}_q\hat{R}_{1,0} \bigr)\\
      & \approx_{\delta^{1/4}} \tr_\rho (R\hat{R}_q),
    \end{split}
  \end{equation*}
  where the first approximation follows from Eq.~\eqref{eq:Rcomm} and
  Cauchy-Schwarz inequality, the second approximation follows from
  Lemma~\ref{lem:appstab} and the fact that $\hat{R}_{1,0}$ has an
  $O(\delta^{1/2})$-consistency reflection on $\rho$ by
  Eq.~\eqref{eq:consb}.

  A similar argument applies for the other cases of $s_1$. Repeat this
  procedure of removing the superoperators $\T_{j,s_j}$ one by one, we
  have
  \begin{equation*}
    \tr_\rho (\tilde{R}_q,\hat{R}_q) \approx_{u\,\delta_k^{1/8}}
    \tr_\rho (\hat{R}_q,\hat{R}_q) = 1,
  \end{equation*}
  and
  \begin{equation*}
    d_\rho (\tilde{R}_q,\hat{R}_q) \le O\bigl( \sqrt{u
      \,\delta_k^{1/8}} \bigr) \le O(n^{1/2}\delta_k^{1/16}).
  \end{equation*}

  By the triangle inequality of $d_\rho$ and Eq.~\eqref{eq:R1}
  \begin{equation}
    \label{eq:tildeR}
    d_\rho (\tilde{R}_q,R_q) \le O(n^{1/2}\delta_k^{1/16}).
  \end{equation}
  This proves the bound in Eq.~\eqref{eq:mqs1} by choosing $\kappa$
  sufficiently large.

  Recall that there exists a reflection $R$ that is
  $\delta_k$-consistent with both $R_q$ and $S_{q|\vq}$ on $\rho$. By
  the bound in Eq.~\eqref{eq:tildeR} and Lemma~\ref{lem:drho},
  \begin{equation*}
    C_\rho (R,\tilde{R}_q) \ge 1 - O(n^{1/2} \delta_k^{1/16}).
  \end{equation*}
  The bounds in Eq.~\eqref{eq:mqs2} follows from
  Lemma~\ref{lem:repeat} and the consistency of reflections
  $\tilde{R}_q$, $S_{q|\vq}$ with the same reflection $R$ on the first
  $r-1$ players' systems.

  The second part of the theorem follows by a similar argument used to
  prove the second part of Theorem~\ref{thm:sgame}.
\end{proof}

\section{Non-Local Games for Local Hamiltonian Problems}

\label{sec:lh}

In this section, we give the non-local game for the local Hamiltonian
problem. Consider a restricted form of the local Hamiltonian
problem in the following definition.

\begin{definition}
  Consider a $k$-local Hamiltonian of $m$ terms on $n$ qubits
  \begin{equation*}
    H = \sum_{j=1}^m H_j,
  \end{equation*}
  where $0\le H_j\le I$ acts on at most $k$ qubits. The Hamiltonian
  $H$ is $XZ$-form if $H_j$ is a real linear combination of tensor
  products of $I,X,Z$ for each $j$.
\end{definition}

\begin{lemma}
  \label{lem:LH}
  There exist constant $k$ such that the $XZ$-form $k$-local
  Hamiltonian problem is \class{QMA}-complete.
\end{lemma}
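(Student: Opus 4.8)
The plan is to prove this by reduction from a known \class{QMA}-complete local Hamiltonian problem to the $XZ$-form version, following the observation attributed to~\cite{BL08} in the introduction. First I would recall that the $2$-local Hamiltonian problem is \class{QMA}-complete for general Hermitian local terms; our task is to show that restricting the terms to the real linear span of $XZ$-form Pauli operators preserves \class{QMA}-completeness. Containment in \class{QMA} is immediate, since an $XZ$-form Hamiltonian is a special case of a general local Hamiltonian and the standard \class{QMA} verification procedure applies. The substance is \class{QMA}-hardness.

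The key idea is a gadget that eliminates Pauli $Y$ operators from the Hamiltonian terms. A general $k$-local term $H_j$ expands in the Pauli basis $\bigotimes_i \sigma_{a_i}$ with real coefficients (positive semidefinite Hermitian operators have real expansion coefficients), but some tensor factors may be $Y = \sigma_2$. The plan is to simulate each logical qubit carrying a potential $Y$ dependence by introducing an ancilla qubit and encoding the single logical qubit into two physical qubits so that the logical $Y$ action is realized by a product of $X$- and $Z$-type operators across the pair. Concretely, one can use the fact that $Y = iXZ$, so that on two qubits an operator of the form $X\otimes Z$ can stand in for a logical $Y$ once the encoding pins the ancilla into a fixed state or correlates the two qubits through a penalty term. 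The standard construction adds a penalty Hamiltonian $H_{\text{pen}}$, built from $XZ$-form terms, whose ground space is exactly the code subspace, and with a large enough coefficient so that low-energy states are forced into the code space. Within the code space, the restriction of the original $H$ becomes an $XZ$-form operator acting on the enlarged system.

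The steps, in order, are: (1) start from a \class{QMA}-complete $k$-local Hamiltonian instance $(H,a,b)$; (2) for each qubit that participates in a $Y$ operator, introduce an ancilla and fix an $XZ$-form encoding of the logical qubit, rewriting each Pauli string containing $Y$ as an $XZ$-form string on the enlarged register; (3) add an $XZ$-form penalty term with a polynomially large weight that projects onto the code space; (4) verify the promise gap is preserved up to a polynomial factor by a perturbation-theory or projection argument showing that low-energy states of the new Hamiltonian are close to the code space and that the effective Hamiltonian on the code space reproduces $H$ with only an inverse-polynomial shift in energy. The locality remains constant (bounded by $k$ plus the constant overhead of the gadget), and the number of terms and qubits grows only polynomially.

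The main obstacle I expect is the energy-gap analysis in step (4): one must show that adding the penalty term does not collapse the promise gap $b-a \ge 1/\poly(n)$. This requires a careful argument, typically via the projection lemma or first-order perturbation theory for the ground state energy of $H_{\text{pen}} + H$, bounding the leakage of any low-energy witness out of the code space and controlling the error this introduces in the measured energy of the simulated $H$. One must choose the penalty weight large enough to suppress out-of-code contributions yet keep all operator norms polynomially bounded so that the final instance is still a valid local Hamiltonian instance with an inverse-polynomial gap. Citing~\cite{BL08} for the precise gadget and gap bound would let me keep this step at the level of a sketch rather than grinding through the perturbation-theoretic estimates.
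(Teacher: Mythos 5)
Your proposal takes a genuinely different route from the paper, and its central gadget contains a gap that is fatal as stated. The paper does not start from a general \class{QMA}-complete Hamiltonian and then eliminate $Y$ operators; it avoids $Y$ from the outset. It takes the universal gate set $\{\CNOT,\ X,\ W=\cos(\pi/8)X+\sin(\pi/8)Z\}$, whose gates are simultaneously Hermitian and of $XZ$-form, and observes that Kitaev's circuit-to-Hamiltonian construction applied to a verifier circuit over this gate set produces $5$-local terms that are already $XZ$-form: since $U_t=U_t^\dagger$, the propagation term $-U_t\otimes\op{110}{100}-U_t^\dagger\otimes\op{100}{110}$ collapses to $-U_t$ tensored with an $XZ$-form clock operator. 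No encoding, no penalty Hamiltonian, and no perturbation-theoretic gap analysis are needed; the $k=2$ refinement is then quoted from~\cite{CM14}.

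The gap in your approach is step (2). An encoding whose code space is the ground space of an $XZ$-form penalty Hamiltonian cannot realize all three logical Pauli operators by $XZ$-form physical operators, so ``rewriting each Pauli string containing $Y$ as an $XZ$-form string on the enlarged register'' cannot be done term by term. Concretely, a Pauli string is a real matrix if and only if it contains an even number of $Y$ factors, so every $XZ$-form operator is real; hence $H_{\text{pen}}$ is real symmetric, its ground-space projector $\Pi$ is real, and the compression $\Pi M\Pi$ of any $XZ$-form $M$ is real. If real operators realize logical $X$ and logical $Z$ on the code space, then logical $Y=iXZ$ (as a logical identity) is $i$ times a real operator on the code space --- purely imaginary and nonzero --- and therefore cannot equal $\Pi M\Pi$ for any real $M$. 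More generally, complex conjugation preserves the code space and acts on the logical qubit as an orthogonal Bloch-sphere map of determinant $-1$, so at most two of the three logical Pauli directions admit real (hence $XZ$-form) representatives. Since a generic \class{QMA}-hard instance contains terms hitting the same qubit with $X$, $Y$, and $Z$ across different terms, your gadget cannot supply all the required logical operators, and steps (3)--(4) have nothing to act on. The known ways around this are either the paper's gate-set argument, or the heavier Biamonte--Love route of first passing to a real Hamiltonian (so $Y$'s occur only in pairs) and then removing $YY$ couplings with perturbative mediator gadgets --- a construction quite different from the single code-space gadget you describe.
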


\begin{proof}
  This is a simple corollary of the results in~\cite{BL08}. The gate
  set $\{ \CNOT, X, W = \cos(\pi/8)X + \sin(\pi/8)Z \}$ is known to be
  universal by the result of Shi~\cite{Shi02} and each gate $U_t$ in
  the gate set is of $XZ$-form and $U_t = U_t^\dagger$. Start with a
  circuit using this particular set of gates and perform the circuit
  to Hamiltonian construction of Kitaev. The $5$-local terms resulting
  from the construction will have the $XZ$-form. For example, the term
  checking the propagation of the $t$-th step of the circuit is
  \begin{equation*}
    \begin{split}
      H_{\text{prop},t} = & I\otimes \op{100}{100}_{t-1,t,t+1} - U_t
      \otimes
      \op{110}{100}_{t-1,t,t+1} \\
      & - U_t^\dagger \otimes \op{100}{110}_{t-1,t,t+1} + I\otimes
      \op{110}{110}_{t-1,t,t+1}\\
      = & \frac{I-Z_{(t-1)}}{2} \otimes \frac{I+Z_{(t+1)}}{2} - U_t
      \otimes \frac{I-Z_{(t-1)}}{2} \otimes X_{(t)} \otimes
      \frac{1+Z_{(t+1)}}{2},
    \end{split}
  \end{equation*}
  which is easily seen to have $XZ$-form for $U_t$ in the chosen gate
  set. Other terms in the construction can be checked similarly. This
  proves the claim for $k=5$.

  Using more advanced results from~\cite{CM14}, such as their Lemma
  22, one can prove the claim for $k=2$ and the two-local terms $H_j$
  have the form $\alpha_j (XZ-ZX)$.
\end{proof}

Let $H=\sum_{j=1}^m H_j$ be the Hamiltonian and assume that
$0\le H_j \le I$. We will consider two different types of energy
measurements for a $k$-local Hamiltonian $H$ on a state $\rho$. The
first type of measurement is the one used in~\cite{FV15} and does the
following. The verifier randomly selects $j\in[m]$ and gets the
$k$-qubit state $\rho_j$ on which $H_j$ acts. Then the verifier
measures the POVM $\{H_j, I-H_j\}$ and rejects when the measurement
result is '$H_j$'. It is easy to see that the verifier rejects with
probability
\begin{equation*}
  \frac{1}{m}\sum_{j=1}^m\ip{H_j}{\rho} = \frac{1}{m} \ip{H}{\rho}.
\end{equation*}

In the second type of energy measurement, we only measure Pauli
operators. Let $\P_{XZ}$ be the set of the $3^k$ $k$-fold tensor
products of $I,X,Z$ operators. For $XZ$-form Hamiltonians, it suffices
to measure the Pauli operators in $\P_{XZ}$ only. Expand each term
\begin{equation}
  \label{eq:H_expand}
  H_j = \sum_{P\in \P_{XZ}} \alpha_{j,P} P.
\end{equation}
Computing the trace of squared operators on both sides of
Eq.~\eqref{eq:H_expand}, we have
\begin{equation*}
  \sum_{P\in \P_{XZ}} \alpha_{j,P}^2 \le 1.
\end{equation*}
The verifier randomly selects $j\in [m]$ and gets the $k$-qubit state
$\rho_j$ on which $H_j$ acts. He then chooses $P$ uniformly at random
and measures $P$ on $\rho_j$. The verifier rejects with probability
$\abs{\alpha_{j,P}}$ if either $\alpha_{j,P}>0$ and the measurement
result is $+1$, or $\alpha_{j,P}<0$ and the measurement result is
$-1$. The probability of rejection is computed as
\begin{equation*}
  \frac{1}{3^km}\sum_j\sum_P \frac{\abs{\alpha_{j,P}} +
    \alpha_{j,P}\ip{P}{\rho_j}}{2} = \frac{\alpha m +
    \ip{H}{\rho}}{2\cdot 3^km},
\end{equation*}
where
\begin{equation}
  \label{eq:alpha}
  \alpha = \sum_{j,P}\abs{\alpha_{j,P}}/m
\end{equation}
is a constant determined by the Hamiltonian.

We note that the second type of the energy measurement is less
efficient but the probabilities of rejection in these two settings are
linearly related. In fact, it is easy to see that the rejection
probability in the first setting is $p$ if and only if the rejection
probability of the second setting is
\begin{equation*}
  \frac{\alpha+p}{2\cdot 3^k}.
\end{equation*}

We now give the non-local game for the local Hamiltonian problem as in
Fig.~\ref{fig:lhgame}. To measure the energy, we further assume that
the stabilizer $\Stab$ used in the game has the logical $X,Z$
operators $L_X$ and $L_Z$ that are products of $I$, $X$, $Z$. This is
the case for both the five-qubit code and the four-qubit quantum error
detecting code.

\begin{figure}[!htb]
  \begin{framed}
    \ul{Non-Local Game for The Local Hamiltonian Problem}\\[1em]
    Let $\Stab$ be a non-trivial $r$-qubit stabilizer code with
    $XZ$-form generators that encodes at least one qubit and has a
    pair of logical $L_X$, $L_Z$ operators of $XZ$-form. Define two
    question vectors $w_X = (w_{X,i})$ and $w_Z = (w_{Z,i})$ as
    follows. The entry $w_{D,i}$ is $*$, $0$, or $1$, if the $i$-th
    Pauli factor of the logical operator $L_D$ is $I$, $X$, $Z$
    respectively for $D=X,Z$. For an $XZ$-form, $k$-local Hamiltonian
    problem $(H,a,b)$, and a small probability $p$ chosen later, we
    consider the following multi-player non-local game. It involves a
    classical verifier and $r$ players $(i)$ for $i\in[r]$. The
    verifier performs the first test with probability $p$, and the
    second test with probability $1-p$:
    \begin{enumerate}\setlength{\itemsep}{0pt}
    \item {\it Energy Check.} Select $j\in [m]$ uniformly at random.
      Expand $H_j$ as
      \begin{equation*}
        H_j = \sum_{P\in \P_{XZ}} \alpha_{j,P} P,
      \end{equation*}
      and let $J \subset [n]$ be the set of $k$ qubits $H_j$ acts on
      non-trivially. Select an operator $P$ in $\P_{XZ}$ uniformly at
      random. For each $u\in J$, define $w_{u,i}$ to be $w_{X,i}$ or
      $w_{Z,i}$ if the tensor factor in $P$ acting on qubit $u$ is $X$
      or $Z$, respectively. Define $q^{(i)}_u = (u,w_{u,i})$ and
      $\vq = \bigl( q^{(i)}_u \bigr)_{u\in J}$. Send the question
      $\vq$ to the $r$ players. Receive a $k$-bit answer
      $a^{(i)} = \bigl( a^{(i)}_u \bigr)$ from each player. The
      verifier rejects with probability $\abs{\alpha_{j,P}}$ if either
      the parity of the answer bits not corresponding to the null
      question is even and $\alpha_{j,P}>0$, or the parity is odd and
      $\alpha_{j,P}<0$.
    \item {\it Encoding Check.} Play the $(k,n)$-stabilizer game.
    \end{enumerate}
    \caption{The non-local game for local Hamiltonian problems}
    \label{fig:lhgame}
  \end{framed}
\end{figure}

\begin{theorem}
  \label{thm:lhgame}
  There exists constants $C$ and $\kappa$ such that the following
  holds. If $p$ in the $r$-player game for an $XZ$-form, $k$-local
  Hamiltonian problem in Fig.~\ref{fig:lhgame} is chosen to be
  $C(b-a)^\kappa/n^\kappa$, and $\alpha$ is defined as in
  Eq.~\eqref{eq:alpha}, then
  \begin{enumerate}
  \item For yes-instance of the $k$-local Hamiltonian problem, the
    non-local value of the game is at least
    \begin{equation*}
      (1-p)\omega^*_{\text{S}} + p \biggl(1-\frac{\alpha+a}{2\cdot
        3^k}\biggr).
    \end{equation*}
  \item For no-instance of the $k$-local Hamiltonian problem, the
    non-local value of the game is at most
    \begin{equation*}
      (1-p)\omega^*_{\text{S}} + p \biggl(1-\frac{\alpha+(a+b)/2}{2\cdot
        3^k}\biggr).
    \end{equation*}
  \end{enumerate}
\end{theorem}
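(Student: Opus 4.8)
The proof splits along the two items; item~1 is a direct construction and item~2 is where the work lies. For the completeness bound, the plan is to exhibit an honest strategy and evaluate it on both tests. Take a witness $\sigma$ for the yes-instance with $\ip{H}{\sigma}\le am$, encode each of its $n$ logical qubits with the stabilizer code $\Stab$, and distribute the $r$ physical qubits of every block among the $r$ players; each player measures $X,Z,X',Z'$ upon receiving $0,1,2,3$ and reports the outcome. On the encoding check this is exactly the honest strategy for the $(k,n)$-stabilizer game, hence accepted with probability $\omega^*_{\text{S}}$. On the energy check, the question vector $\vq$ attached to a Pauli $P\in\P_{XZ}$ instructs the players, qubit by qubit, to measure the logical operators $L_X$ or $L_Z$; since these are of $XZ$-form and $w_X,w_Z$ were read off from them, the parity of the returned bits equals the eigenvalue of logical $P$ on $\sigma$. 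The rejection probability is therefore the Pauli energy estimate $\frac{\alpha m+\ip{H}{\sigma}}{2\cdot 3^k m}\le\frac{\alpha+a}{2\cdot 3^k}$, so the energy check accepts with probability at least $1-\frac{\alpha+a}{2\cdot 3^k}$; averaging with weights $1-p$ and $p$ gives item~1.

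For soundness, fix an arbitrary strategy and write its value as $p\,v_1+(1-p)\,v_2$, where $v_1,v_2$ are the acceptance probabilities in the energy and encoding checks. The encoding check is precisely the $(k,n)$-stabilizer game, so $v_2\le\omega^*_{\text{S}}$, and I set $\epsilon=\omega^*_{\text{S}}-v_2\ge 0$. The key step is to feed this $\epsilon$ into the rigidity Theorem~\ref{thm:mqsgame}: it supplies local isometries $V_i$ under which every reflection $R^{(i)}_q$ and every $k$-outcome measurement $M^{(i)}_\vq$ used in the energy check is $O(n^\kappa\epsilon^{1/\kappa})$-close in $d_\rho$ to the ideal Pauli measurement, and under which $\ip{\Pi^{\otimes n}\otimes I}{V\rho V^\dagger}\ge 1-O(n^\kappa\epsilon^{1/\kappa})$. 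Crucially, the players use the same measurements in both tests, so these bounds constrain exactly the operators driving the energy check.

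Armed with this, I would replace, one player at a time, the energy-check measurements by their ideal counterparts and the state by its projection onto the code space; by Lemma~\ref{lem:drho} and monotonicity of the trace distance, each replacement perturbs $v_1$ by at most $O(n^\kappa\epsilon^{1/\kappa})$. After all replacements the energy check is literally the honest Pauli measurement of logical $P$ on an encoded state whose logical part is some $\sigma$, with rejection probability $\frac{\alpha m+\ip{H}{\sigma}}{2\cdot 3^k m}$. For a no-instance $\lambda_{\min}\ge bm$, so $\ip{H}{\sigma}\ge bm$ for every $\sigma$, giving $v_1\le 1-\frac{\alpha+b}{2\cdot 3^k}+O(n^\kappa\epsilon^{1/\kappa})$.

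It then remains to balance the parameters. Writing $\gamma=\frac{b-a}{4\cdot 3^k}$ and using $1-\frac{\alpha+b}{2\cdot 3^k}=\bigl(1-\frac{\alpha+(a+b)/2}{2\cdot 3^k}\bigr)-\gamma$, the target upper bound reduces to the single inequality
\begin{equation*}
  p\,C'n^\kappa\epsilon^{1/\kappa}\le (1-p)\,\epsilon+p\,\gamma,
\end{equation*}
required to hold for all $\epsilon\ge 0$. I would dispatch this with a Young-type estimate $\epsilon^{1/\kappa}\le\lambda\epsilon+C_\kappa\lambda^{-1/(\kappa-1)}$: pick $\lambda$ so that $pC'n^\kappa\lambda\le 1-p$ absorbs the linear term, then pick $p=C(b-a)^{\kappa'}/n^{\kappa'}$ with a sufficiently large fresh constant $\kappa'$ and small $C$ so that the residual constant $pC'n^\kappa C_\kappa\lambda^{-1/(\kappa-1)}$ is at most $p\gamma$. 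The hard part is exactly this balancing: the rigidity error $n^\kappa\epsilon^{1/\kappa}$ decays only polynomially, so $p$ must be driven down polynomially in $(b-a)/n$ to buy back the constant-sized gap $\gamma\sim(b-a)$, and one must verify the inequality uniformly over all defects $\epsilon$—in particular at its worst intermediate value where the concave right-hand side is tightest—rather than merely in the two extremes $\epsilon\to 0$ and $\epsilon$ large.
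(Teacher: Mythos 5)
Your proposal is correct and follows essentially the same route as the paper: an honest encoded strategy for completeness, and for soundness the reduction of the energy-check value via the rigidity theorem (Theorem~\ref{thm:mqsgame}) and Lemma~\ref{lem:drho} to the ideal logical Pauli measurement, followed by balancing the rigidity loss $O(n^\kappa\epsilon^{1/\kappa})$ against the encoding-check penalty $(1-p)\epsilon$ over all $\epsilon$. The only cosmetic difference is that you discharge the final inequality with a Young-type estimate while the paper maximizes the expression in $\epsilon$ by calculus; both land on the same choice $p \sim (b-a)^{\kappa-1}/n^{\kappa^2}$, absorbed into the theorem's constant $\kappa$.
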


\begin{proof}
  First consider the completeness of the game. If the local
  Hamiltonian problem is a yes-instance, there exists a quantum
  witness state $\ket{\psi} \in \B^{\otimes n}$ such that
  \begin{equation*}
    \bra{\psi} H \ket{\psi} \le am.
  \end{equation*}
  We construct the strategy for the $r$ players as follows. For each
  qubit $u$ of $\ket{\psi}$, we encode it with the stabilizer code
  $\Stab$ and let player $(i)$ hold the $i$-th encoded qubit of $u$.
  When receiving the question $(u,w)$ from the verifier, the players
  measure their share of qubit $u$ with $X$, $Z$, $X'$, $Z'$
  correspondingly if $w=0,1,2,3$.

  For this strategy, the players can win the {\it Encoding Check\/}
  part with optimal probability $\omega^*_{\text{S}}$. In the {\it
    Energy Check\/} part of the game, the measurement of the logical
  $X$ and logical $Z$ is essentially an implementation of the second
  type of energy measurement on the state $\psi$. The rejection
  probability in this part is
  \begin{equation*}
    \frac{\alpha m + \ip{H}{\op{\psi}{\psi}}}{2\cdot 3^km}.
  \end{equation*}
  Therefore, the acceptance probability of the game $\omega^*$ is at
  least
  \begin{equation*}
    (1-p)\omega^*_{\text{S}} + p \biggl(1-\frac{\alpha+a}{2\cdot
      3^k}\biggr).
  \end{equation*}

  Next, we prove the soundness of the game. If the local Hamiltonian
  problem defined by $(H,a,b)$ is a no-instance, we need to prove an
  upper bound of the non-local value of the game.

  Consider any strategy $\S$ that has acceptance probability
  $\omega^*_{\text{S}} - \epsilon$ in the {\it Encoding Check\/} part
  of the game. Theorem~\ref{thm:mqsgame} states that there are
  isometries $V_i \in \Lin(\H_i, \B^{\otimes n} \otimes \hat{\H}_i)$,
  measurements $N_\vq$ associated with
  $\bigl\{ V_i^\dagger D_{q_s} V_i \bigr\}_{s=1}^k$, such that
  \begin{equation*}
    d_\rho (M_\vq , N_\vq) \le O(n^\kappa \epsilon^{1/\kappa}).
  \end{equation*}

  Consider the strategy $\tilde{\S} = (\rho, \{R_q\},\{N_\vq\})$. The
  value of $\S$ and $\tilde{\S}$ for the {\it Energy Check\/} differ
  at most by $O(n^\kappa \epsilon^{1/\kappa})$ by
  Lemma~\ref{lem:drho}.

  Strategy $\tilde{\S}$ uses honest $X$, $Z$ measurement on the
  logical space of the error correcting code and we claim that it must
  have value at most
  \begin{equation*}
    1 - \frac{\alpha+b}{2\cdot 3^k},
  \end{equation*}
  in the {\it Energy Check\/} part.
  Otherwise, the state of the first $rn$ qubits after the application of
  $\bigotimes_i V_i$ has rejection probability at most
  $(\alpha+b)/(2\cdot 3^k)$ in the energy measurement using logical
  $X$, $Z$ operators $L_X$, $L_Z$. This implies the existence of
  $n$-qubit state that has rejection probability at most
  $(\alpha+b)/(2\cdot 3^k)$ in the second type energy measurement,
  which is a contradiction to the no-instance condition of the local
  Hamiltonian problem.

  Therefore, the value of strategy $\S$ is at most
  \begin{equation}
    \label{eq:lh1}
    (1-p) (\omega^*_{\text{S}} - \epsilon) + p \bigl(
    1-\frac{\alpha+b}{2 \cdot 3^k} + cn^\kappa \epsilon^{1/\kappa}
    \bigr),
  \end{equation}
  for constants $c, \kappa$ large enough.

  Maximizing the expression as a function of $\epsilon$, it is easy to
  see that the maximum value is achieved at
  \begin{equation*}
    \epsilon = \Bigl( \frac{pcn^\kappa}{(1-p)\kappa}
    \Bigr)^{\kappa/(\kappa-1)}.
  \end{equation*}
  Substituting this into Eq.~\eqref{eq:lh1}, $\omega^*(\S)$ is upper
  bounded by
  \begin{equation*}
    (1-p)\omega^*_{\text{S}} + p \bigl( 1-\frac{\alpha+b}{2\cdot 3^k} +
    \Delta \bigr),
  \end{equation*}
  for
  \begin{equation*}
    \Delta = \bigl( 1-\frac{1}{\kappa} \bigr) cn^\kappa
    \Bigl(
    \frac{pcn^\kappa}{(1-p)\kappa}
    \Bigr)^{1/(\kappa-1)}.
  \end{equation*}
  Choosing $\kappa$ large and $p$ small such that $(1-p)\kappa \ge 1$,
  we can bound $\Delta$ as
  \begin{equation*}
    \Delta \le cn^\kappa (pcn^\kappa)^{1/(\kappa-1)} = (pc^\kappa
    n^{\kappa^2})^{1/(\kappa-1)}.
  \end{equation*}
  Finally, if we choose a constant $C$ small enough and
  \begin{equation*}
    p = C (b-a)^{\kappa-1} / n^{\kappa^2},
  \end{equation*}
  we have
  \begin{equation*}
    \Delta \le \frac{b-a}{4 \cdot 3^k},
  \end{equation*}
  and
  \begin{equation*}
    \omega^*(\S) \le (1-p)\omega^*_{\text{S}} +
    p(1-\frac{\alpha+(a+b)/2}{2\cdot 3^k}).
  \end{equation*}
  This concludes the proof of the theorem by choosing $\kappa$ large
  enough.
\end{proof}

Finally, Theorem~\ref{thm:main} follows by using the stabilizer for
the four-qubit error detecting code in the game and noticing that the
completeness and soundness has inverse polynomial gap in
Theorem~\ref{thm:lhgame}.

\section*{Acknowledgments}

The author acknowledges helpful discussions with Richard Cleve, Debbie
Leung, Fang Song, Thomas Vidick, Guoming Wang, John Watrous, Xiaodi Wu
and Bei Zeng on related problems. This work is supported by NSERC and
ARO.

\printbibliography

%% Start-Of-Trailer

\end{document}